\newtheorem{theorem}{Theorem}[section]
\newtheorem{proposition}{Proposition}[section]
\newtheorem{lemma}{Lemma}[section]
\newtheorem{remark}{Remark}[section]
\newtheorem{assumption}{Assumption}[section]
\newtheorem*{assumption*}{Assumption}
\newtheorem{definition}{Definition}[section]
\newcommand{\R}{\ensuremath{\mathbb{R}}}
\newcommand{\Nn}{\ensuremath{\mathbb{N}}}
\newcommand{\E}{\ensuremath{\mathds{E}}}
\newcommand{\dx}{\ensuremath{\mathrm{d}}}
\newcommand{\Var}{\ensuremath{\mathrm{Var}}}
\newcommand{\one}{\ensuremath{\mathds{1}}}
\begin{document}

\title{\bf Tensor PCA for Factor Models\footnote{We would like to thank Alex Belloni, Peter Bossaerts, George Kapetanios, Oliver Linton, Alexei Onatski, Mirco Rubin, Giovanni Urga, as well as seminar participants at the Bayes Business School, University of Cambridge, and Triangle Econometrics Conference for helpful comments.}}
\author{
	Andrii Babii\footnote{Department of Economics, University of North Carolina at Chapel Hill - Gardner Hall, CB 3305 Chapel Hill, NC
		27599-3305. Email: \href{mailto:babii.andrii@gmail.com}{babii.andrii@gmail.com}.} \\
	\textit{\normalsize UNC Chapel Hill}  \and Eric Ghysels\footnote{Department of Economics, University of North Carolina at Chapel Hill - Gardner Hall, CB 3305 Chapel Hill, NC
		27599-3305 and Department of Finance, Kenan-Flagler Business School, Email: \href{mailto:eghysels@gmail.com}{eghysels@gmail.com}.} \\
	\textit{\normalsize UNC Chapel Hill}  \and Junsu Pan\footnote{Department of Economics, University of North Carolina at Chapel Hill - Gardner Hall, CB 3305 Chapel Hill, NC
		27599-3305. Email: \href{mailto:junsupan@live.unc.edu}{junsupan@live.unc.edu}.} \\
	\textit{\normalsize UNC Chapel Hill} 
}
\maketitle

\begin{abstract}
Modern empirical analysis often relies on high-dimensional panel datasets with non-negligible cross-sectional and time-series correlations. Factor models are natural for capturing such dependencies. A tensor factor model describes the $d$-dimensional panel as a sum of a reduced rank component and an idiosyncratic noise, generalizing traditional factor models for two-dimensional panels. We consider a tensor factor model corresponding to the notion of a reduced multilinear rank of a tensor. We show that for a strong factor model, a simple tensor principal component analysis algorithm is optimal for estimating factors and loadings. When the factors are weak, the convergence rate of simple TPCA can be improved with alternating least-squares iterations. We also provide inferential results for factors and loadings and propose the first test to select the number of factors. The new tools are applied to the problem of imputing missing values in a multidimensional panel of firm characteristics.
\end{abstract}

\begin{keywords}
Multidimensional panel data, tensors, cross-sectional dependence, dynamic networks, spatial data, factor models, principal component analysis, asset pricing, imputation, firm characteristics.
\end{keywords}

\thispagestyle{empty}

\setcounter{page}{0}
\newpage

\section{Introduction}
Modern empirical analysis often relies on high-dimensional panel datasets with non-negligible dependencies across one or several dimensions. Various economic reasons exist for such dependencies, including common shocks in macroeconomics and finance, interactions in a network, or spatial distribution of economic activity. Since their original introduction in the psychology literature by \cite{spearmen1904general}, factor models have become a ubiquitous tool in economics for modeling dependencies in two-dimensional panel datasets. 

\smallskip

Among the many applications of factor analysis in macroeconomics and finance, we may cite asset pricing, see \cite{ross1976arbitrage} and \cite{chamberlain1982arbitrage}; business cycle analysis \cite{sargent1977business}; industrial production analysis, see \cite{foerster2011sectoral} and \cite{andreou2019inference}; forecasting with big data; see \cite{stock2002forecasting}. Factor models are also widely used for causal inference in panel data; see \cite{pesaran2006estimation}, \cite{bai2009panel}, \cite{abadie2010synthetic}, \cite{gobillon2016regional}, \cite{athey2021matrix}, or more generally to model unobserved heterogeneity in microeconometrics; see \cite{cunha2010estimating} and \cite{bonhomme2010generalized}. Lastly, structural economic models often naturally lead to a factor structure as seen in \cite{liu2024dynamic} for the input-output production networks or \cite{lewbel1991rank} for the demand systems.

\smallskip

Traditional factor models are designed for two-dimensional panel datasets consisting of cross-sectional units tracked over time, where both dimensions can potentially be large; see \cite{stock2002} and \cite{bai2003inferential}.\footnote{We will use the term `traditional' for $2$-dimensional factor models applicable to 2-dimensional panel datasets.} However, many economic data feature more than two dimensions. For example, the input-output production or trade network usually evolves over time, which is not captured by the traditional factor model.\footnote{See, e.g.\ \cite{graham2020network} for a recent review of static networks.} A panel of macroeconomic indicators often involves regional aggregation of state- or county-level observations, introducing a geographical dimension in addition to the traditional cross-section and time series. Asset pricing models for the cross-section of equities typically involve characteristic-based portfolio sorts, introducing a third dimension. The sorting into deciles is common, but only the lowest and highest deciles are used and combined in a high minus low return spread. Going beyond national borders introduces an international dimension to macro and financial datasets, resulting in a four-dimensional data structure. Each of these examples illustrates that to obtain matrix representations of the data, we often aggregate the multidimensional panel datasets, suppressing more granular information.

\smallskip

Principal component analysis (PCA) is a commonly used method for identifying and estimating traditional factor models for two-dimensional panel data sets; see \cite{pearson1901liii}. PCA extracts latent factors and their loadings using either the singular value decomposition (SVD) of the original panel dataset collected in a matrix or, equivalently, the eigendecomposition of the associated sample covariance matrices; see \cite{jolliffe2002principal} for a review of PCA and factor analysis.

\smallskip

A $d$-way tensor is a $d$-dimensional array generalizing vectors and matrices introduced in \cite{ricci1900methodes}.\footnote{Tensor tools have already been used by economists, e.g., for the identification of finite mixture models; see \cite{bonhomme2016estimating}.} In this paper, we consider an extension of traditional factor models to multidimensional datasets, called tensor factor models. Similarly to their $2$-way counterpart, the $d$-way tensor factor model can be used to identify the latent factors driving correlations in tensor datasets. In traditional factor models, correlations are captured with a matrix of factors in the time dimension and a matrix of factor loadings in the cross-sectional dimension. Likewise, one can decompose a tensor into a collection of matrices with respect to each dimension. Hence, the matrix component in the `time' dimension correspond to factors that vary over time, and the matrix component in the other dimensions correspond to loadings determining the heterogeneous exposure of each dimension to the latent factors. 

\smallskip

Traditional factor models can also be viewed as decomposing a matrix representing a panel dataset as a sum of a low-rank matrix product of loading and factor matrices and a matrix of idiosyncratic shocks. The low-rank component is then approximated using the truncated SVD decomposition of the observed data. The $d$-way factor model also describes a $d$-way tensor dataset as a sum of a low-rank component and an idiosyncratic shocks tensor. However, there several different ways to describe the rank of a tensor. The two most widely known definitions of tensor ranks are called the Canonical Polyadic (CP) rank and the \textit{multilinear rank}; see \cite{carroll1970analysis}, \cite{harshman1970foundations}, and \cite{tucker1966some}.

\smallskip

The multilinear rank of a $d$-dimensional tensor is described by the $d$-tuple $(R_1,R_2\dots,R_d)$ corresponding to ranks of each of its mode-$j$ matricizations and is more general than than the CP rank.\footnote{The concepts of multilinear rank and CP rank for tensors are implicitly used in \cite{hitchcock1927expression}. In fact, \cite{hitchcock1927expression} considered even more general concept of a tensor rank based on arbitrary matricizations, not just along one of its $d$ modes.}  It leads to the so-called Tucker model that has been previously considered in statistics literature; see \cite{han2020tensor}, \cite{wang2022high}, \cite{chen2022factor}, \cite{han2022rank}, and \cite{zhang2018tensor} among others. Our paper introduces the factor dynamics differently in contrast to these papers. More importantly, we show that under the strong factor model assumption, commonly used in economics and finance, the simple PCA-type estimators for tensors have optimal convergence rates and derive the corresponding asymptotic distributions.

\smallskip

The simple tensor PCA (TPCA) estimators are easy to compute and consists of reshaping (or matricizing) a $d$-way tensor into $d$ matrices along each of its $d$ dimensions and applying the standard PCA these matrices. For this simple TPCA estimators, we (a) demonstrate that they can identify and consistently estimate the factors and loadings; (b) describe the associated convergence rates and  asymptotic distribution; (c) show that tensor dimensions can improve the estimation accuracy for factors/loadings and the estimator is rate-optimal for strong factor models; and (d) develop a formal test for the number of factors in the tensor factor model. For the moderately weak factor model, the TPCA procedure is not optimal and we also consider additional results for the optimal iterative alternating least-squares (ALS) algorithm, where the TPCA estimators are used as a starting value. Our theoretical results rely on the powerful perturbation theory results for singular subspaces and singular values which are not commonly used in econometrics, cf. \cite{bai2023approximate} and references therein.

\smallskip

Monte Carlo simulations support our asymptotic results in finite samples. We find that the $d$-way tensor factor model reduces dimensions more efficiently than the naively pooled $2$-way factor model. We also verify the convergence rates and the distribution theory.

\smallskip

In the empirical application we study the issue of missing firm characteristics in widely used data sources which are the cornerstone of research in finance.
\cite{bryzgalova2022missing} provide a comprehensive analysis of missing data in firm characteristics, and propose a statistical model for imputing missing values and investigate the impact of missingness on asset returns. Firm characteristic data is a 3-dimension tensor, with (a) firm, (b) characteristic and (c) time as the three dimensions.
The statistical methods proposed by \cite{bryzgalova2022missing} do not exploit the tensor data structure, while in this empirical application we do and we show that using our proposed tensor data models improve on their time series of cross-sectional models.

\smallskip

The paper is organized as follows. Section \ref{sec:tensorfac} introduces a new class of tensor factor models. The convergence rates and large sample distributions of the TPCA estimator are covered in Section \ref{sec:PCAtensor} which also covers the improved ALS estimator.  Section \ref{sec:test} presents a novel testing procedure for the number of factors. Small sample simulation evidence is reported in Section \ref{sec:sim}. An illustrative empirical example appears in Section \ref{sec:empirical}. Section~\ref{sec:conclusions} concludes. Lastly, in the Appendix, we provide proofs for all the main and auxiliary results.

\paragraph{Notation:} For two matrices $A\in\mathbb{R}^{N_1\times R_1}$ and $B\in\mathbb{R}^{N_2\times R_2}$, we use $A\otimes B\in\mathbb{R}^{N_1N_2\times R_1R_2}$ to denote their Kronecker product. In addition, for a collection of matrices $(\Lambda_j)_{j=1}^d$, we define $\bigotimes_{k\ne j}\Lambda_k=\Lambda_d\otimes\dots\otimes \Lambda_{j+1}\otimes \Lambda_{j-1}\otimes\dots\otimes \Lambda_1$. For a positive integer $p$, we use $I_p$ to denote the $p\times p$ identity matrix. For a matrix $A$, we use $P_A = A(A^\top A)^\dagger A^\top$ to denote the the projection matrix on the column space of $A$, where $^\dagger$ denotes the generalized inverse.  For two sequences $(a_n)_{n\in\Nn}$ and $(b_n)_{n\in\Nn}$, we write $a_n\lesssim b_n$ if and only if there exists $C<\infty$ such that $a_n\leq Cb_n$ for all $n\in\Nn$. The operator norm of a matrix $A$ is defined as $\|A\|_{\rm op}=\sup_{\|x\|=1}\|Ax\|$, where $\|.\|$ is the Euclidean norm. More generally, we use $\|.\|_p$ to denote the $\ell_p$ norm. For two tensors $A,B\in\R^{N_1\times\dots\times N_d}$, the Frobenius inner product is defined as $\langle A,B\rangle_F = \sum_{i_1,\dots,i_d}A_{i_1,\dots,i_d}B_{i_1,\dots,i_d}$. Let $\|A\|_{\rm F} = \sqrt{\langle A,A\rangle_{\rm F}}$ be the Frobenius norm of a tensor $A$ induced by the inner product. For a matrix $A$ with columns $(a_1,\dots,a_n)$, the $\ell_{2,1}$ matrix norm defined as $\|A\|_{2,1}=\sum_{j=1}^n\|a_j\|$. We also use $\mathbb{O}_{N}$ to denote the set of $N\times N$ orthogonal matrices, i.e. matrices $A$ such that $A^\top A = I_{N}$, where $I_N$ is $N\times N$ identity matrix. Lastly, for $a,b\in\R$, put $a\wedge b=\min(a,b)$ and $a\vee b = \max(a,b)$.

\section{Tensor Factor Models \label{sec:tensorfac}}
Traditional factor models apply to $2$-dimensional panel data represented by a matrix $\mathbf{Y}\in \R^{N\times T}$. The factor model with $R$ factors can be expressed as a sum of a low-rank matrix and a matrix of idiosyncratic shocks:
\begin{equation}\label{eq:2-way}
	\mathbf{Y} = \Lambda F^\top + \mathbf{U},\qquad \E\mathbf{U}=0,
\end{equation}
where $F\in\R^{T\times R}$ is a matrix of $R$ factors, $\Lambda\in\R^{N\times R}$ is a matrix of corresponding factor loadings, and $\mathbf{U}\in\mathbb{R}^{N\times T}$ are idiosyncratic random shocks.\footnote{The factors and loadings can be random, but all results are stated conditional on their realization.} The estimation of factors and their loadings can be done via PCA which can be computed with the singular-value decomposition (SVD) of $\mathbf{Y}$. More precisely, the fators/loadings can be estimated as the leading $R$ right/left singular vectors of $\mathbf{Y}$. Equivalently, the loadings can be estimated as the leading $R$ eigenvectors of $\mathbf{Y}\mathbf{Y}^\top$ while and the factors as the leading $R$ eigenvectors of $\mathbf{Y}^\top\mathbf{Y}$.

\smallskip

A $d$-dimensional panel dataset can be represented by a  $d$-way tensor, denoted $\mathbf{Y}\in\R^{N_1\times\ldots\times N_d}$. We can describe $\mathbf{Y}$ by enumerating all its elements along the $d$ ways (or modes):
\begin{equation*}
	\mathbf{Y} = \left\{y_{i_1,i_2,\dots, i_d},\; 1\leq i_j\leq N_j,\; 1\leq j\leq d\right\};
\end{equation*}
see Figure~\ref{appfig:tensors} for a graphical illustration. Similarly to a 2-way factor model, we can define a $d$-way factor model for a $d$-way tensor $\mathbf{Y}\in\mathbb{R}^{N_1\times\dots\times N_d}$ as follows
\begin{equation*}
	\mathbf{Y} = \text{low-rank tensor }\mathbf{X} + \mathbf{U}.
\end{equation*}
Unlike for matrices, the rank of a tensor can be defined in several different ways. To describe the notion of rank used in this paper, we need to introduce the notion of tensor matricization.

\begin{figure}[!ht]
	\centering
	\caption{A scalar, and tensors of order $1$ (vector), $2$ (matrix), and $3$}
	\label{appfig:tensors}
	\includegraphics[width=0.9\textwidth]{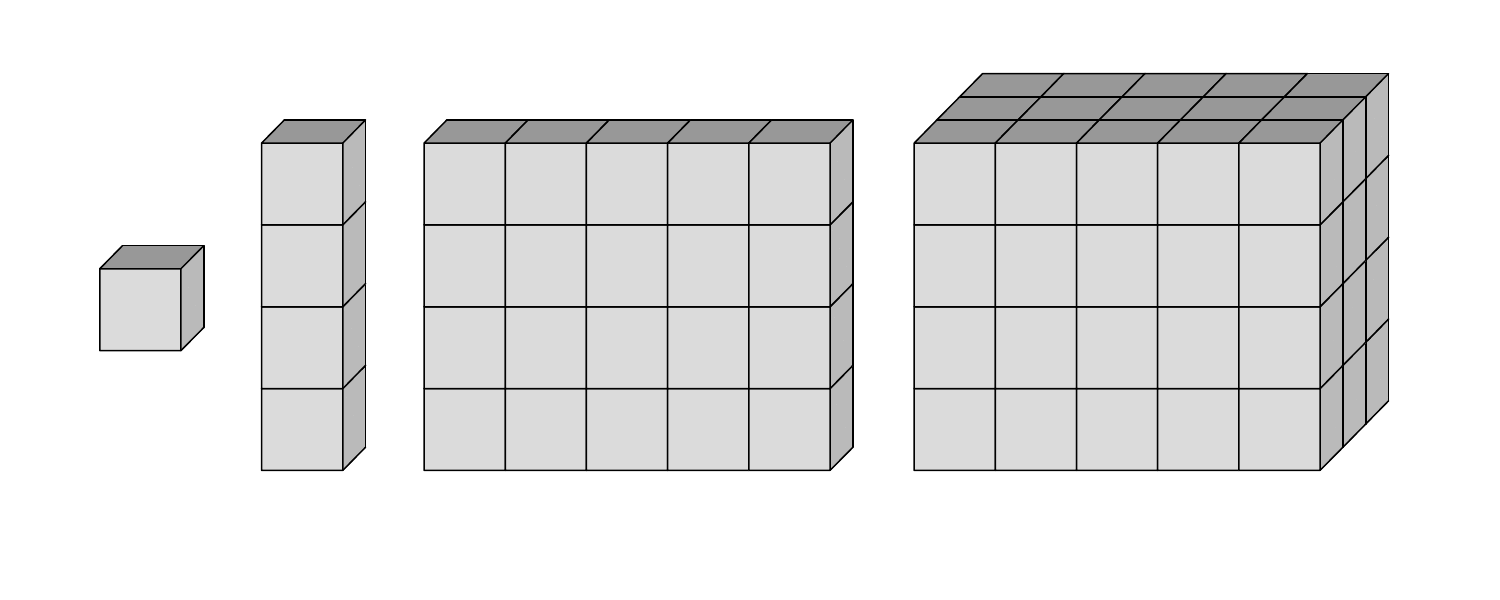}
\end{figure}

A tensor matricization operation can be described in terms of tensor \textit{fibers}---a generalization of matrix rows and columns. A fiber is defined by fixing all but one of its dimensions, e.g., a matrix column is a mode-$1$ fiber and a matrix row is a mode-$2$ fiber. The mode-$j$ fibers of a higher-order tensor are defined similarly; see Figure~\ref{appfig:fibers} for an illustration in the case of a $3$-way tensor. 
\begin{figure}[h]
	\centering
	\caption{Mode-$1,2$ and $3$ \textit{fibers} of a $4\times 5\times 3$ tensor}
	\label{appfig:fibers}
	\vskip0.1in
	\includegraphics[width=\textwidth]{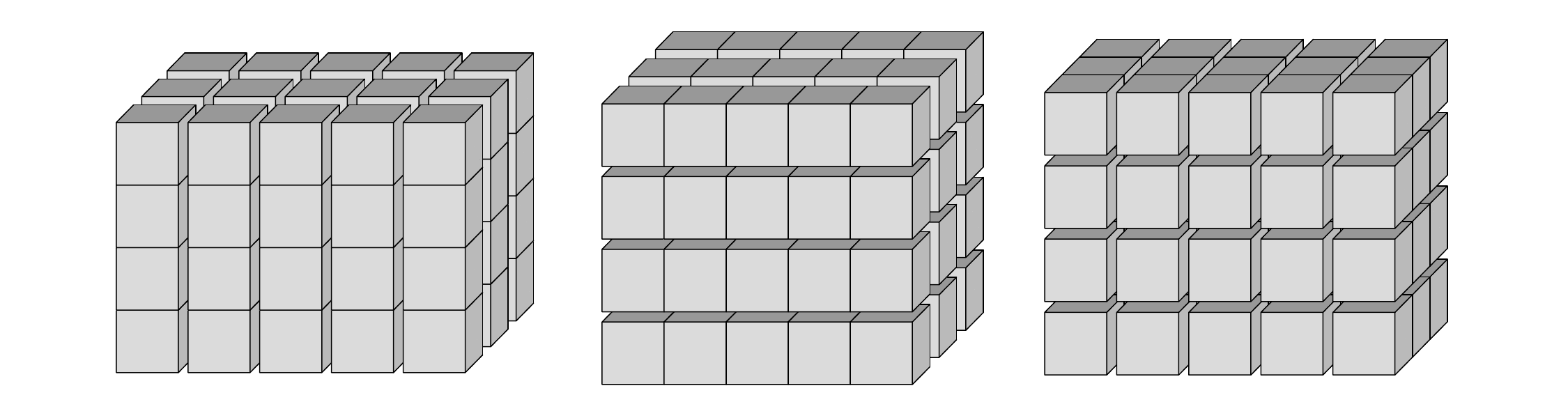}
\end{figure}

\smallskip

Recall that a matrix $\mathbf{Y}\in\mathbb{R}^{N_1\times N_2}$ can be vectorized by stacking all its columns to obtain an $N_1N_2\times 1$ vector. There is also a second way to vectorize a matrix by stacking all its rows into an $1\times N_1N_2$ vector. Similarly, a $d$-way tensor $\mathbf{Y}\in\mathbb{R}^{N_1\times\dots\times N_d}$ can be matricized in $d$ different ways across each of its $d$ modes.\footnote{Sometimes the term ``flattening'' of a tensor is used. We prefer the term ``matricization'' as it makes clear we are creating matrices, i.e., tensors could be flattened to lower dimensions that are not necessarily matrices.} The mode-$j$ matricizations of a tensor are obtained by stacking its mode-$j$ fibers as columns of  a $N_j\times\prod_{l\ne j}N_l$  matrix, denoted $\mathbf{Y}_{(j)}$. Formally, the elements of a mode-$j$ matricization of a tensor $\mathbf{Y}\in\mathbb{R}^{N_1\times\dots\times N_d}$ is obtained by the following mapping:
\begin{equation}
	\label{eq:matrizationmapping}
	y_{i_1,i_2,\dots, i_d} \mapsto \dot y_{i_j,k} \quad \text{with} \quad k=1+\sum_{\substack{n=1\\ n\neq j}}^d \left( (i_n-1)\prod_{\substack{m=1\\m\neq j}}^{n-1} N_m \right),
\end{equation}
where $\dot y_{i_j,k}$ is $(i_j,k)$ element of $\mathbf{Y}_{(j)}\in\mathbb{R}^{N_j\times\prod_{l\ne j}N_l}$; see Appendix~\ref{appsec:matricization} for a numerical example of how a $3$-way tensor is matricized along each of its three ways.

\smallskip

As we have already mentioned there are several different definitions of tensor rank, each leading to a different $d$-way factor model. The two most widely used definitions of tensor rank go back at least to \cite{hitchcock1927expression} and lead to the so-called Canonical Polyadic (CP) and Tucker decompositions.\footnote{The CP decomposition is also known as CANDECOMP or PARAFAC; see \cite{carroll1970analysis}, \cite{harshman1970foundations}. The name Tucker decomposition comes from \cite{tucker1966some}.} The Tucker decomposition is more general than the CP decomposition and is based on a notion of a multilinear rank which corresponds to the $d$-tuple of ranks of each of its mode-$j$ matricizations:
\begin{definition}
	A tensor $\mathbf{X}\in\mathbb{R}^{N_1\times\dots\times N_d}$ has a multilinear rank $(R_1,\dots,R_d)$ if 
	\begin{equation*}
		\mathrm{rank}(\mathbf{X}_{(j)}) = R_j,\qquad 1\leq j\leq d.
	\end{equation*}
\end{definition}
Note that the matricizations of a tensor can have in general different ranks which is allowed in the Tucker decomposition and is not allowed in the CP decomposition. We can decompose a tensor $\mathbf{X}\in\mathbb{R}^{N_1\times\dots\times N_d}$ with multilinear rank $(R_1,\dots,R_d)$ as
\begin{equation*}
	 \mathbf{X}_{i_1,\dots,i_d} = \sum_{r_1=1}^{R_1} \sum_{r_2=1}^{R_2} \cdots \sum_{r_d=1}^{R_d} g_{r_1,r_2,\cdots ,r_d}\lambda_{i_1,r_1}^{(1)}\lambda_{i_2,r_2}^{(2)}\dots\lambda_{i_d,r_d}^{(d)}
\end{equation*}
or more concisely as
\begin{equation*}
	\begin{aligned}
		\mathbf{X} & = \mathbf{G}\times_1\Lambda_1\times_2\Lambda_2\dots\times_d\Lambda_d \\
		& =  \mathbf{G}\bigtimes_{j=1}^d\Lambda_j,
	\end{aligned}
\end{equation*}
where $\Lambda_j\in\mathbb{R}^{N_j\times R_j}$ is a matrix of factors/loadings with entries $\lambda_{i_j,r_j}^{(j)}$; $\mathbf{G}\in\mathbb{R}^{R_1\times\dots\times R_d}$ is the so-called \textit{core tensor} with elements $g_{r_1,\dots,r_d}$; and $\times_j$ is the mode-$j$ product. The mode-$j$ product is defined as a multiplication along the mode-$j$, e.g., the mode-$1$ product is $\times_1:\mathbb{R}^{R_1\times R_2\dots\times R_d}\times \mathbb{R}^{N_1\times R_1}\to \mathbb{R}^{N_1\times R_2\dots\times R_d}$ is 
\begin{equation*}
	\mathbf{G}\times_1\Lambda_1 = \left(\sum_{r=1}^{R_1}g_{r,r_2,\dots,r_d}\lambda^{(1)}_{i_1,r} \right)_{1\leq i_1\leq N_1, 1\leq r_j\leq R_j,2\leq j\leq d}
\end{equation*}

Following the idea of defining a tensor factor model as a sum of a low-rank tensor and a tensor of idiosyncratic shocks, we define the Tucker tensor factor model as
\begin{equation*}\label{eq:tucker}
	\begin{aligned}
		\mathbf{Y} & = \mathbf{X} + \mathbf{U},\qquad \E\mathbf{U}=0 \\
		& =  \mathbf{G}\bigtimes_{j=1}^d\Lambda_j + \mathbf{U}.
	\end{aligned}	
\end{equation*}
 It is known that the Tucker decomposition is in general not unique, however, it is possible to identify the core tensor and all factors/loadings in the Tucker tensor model under additional orthogonality restrictions; see Section~\ref{sec:id}. The orthogonality restrictions are also commonly used to identify 2-way factor models.

\begin{remark}
	In economics and finance, we often have $3$-dimensional panels, where one of the dimensions corresponds to time. Setting $F:=\Lambda_3$ and $T:=N_3$, the model in equation~(\ref{eq:tucker}) becomes
	\begin{equation*}
		\mathbf{Y} = \mathbf{G}\times_1 \Lambda_1\times_2\Lambda_2\times_3F + \mathbf{U},\qquad \E\mathbf{U}=0.
	\end{equation*}
	If $f_{t,r_3}$ is the $(t,r_3)$ element of the factor matrix $F\in\mathbb{R}^{T\times R_d}$, then the entries of $\mathbf{Y}$ can be written equivalently as
	\begin{equation*}
		\begin{aligned}
			y_{i_1,i_{2},t} & = \sum_{r_1=1}^{R_1} \sum_{r_2=1}^{R_2}\sum_{r_3=1}^{R_3} g_{r_1,r_2,r_3}\lambda_{i_1,r_1}^{(1)}\lambda_{i_2,r_2}^{(2)}f_{t,r_3} + u_{i_1,i_2,t} \\
			& = \sum_{r_1=1}^{R_1}  \lambda_{i_1,r_1}^{(1)} f_{i_2,t,r_1}^{(1)} + u_{i_1,i_2,t}  = \sum_{r_2=1}^{R_2}\lambda_{i_2,r_2}^{(2)} f_{i_1,t,r_2}^{(2)} + u_{i_1,i_2,t}  = \sum_{r_3=1}^{R_3} \lambda_{i_1,i_2,r_3}^{(3)}f_{t,r_3} + u_{i_1,i_2,t},
		\end{aligned}	
	\end{equation*}
	where $f^{(1)}_{i_2,t,r_1},f^{(2)}_{i_1,t,r_2},\lambda^{(3)}_{i_1,i_2,r_3}$ are suitably defined. The last expression suggests that the Tucker tensor factor model has $R_3$ underlying time series factors with heterogeneous exposures $\lambda_{i_1,i_2,r_3}^{(3)}$ for modes-1 and 2. However, we could also rewrite it as a factor model with $R_1$ and $R_2$ factors for modes-1 and 2 respectively. The model is consistent with \cite{lettau2022estimating,lettau2023high} who considers the exact factor model with $\mathbf{U}=0$. In contrast, in statistics it is common to consider a model, where the dynamics is driven by the core tensor $\mathbf{G}_t\in\mathbb{R}^{R_1\times R_2}$ and there are $R_1\times R_2$ time series factors; see \cite{han2020tensor}, \cite{chen2022factor}, and \cite{barigozzi2022statistical} among others.
\end{remark}

\begin{remark}
	When $R_1=R_2=\dots=R_d$ and the core tensor $\mathbf{G}$ is diagonal with elements $g_{r_1,\dots,r_d}=\one_{r_1=\dots =r_d}$, we obtain the CP tensor factor model:
	\begin{equation*}
		\mathbf{Y} = \sum_{r=1}^R\lambda_{r}^{(1)}\circ\dots\circ\lambda_{r}^{(d)} + \mathbf{U},\qquad \E\mathbf{U}=0,
	\end{equation*}
	where $\lambda_{r}^{(j)}\in\mathbb{R}^{N_j}$ are some vectors and $\circ$ denotes the tensor outer product. The CP factor model corresponds to the notion of CP rank. Formally, we say that a $d$-way tensor is a rank-1 tensor if it can be expressed as an outer product of $d$ vectors. Every tensor $\mathbf{X}\in\mathbb{R}^{N_1\times\dots\times N_d}$ can be expressed as a finite sum of rank-1 tensors and the smallest number $R$ of such rank-1 tensors is called the CP rank of $\mathbf{X}$. The CP model requires that all mode-$j$ matricizations $\mathbf{Y}_{(j)},j\leq d$ have the same rank $R$ which may be restrictive in some applications.\footnote{A previous version of our paper, \cite{babii2022tensor}, considered the CP factor model with orthogonal loadings and factors which is now a special case of a more general framework.}
\end{remark}

In the remaining part of this section, we consider several examples of tensor data in economics and finance, showing that such type of data appear in many applications. Using 3-dimensional examples with the notation 	$\mathbf{Y}$ = $y_{i,j,k}$ or 	$y_{i,j,t},$ they include:
\begin{itemize}
	\item Input-Output Models: \(i\) is industry sector, \(j\) type of input (e.g., labor, materials, capital), and \(k\) region or country
	\item Macroeconomic Data Across Countries \(i\) country, \(j\) macroeconomic variables (e.g., GDP, inflation, unemployment rate) and \(t\) time period (e.g., quarterly, annually)
     \item High-Frequency Trading Data: \(i\) asset (e.g., stocks, commodities), \(j\) attributes (e.g., bid price, ask price, volume), and \(t\) timestamps (e.g., milliseconds)
	\item Energy Markets and Economics: \(i\) country, \(j\) energy types (e.g., coal, wind, solar, oil), and \(t\) time periods
	\item Supply Chain Analysis: \(i\) products, \(j\) locations (e.g., factories, warehouses, retail outlets), and \(t\) time periods
	\item Real Estate and Urban Economics: \(i\) region or city, \(j\) property types (e.g., residential, commercial), and \(t\) time periods.
\end{itemize}
As well as 4-dimensional examples such as:
\begin{itemize}
	\item Banking and Credit Risk Models: \(i\) customer, \(j\) loan types, \(k\) credit scores, and \(t\) time periods
	\item Consumer Behavior Analysis: \(i\) consumer demographics (e.g., age group, income level), \(j\) product categories (e.g., electronics, groceries), \(k\) regions, and \(t\) time periods.
\end{itemize}

\section{Tensor PCA \label{sec:PCAtensor}}
In this section, we consider two tensor PCA (TPCA) algorithms to estimate the Tucker tensor factor model in equation~(\ref{eq:tucker}). We begin by discussing the sufficient identifying conditions and present a simple TPCA algorithm. We argue that the simple TPCA is optimal when factors are strong in the second subsection. The third subsection describes an improved iterative alternating least-squares algorithm and discusses the improvements for the weak factor model. The next subsection provides the large sample distributions for loadings/factors estimated with simple TPCA.

\subsection{Identification and Simple Tensor PCA}\label{sec:id}
It is known that the Tucker decomposition of a tensor is not in general unique. In this section, we argue that the loadings/factors in the Tucker tensor factor model in equation~(\ref{eq:tucker}) can be identified under the following assumption:
\begin{assumption}\label{as:orthogonal}
	For every $1\leq j\leq d$, (i) $\Lambda_j^\top \Lambda_j = I_{R_j}$; and (ii) $\mathbf{G}_{(j)}\mathbf{G}_{(j)}^\top=\mathrm{diag}(\sigma^2_{j,1},\dots,\sigma^2_{j,R_j})$ for some $\sigma_{j,1}>\dots>\sigma_{j,R_j}>0$.
\end{assumption}

By Appendix Lemma~\ref{lemma:tensor_matricization}, the mode-$j$ matricization of equation~(\ref{eq:tucker}) is
\begin{equation}\label{eq:unfolded}
	\mathbf{Y}_{(j)} = \Lambda_j\mathbf{G}_{(j)}\left(\bigotimes_{l\ne j}\Lambda_l\right)^\top + \mathbf{U}_{(j)},\qquad 1\leq j\leq d,
\end{equation}
where $\bigotimes_{l\ne j}\Lambda_l=\Lambda_d\otimes\dots\otimes \Lambda_{j+1}\otimes \Lambda_{j-1}\otimes\dots\otimes \Lambda_{1}$.  The latter satisfies the following property:
\begin{proposition}\label{prop:id}
	Under Assumption~\ref{as:orthogonal} (i)
	\begin{equation*}
		\left(\bigotimes_{k\ne j}\Lambda_k\right)^\top\left(\bigotimes_{k\ne j}\Lambda_k\right) = I_{\prod_{k\ne j}R_k},\qquad 1\leq j\leq d.
	\end{equation*}
\end{proposition}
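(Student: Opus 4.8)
The plan is to reduce the statement to the elementary inner-product rule for Kronecker products and then invoke the orthonormality of each $M_k$ supplied by Assumption~\ref{as:orthogonal}.

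First I would record the key property of the Kronecker product: for vectors $u,w$ and $v,z$ of conformable dimensions,
\[
(u\otimes_{\rm K}v)^\top(w\otimes_{\rm K}z)=(u^\top w)(v^\top z),
\]
which follows at once from the coordinate definition of $\otimes_{\rm K}$ by factoring the resulting double sum. Iterating this identity shows that the inner product of two iterated Kronecker products factors into the product of the corresponding coordinatewise inner products.

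Next I would use the definition of the Khatri--Rao product. Since $\odot$ takes columnwise Kronecker products, the $r$-th column of $\bigodot_{k\ne j}M_k$ is $\kbigotimes_{k\ne j}m_{k,r}$. Therefore the $(r,s)$ entry of $\left(\bigodot_{k\ne j}M_k\right)^\top\left(\bigodot_{k\ne j}M_k\right)$ is
\[
\left(\kbigotimes_{k\ne j}m_{k,r}\right)^\top\left(\kbigotimes_{k\ne j}m_{k,s}\right)=\prod_{k\ne j}m_{k,r}^\top m_{k,s},
\]
by the iterated inner-product rule above.

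Finally, under Assumption~\ref{as:orthogonal} each $M_k$ is unitary, i.e.\ $M_k^\top M_k=I_R$, which means $m_{k,r}^\top m_{k,s}=\one_{r=s}$. Substituting gives $\prod_{k\ne j}\one_{r=s}=\one_{r=s}$, so the Gram matrix has entry $1$ on the diagonal and $0$ off it, i.e.\ it equals $I_R$. The only step requiring care is keeping track of the prescribed (noncommutative) ordering in $\bigodot_{k\ne j}M_k=M_d\odot\cdots\odot M_{j+1}\odot M_{j-1}\odot\cdots\odot M_1$ when iterating the Kronecker rule; but since each factor contributes the same scalar $m_{k,r}^\top m_{k,s}$ to a commutative product, the ordering is immaterial for the final Gram matrix. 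I expect this bookkeeping, rather than any genuine difficulty, to be the crux.
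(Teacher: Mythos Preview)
Your proof is correct and follows essentially the same route as the paper: identify the columns of $\bigodot_{k\ne j}M_k$ as iterated Kronecker products, use the multiplicativity of inner products under $\otimes_{\rm K}$, and conclude from $M_k^\top M_k=I_R$. If anything, your version is more careful than the paper's, which writes the $(l,m)$ entry as ``$=1$'' without explicitly separating the diagonal and off-diagonal cases as you do with $\one_{r=s}$.
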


\noindent Then the tensor factor model is identified from the PCA applied to $\mathbf{Y}_{(j)}$. Indeed, if $\mathbf{U}=0$, then by Proposition~\ref{prop:id}
\begin{equation*}
	\mathbf{Y}_{(j)}\mathbf{Y}_{(j)}^\top = \Lambda_jD_j\Lambda_j^\top,
\end{equation*}
where under Assumption~\ref{as:orthogonal}, (ii) $D_j=\mathbf{G}_{(j)}\mathbf{G}_{(j)}^\top$ is a diagonal matrix with distinct elements and $\Lambda_j^\top\Lambda_j=I_{R_j}$. Then $\Lambda_j$ is identified as eigenvectors of $\mathbf{Y}_{(j)}\mathbf{Y}_{(j)}^\top$ or equivalently as the left singular vectors of $\mathbf{Y}_{(j)}$. Moreover, we have
\begin{equation}\label{eq:projection_core}
	\mathbf{Y}\bigtimes_{j=1}^d\Lambda_j = \mathbf{G}
\end{equation}
since $\Lambda_j^\top \Lambda_j=I_{R_j}$ for all $1\leq j\leq d$. If $\mathbf{U}\ne 0$, then assuming that the factor part dominates the noise asymptotically, cf. Assumption~\ref{as:strong_factors} below, identification is achieved in large samples.

\smallskip

This leads us to the TPCA estimation Algorithm~\ref{alg:hosvd} often credited to \cite{tucker1966some} with some aspects already in \cite{hitchcock1927expression}.\footnote{It is also known as higher-order singular values decomposition in the numerical analysis; see \cite{de2000multilinear}.}

\begin{algorithm}
	\caption{TPCA}
	\label{alg:hosvd}
	\begin{algorithmic}[1] 
		\Procedure{TPCA}{$\mathbf{Y},R_1,\dots,R_d$}
		\For{$j=1,\dots,d$}
		\State $\hat \Lambda_j\gets N_j\times R_j$ matrix of $R_j$ leading left singular vectors of $\mathbf{Y}_{(j)}$ 
		\EndFor
		\State $\hat {\mathbf{G}}\gets \mathbf{Y}\bigtimes_{j=1}^d\hat \Lambda_j^\top$ \Comment{Estimate the core tensor}
		\State \textbf{return} $\hat {\mathbf{G}},\hat \Lambda_1,\hat \Lambda_2,\dots,\hat \Lambda_d$ 
		\EndProcedure
	\end{algorithmic}
\end{algorithm}

\smallskip

\subsection{Rates of Consistency}
The following assumption imposes mild restrictions on the data generating process.
\begin{assumption}\label{as:data}
	The idiosyncratic errors $\mathbf{U}=\{u_{i_1,\dots,i_d}:\; 1\leq i_j\leq N_j,1\leq j\leq d\}$ are i.i.d.\ such that $\E(u_{i_1,\dots,i_d})=0$, $\Var(u_{i_1,\dots,i_d})=\sigma^2$, and for some $c>0$, $\E[e^{\lambda u_{i_1,\dots,i_d}}] \leq e^{c\lambda},\forall\lambda\in\mathbb{R}$.
\end{assumption}
Assumption~\ref{as:data} does not impose any restrictions on the dependence structure in factors and loadings. The factors may be generated by a non-stationary process and the loadings may fail to be i.i.d. It is plausible that after controlling for cross-sectional and time-series dependence through the factor structure, the idiosyncratic errors are i.i.d. Nonetheless, Assumption~\ref{as:data} can also be relaxed to heterogeneous and dependent arrays at costs of heavier notation and proofs and modified TPCA algorithm, see \cite{zhang2022heteroskedastic}, which is left for future work.

\smallskip

Next, we make an assumption about the factor strength. Let $\sigma_{j,R_j}$ be the smallest non-zero singular value of the matricized tensor $\mathbf{X}_{(j)}$ and let $\delta=\min_{1\leq j\leq d}\sigma_{j,R_j}$ be a measure of factor strength in the tensor factor model. We assume next that factors are not extremely weak in the sense that the smallest singular values of matricizations are well-separated from zero: 
\begin{assumption}\label{as:strong_factors}
	The factor strength is such that $\delta^2 \gtrsim \prod_{j=1}^d\sqrt{N_j} + \max_{1\leq j\leq d}N_j$.
\end{assumption}
The strong factors model in the tensor setting can be described as $\delta^2\sim \prod_{j=1}^dN_j$.\footnote{In the 2-way case, this corresponds to assuming that the squared singular values of $\mathbf{Y}$ scale at the $NT$-rate; see also \cite{onatski2012asymptotics,onatski2022uniform} for the weak factors in the 2-way case.} In this case, Assumption~\ref{as:strong_factors} is automatically satisfied. Let $O_j$ be the optimal rotation matrix solving $\min_{O\in\mathbb{O}_{R_j}}\|\hat \Lambda_jO - \Lambda_j\|_{\rm F}^2$.\footnote{It is known that the optimal rotation matrix for the Frobenius norm has a closed-form expression $O_j =\mathrm{sign}(\hat\Lambda_j^\top\Lambda_j)$, where the sign of a matrix $A$ with SVD $A=\Lambda DV^\top$ is defined as $\mathrm{sign}(A)=\Lambda V^\top$.} The following result holds:

\begin{theorem}\label{thm:rates}
	Suppose that Assumptions~\ref{as:orthogonal}, \ref{as:data}, and \ref{as:strong_factors} are satisfied. Then with probability at least $1-Ce^{-cN_j}$, we have
	\begin{equation*}
		\|\hat \Lambda_jO_j - \Lambda_j\|_{\rm F}^2 \lesssim \frac{R_jN_j}{\delta^2} + \frac{R_j\prod_{l=1}^dN_l}{\delta^4} ,\qquad 1\leq \forall j\leq d.
	\end{equation*}
\end{theorem}
The proof appears in the Appendix. Theorem~\ref{thm:rates} is a non-asymptotic result valid for any values of tensor dimensions $(N_1,\dots,N_d)$, provided that Assumption~\ref{as:strong_factors} is satisfied. It also applies to the CP-factor model, when matricizations have the same ranks and factors/loadings are assumed to be orthogonal. The orthogonality restriction for the CP factor model can also be relaxed; see \cite{chang2023modelling} and \cite{chen2024estimation} for recent contributions.

\begin{remark}
	Consider the 2-way case, $\mathbf{Y}=\Lambda F^\top + \mathbf{U}$ with $\mathbf{Y}\in\mathbb{R}^{N\times T}$, $\Lambda\in\mathbb{R}^{N\times R}$, and $F\in\mathbb{R}^{T\times R}$, where $R$ is the number of factors. In the strong factor model, the smallest singular value of $\Lambda F^\top$ is $\delta\sim \sqrt{NT}$. Theorem~\ref{thm:rates} implies that
	\begin{equation*}
		\|\hat \Lambda O_1 - \Lambda\|_{\rm F}^2 = O_P\left(\frac{R}{T}\right)\qquad  \text{and}\qquad \|\hat FO_2 - F\|_{\rm F}^2 = O_P\left(\frac{R}{N}\right),
	\end{equation*}
	To the best of our knowledge, these rates are faster than the best currently known results in the factor literature, cf. \cite{bai2023approximate}.  In particular, the factor loadings are consistently estimated when $N$ is fixed while factors are consistently estimated when $T$ is fixed. We also allow for the number of factors to diverge slowly with the sample size; see also \cite{freeman2023linear} and \cite{beyhum2022factor} for results with a diverging number of factors in the 2-way case.
\end{remark}

\begin{remark}
	For the strong factor model, when $\delta^2\sim\prod_{j=1}^dN_j$, Theorem~\ref{thm:rates} shows that
	\begin{equation*}
		\|\hat \Lambda_jO_j - \Lambda_j\|_{\rm F}^2 = O_P\left(\frac{R_j}{\prod_{l\ne j}N_l}\right),\qquad 1\leq \forall j\leq d.
	\end{equation*}
	More generally, the $O_P(R_jN_j/\delta^2)$ rate is obtained provided that the factor/loadings strength is such that $\delta^2\gtrsim\prod_{l\ne j}N_l$. This rate is known to be minimax-optimal when $d=3$ see \cite{zhang2018tensor}, Theorem 3. Therefore, the naive TPCA Algorithm~\ref{alg:hosvd} is optimal under the strong factors asymptotics.
\end{remark}

\begin{remark}
	If the last tensor dimension corresponds to time, i.e., $N_d=T$ and $\Lambda_d=F$, then for the strong factor model, Theorem~\ref{thm:rates} shows that factors are estimated at the rate
	\begin{equation*}
		\|\hat FO_d - F\|_{\rm F}^2 = O_P\left( \frac{R_j}{N_1N_2\dots N_{d-1}}\right).
	\end{equation*}
\end{remark}

\subsection{Alternating Least-Squares}
For weak factors/loadings, the estimation accuracy of factors/loadings in Theorem~\ref{thm:rates} does not always improve when more data is available in all tensor dimensions because the second term with $\prod_{l=1}^dN_l$ can dominate, creating a bottleneck. This comes from the fact that the matricized tensor $\mathbf{Y}_{(j)}$ has a very large number of columns, namely $\prod_{l\ne j}N_l$. Under Assumption~\ref{as:orthogonal}, using the mode-$l$ multiplication of $\mathbf{Y}$ by loadings $\Lambda_l,l\ne j$, we obtain
\begin{equation}\label{eq:project}
	\mathbf{Y}\bigtimes_{l\ne j}\Lambda_l^\top = \mathbf{G}\times_j\Lambda_j^\top + \mathbf{U}\bigtimes_{l\ne j}\Lambda_l^\top;
\end{equation}
cf. equation~(\ref{eq:projection_core}). The mode-$j$ matricization of equation~(\ref{eq:project}),
\begin{equation*}
	\left(\mathbf{Y}\bigtimes_{l\ne j}\Lambda_l^\top\right)_{(j)} = \Lambda_j\mathbf{G}_{(j)} + \left(\mathbf{U}\bigtimes_{l\ne j}\Lambda_l^\top\right)_{(j)},
\end{equation*}
has a much smaller number of columns, namely $\prod_{l\ne j}R_l$ instead of $\prod_{l\ne j}N_l$. Therefore, applying PCA to the mode-$j$ matricization of the left-hand side in the equation~(\ref{eq:project}), may allow us to estimate $\Lambda_j$ more precisely, provided that the noise is negligible; see also equation~(\ref{eq:projection}). Iterating this procedure and updating the factors/loadings leads to the alternating least-squares Algorithm~\ref{alg:hooi} for Tucker decomposition; see \cite{kroonenberg1980principal}.\footnote{It is also known as the higher-order orthogonal iterations; see \cite{de2000best} for more details.}
The following result establishes the convergence rate of estimators computed using this algorithm:
\begin{theorem}\label{thm:als}
	Suppose that assumptions of Theorem~\ref{thm:rates} are satisfied and for all $j\leq d$, we have $\sigma_{1,j}\lesssim \delta$, $N_j\sim N\uparrow\infty$, and $R_j=O(1)$. Then if the factor strength is such that $N^d/\delta^4=o(1)$ and the number of iterations is $\bar k\geq \log(N^{(d-1)/2}/\delta)$, we obtain with probability at least $1-Ce^{-cN}$ that:
	\begin{equation*}
		\left\|\hat \Lambda_j^{(\bar k)}O_j - \Lambda_j\right\|_{\rm F}^2 \lesssim \frac{N}{\delta^2},\qquad 1\leq \forall j\leq d.
	\end{equation*}
\end{theorem}
\noindent The proof of this result appears in the Appendix.
\begin{algorithm}
	\caption{ALS}
	\label{alg:hooi}
	\begin{algorithmic}[1] 
		\Procedure{ALS}{$\mathbf{Y},R_1,\dots,R_d$}
		\State Initialize $\hat \Lambda_j^{(0)}$  for $1\leq j\leq d$  using Algorithm~\ref{alg:hosvd}.
		\For{$k=1,2,\dots,\bar k$}
		\For{$j=1,2,\dots,d$}
		\State $\hat{\mathbf{Y}}^{(k)} \gets \mathbf{Y}\bigtimes_{l\ne j}\hat \Lambda_l^{(k-1)\top}$
		\State $\hat \Lambda_j^{(k)}\gets R_j$ leading left singular vectors of $\hat{\mathbf{Y}}^{(k)}_{(j)}$
		\Comment{Update loadings/factors}
		\EndFor
		\EndFor
		\State $\hat {\mathbf{G}}\gets \mathbf{Y}\bigtimes_{j=1}^d\hat \Lambda_j^{(\bar k)\top}$ \Comment{Estimate the core tensor}
		\State \textbf{return} $\hat {\mathbf{G}},\hat \Lambda_1^{(\bar k)},\hat \Lambda_2^{(\bar k)},\dots,\hat \Lambda_d^{(\bar k)}$ 
		\EndProcedure
	\end{algorithmic}
\end{algorithm}

Theorem~\ref{thm:als} makes a simplifying assumption that the tensor dimensions increase proportionally to each other which in the 2-way factor case requires that $N/T\to c>0$. Then for the strong factor model, we have $\delta^2\sim N^d$ and the condition $N^d/\delta^4=o(1)$ simply requires that $N\to\infty$ which is trivially satisfied. Consequently, we obtain the rate of order $O_P(1/N^{d-1})$ which is the same as the optimal rate of the naive TPCA Algorithm~\ref{alg:hosvd}. However, when the factors are weak, the rate of the ALS Algorithm~\ref{alg:hooi} can be faster than that of the naive TPCA. Therefore, we can expect that the ALS with naive TPCA used as a starting point may estimate factors/loadings more accurately, provided that a sufficiently large number of iterations is made.

\begin{remark}
	It is known that the rate achieved by the ALS Algorithm~\ref{alg:hooi} in Theorem~\ref{thm:als} is minimax-optimal; see \cite{zhang2018tensor}, Theorem 3. Interestingly, if the factors become so weak, that the $N^d/\delta^4=o(1)$ condition fails, the rate-optimal MLE estimator, is a solution to the NP-hard optimization problem and there is a gap between statistical and computational limits; see also \cite{richard2014statistical}.
\end{remark}

\subsection{Inference}
In  this section, we consider inference on loadings and factors in the general Tucker model, covering the orthogonal CP model as a special case. The loadings/factors are estimated with the TPCA Algorithm~\ref{alg:hosvd}. Let $\hat\Lambda_{j,i:}$ and $\Lambda_{j,i:}$ be $R_j\times 1$ vectors corresponding to the transposed $i^{\rm th}$ rows of $\hat\Lambda_jO_j$ and $\Lambda_j$ respectively with $O_j=\hat\Lambda_j^\top\Lambda_j$. Let $\|.\|_{2,\infty}$ be the entry-wise $\ell_{2,\infty}$ matrix norm.

\smallskip
We need the following set of assumptions for inference:
\begin{assumption}\label{as:clt}
	(i) the factor strength is $\delta\sim \prod_{j=1}^dN_j$; (ii) the coherence is $\|\Lambda_j\Lambda_j^\top\|_{2,\infty}^2=O(R_j/N_j)$ and $\|V_j\|_{2,\infty}=o(1)$ for all $j\leq d$.
\end{assumption}
The coherence condition roughly requires that the values of factors/loadings are spread out instead of being concentrated in a finite number of entries; see  \cite{candes2012exact}, Definition 1.8.

The following result holds for factors/loadings, depending on the dimension $j=1,\dots,d$:
\begin{theorem}\label{thm:clt}
	Suppose that Assumptions~\ref{as:orthogonal}, \ref{as:data}, and \ref{as:clt} are satisfied. Then
	\begin{equation*}
		D_j^{1/2}(\hat\Lambda_{j,i:} - \Lambda_{j,i:}-B_{j,i}) \xrightarrow{d} N(0,\sigma^2I_{R_j}),
	\end{equation*}
	provided that $\prod_{l\ne j}N_l/N_j^3=o(1)$ and the expression of $B_{j,i}$ can be found in the proof.
\end{theorem}
\noindent Note that for $d=3$, conditions of Theorem~\ref{thm:clt} are satisfied when the tensor dimensions grow proportionally, i.e., $N_j\sim N\uparrow\infty$ for $j=1,2,3$. Note also that the results of \cite{bai2003inferential}, Theorem 2, for 2-way factor models are not applicable to tensors because the condition $\sqrt{T}/N=o(1)$ does not hold for $d=3$ when the tensor dimensions grow proportionally since $T\sim N^2$ in this case. Lastly, eliminating the bias and/or relaxing the i.i.d. homoskedastic errors may require modifying the TPCA estimator, see \cite{zhang2022heteroskedastic}, and is left for future research.

\smallskip

\noindent We now turn to the estimation of scale components. Let $(\hat \sigma_{j,r}^2)_{1\leq r\leq R}$ be the eigenvalues of $\mathbf{Y}_{(j)}\mathbf{Y}_{(j)}^\top$. The following result holds:
\begin{theorem}\label{thm:clt_scale}
	Under Assumptions of Theorem~\ref{thm:clt}, we have
	\begin{equation*}
		\left(\frac{\hat\sigma_{j,r}^2 - \sigma_{j,r}^2}{\sigma_{j,r}} \right)_{1\leq r\leq R_j} \xrightarrow{d} N(0,4\sigma^2I_{R_j}).
	\end{equation*}
\end{theorem}
\noindent It immediately follows from Theorem~\ref{thm:clt_scale} that $\hat\sigma_{j,r}^2/\prod_{j=1}^dN_j$ is a consistent estimator of the asymptotic factor strength constant $d_{j,r}$.

\section{Testing the Number of Factors}\label{sec:test}
In this section, we develop a novel test for the number of factors in the tensor factor model. Specifically, we consider the following hypotheses:
\begin{equation*}
	\mathrm{H_0}:\; \text{$\leq k$ factors} \qquad \rm{vs.} \qquad H_1:\; \text{the number of factors is $>k$, but $\leq K$}.
\end{equation*}
The quantity $K$ is selected by the user and reflects the upper bound on the total possible number of factors. 

Let $\hat\sigma^2_{j,1}\geq \hat\sigma^2_{j,2}\geq \dots\geq \hat\sigma^2_{j,N_j}\geq 0$ be the eigenvalues of $\mathbf{Y}_{(j)}\mathbf{Y}_{(j)}^\top$. Under Assumption~\ref{as:strong_factors}, the first $R_j$ eigenvalues diverge from zero, at least in population. Consider the following statistics
\begin{equation*}
	S_j = \max_{k<r\leq K}\frac{\hat\sigma^2_{j,r} - \hat\sigma^2_{j,r+1}}{\hat\sigma^2_{j,r+1} - \hat\sigma^2_{j,r+2}},\qquad 1\leq j\leq d,
\end{equation*}
and put
\begin{equation*}
	Z = \max_{0<r\leq K-k}\frac{\xi_r - \xi_{r+1}}{\xi_{r+1} - \xi_{r+2}},
\end{equation*}
where $(\xi_1,\dots,\xi_{K-k+2})$ follow the joint type-1 Tracy-Widom distribution; see \cite{karoui2003largest} and \cite{soshnikov2002note}.

Then, consider the sequence:
{\footnotesize
\begin{equation*}
	\tau = \left(\sqrt{N_j\vee \prod_{k\ne j}N_k -1} + \sqrt{N_j\wedge \prod_{k\ne j}N_k}\right)\left(\left(N_j\vee \prod_{k\ne j}N_k -1 \right)^{-1/2} + \left(N_j\wedge \prod_{k\ne j}N_k \right)^{-1/2}\right)^{1/3}.
\end{equation*}}

\noindent The following result holds provided that $N_j\lesssim \prod_{k\ne j}N_k$:
\begin{theorem}\label{thm:test}
		Suppose that Assumptions~\ref{as:orthogonal} and \ref{as:data} are satisfied, $\sigma_{j,r}^2\sim \prod_{j=1}^dN_j$, and $u_{i_1,\dots,i_d}\sim N(0,\sigma^2)$. Suppose also that $N_j/\tau+\prod_{k\ne j}N_k/(N_j\tau)=o(1)$. Then under $H_0$, $	S_j \xrightarrow{d} Z$, while under $H_1$, we have $S_j\uparrow\infty$ for every $j\leq d$.
\end{theorem}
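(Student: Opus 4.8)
The plan is to reduce the statistic $S_j$, which under $H_0$ involves only the eigenvalues $\hat\sigma^2_{k+1,j},\dots,\hat\sigma^2_{K+2,j}$ sitting at the edge of the noise spectrum, to the corresponding eigenvalue-ratio functional of the pure-noise matrix $\mathbf{U}_{(j)}\mathbf{U}_{(j)}^\top$. Writing $p_j=\prod_{k\ne j}N_k$, the Gaussian assumption makes $\mathbf{U}_{(j)}$ an $N_j\times p_j$ matrix with i.i.d.\ $N(0,\sigma^2)$ entries, so $\mathbf{U}_{(j)}\mathbf{U}_{(j)}^\top$ is a scaled real Wishart matrix. The crucial structural observation is that the ratio of consecutive differences in $S_j$ is invariant under any common affine map $x\mapsto a+bx$ with $b>0$ applied to the eigenvalues: the center $a$ cancels inside each difference and the scale $b$ cancels between numerator and denominator. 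Hence the limit of $S_j$ depends only on the joint law of the centered, $\tau$-scaled top noise eigenvalues, and not on the centering constant or the scale $\sigma^2\tau$ themselves.

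For the null limit I would first invoke edge universality for the Wishart matrix. In the regime $N_j\lesssim p_j$ with both dimensions growing—which permits the aspect ratio $N_j/p_j$ to vanish—\cite{karoui2003largest} and \cite{soshnikov2002note} give that the top finitely many eigenvalues of $\mathbf{U}_{(j)}\mathbf{U}_{(j)}^\top$, after centering by $\mu_j\asymp\sigma^2(\sqrt{N_j\vee p_j}+\sqrt{N_j\wedge p_j})^2$ and scaling by $\sigma^2\tau$, converge jointly to the leading points $(\xi_1,\xi_2,\dots)$ of the type-$1$ Tracy--Widom ensemble. Next I would perform the deflation step: by Proposition~\ref{prop:id} the signal part of $\mathbf{Y}_{(j)}\mathbf{Y}_{(j)}^\top$ equals $M_jD^2M_j^\top$, so $\mathbf{Y}_{(j)}\mathbf{Y}_{(j)}^\top$ differs from $\mathbf{U}_{(j)}\mathbf{U}_{(j)}^\top$ by a perturbation of rank at most $2k$ (the term $M_jD^2M_j^\top$ plus the two cross terms, whose ranges lie in $\mathrm{col}(M_j)$ and $\mathrm{col}(\mathbf{U}_{(j)}\bigodot_{l\ne j}M_l)$). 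Under Assumption~\ref{as:strong_factors} the $k$ signal eigenvalues are of order $N_jp_j$, far above the noise edge, which is of order $p_j$; they are therefore supercritical spikes that separate off exactly $k$ eigenvalues, while the remaining ones adhere to the noise edge with an index shift of exactly $k$, i.e.\ $(\hat\sigma^2_{k+s,j}-\mu_j)/(\sigma^2\tau)\xrightarrow{d}\xi_s$ jointly. The dimension condition $N_j/\tau+p_j/(N_j\tau)=o(1)$ is what guarantees that the cross-term error in this deflation is $o_P(\tau)$, so it does not disturb the Tracy--Widom-scale fluctuations. Combining affine invariance with the continuous mapping theorem applied to the maximum over the fixed window $s=1,\dots,K-k$ then yields $S_j\xrightarrow{d}Z$ at the boundary $R=k$; for $R<k$ the same argument produces a maximum over a later window of Tracy--Widom gaps, so the test is conservative.

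Under $H_1$ the true number of factors $R$ satisfies $k<R\le K$, so the index $r=R$ lies in the range of the maximum defining $S_j$. At this index the numerator $\hat\sigma^2_{R,j}-\hat\sigma^2_{R+1,j}$ is the difference of a supercritical spike (order $N_jp_j$) and a leading noise eigenvalue (order $p_j$), hence of order $N_jp_j$, whereas the denominator $\hat\sigma^2_{R+1,j}-\hat\sigma^2_{R+2,j}$ is a gap between the two largest noise eigenvalues and is therefore $\Theta_P(\tau)$, converging after scaling to $\xi_1-\xi_2>0$. Consequently this single term—and thus $S_j$—is of order $N_jp_j/\tau\to\infty$, giving $S_j\uparrow\infty$ for every $j\le d$.

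The main obstacle is the deflation step under $H_0$. Although the signal-plus-cross-term perturbation has rank at most $2k$, its operator norm—driven by the cross terms $M_jD(\mathbf{U}_{(j)}\bigodot_{l\ne j}M_l)^\top$—is of order $N_j\sqrt{p_j}$, which dwarfs both the noise edge and the scale $\tau$, so a crude Weyl bound is useless. I would instead use a resolvent and interlacing argument tailored to supercritical finite-rank deformations of Gaussian matrices (in the spirit of the spiked-model analyses of Baik--Ben Arous--P\'ech\'e and Benaych-Georges--Nadakuditi), showing that the $k$ spikes separate cleanly and that, on the orthogonal complement of the signal directions, the edge eigenvalues retain the Wishart Tracy--Widom fluctuations up to an $o_P(\tau)$ error. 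Verifying that this error is genuinely $o_P(\tau)$ under the stated dimension condition, rather than merely negligible at the leading $\mu_j$-scale, is the delicate part, since it is precisely the $\tau$-scale behavior—not the $\mu_j$-scale location—that determines the limiting distribution $Z$.
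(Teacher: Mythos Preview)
Your overall architecture is right---reduce to the noise edge, invoke El Karoui's Tracy--Widom limit in the $N_j/p_j\to 0$ regime, use affine invariance of the ratio, and for $H_1$ exploit that the $R^{\rm th}$ eigenvalue is of order $N_jp_j$ while the next gap is $\Theta_P(\tau)$. Where you diverge from the paper is the deflation step. You propose a BBP/Benaych-Georges--Nadakuditi resolvent analysis of a supercritical spiked model, and you correctly flag that extracting $o_P(\tau)$ control on the non-spiked edge eigenvalues from that machinery is delicate. The paper bypasses this entirely by an elementary block-matrix perturbation: conjugate $\mathbf{Y}_{(j)}\mathbf{Y}_{(j)}^\top$ by the orthogonal matrix $[M_j\; Q]$ with $Q$ spanning $\mathrm{col}(M_j)^\perp$, then apply a quantitative eigenvalue perturbation bound (Lemma~6 of \cite{onatski_ecma_2009}) to the resulting $2\times 2$ block structure to obtain directly $\hat\sigma^2_{R+r,j}=\lambda_r(Q^\top\mathbf{U}_{(j)}\mathbf{U}_{(j)}^\top Q)+O_P(N_j+p_j/N_j)$. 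The Gaussian assumption is then used not for universality but for rotational invariance: since $Q^\top Q=I_{N_j-R}$, the matrix $Q^\top\mathbf{U}_{(j)}$ is again i.i.d.\ $N(0,\sigma^2)$, so El Karoui applies to it verbatim. The dimension hypothesis $N_j/\tau+p_j/(N_j\tau)=o(1)$ is exactly what makes the $O_P(N_j+p_j/N_j)$ remainder $o_P(\tau)$. Your route could in principle be pushed through, but the paper's orthogonal-complement plus block-perturbation argument is shorter and delivers the needed error rate with no appeal to spiked-model theory.
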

Note that the rate condition for $\tau$ is satisfied in the 3-dimensional case when $N_1\sim N_2\sim N_3$.  Theorem~\ref{thm:test} leads to the following testing procedure:
\begin{enumerate}
	\item Let $(Z_i)_{i=1}^m$ be $m$ independent random variables drawn from the same distribution as $Z$. To approximate the distribution of $(\xi_1,\xi_2,\dots)$, we use the eigenvalues of a symmetric $N_j\times N_j$ Gaussian matrix $\Xi=(\zeta_{i,j})$ with $\zeta_{i,j}\sim_{i.i.d.}N(0,\tau_{i,j})$ with $\tau_{i,j}=1$ if $i<j$ and $\tau_{i,j}=2$ for $i=j$.
	\item Compute the p-value $p_j = 1-F_m(S_j)$ for each $1\leq j\leq d$, where $F_m(x)=\frac{1}{m}\sum_{i=1}^m\one_{Z_i\leq z}$.
\end{enumerate}
The p-values would correspond to the null hypothesis that there are at most $k$ factors for the mode-$j$ matricization. The joint tests for the number of factors across all matricizations can be obtained using the Bonferroni's correction. The $\alpha$-level test would reject $H_0$ whenever
\begin{equation*}
	p_j \leq \alpha/d. 
\end{equation*}
In practice, one could run the test for several different values of $k$ to determine the number of factors and control the size with Bonferroni correction. Note also that the test statistic $S_j$ is decreasing with $k$, which implies that p-values are increasing with $k$.

\begin{remark}
	In contrast to \cite{onatski_ecma_2009}, the dimensions of matrices obtained from tensor matricization do not grow proportionally and the Tracy-Widom asymptotics is recovered thanks to \cite{karoui2003largest}. Note also that in our case we have the type-1 Tracy-Widom distribution. 
\end{remark}

\section{Monte Carlo Experiments
\label{sec:sim}}

The objective of this section is to assess the finite sample properties of our estimation procedure.

\subsection{Simulation Design}\label{sec:dgp}
We consider the Tucker model for the 3-way tensor for $N\times J\times T$ tensor
\begin{equation*}
	\mathbf{Y} = \mathbf{G}\times_1\Lambda\times_2 M\times_3 F + \mathbf{U},
\end{equation*}
where the core tensor is $\mathbf{G}\in\mathbb{R}^{R_1\times R_2\times R_3}$, the loadings matrices are $\Lambda\in\mathbb{R}^{N\times R_1}$ and $M\in\mathbb{R}^{J\times R_2}$, and the factor matrix is $F\in\mathbb{R}^{T\times R_3}$. 

\medskip

We set $R_1=1$ and $R_2=R_3=2$ and generate the $1\times 2\times 2$ core tensor $\mathbf{G}$ with $1\times 2$ slices
\begin{equation*}
	G_1 = [\sigma_{1}, 0]\qquad \text{and}\qquad G_2=[0, \sigma_{2}],
\end{equation*}
where $\sigma_1>\sigma_2>0$. The corresponding matricizations are
\begin{equation*}
	\mathbf{G}_{(1)} = [\sigma_{1},0,0,\sigma_{2}],\qquad \mathbf{G}_{(2)}=\begin{bmatrix}
		\sigma_{1} & 0 \\
		0 & \sigma_{2}
	\end{bmatrix},\qquad \mathbf{G}_{(3)}=\begin{bmatrix}
	\sigma_{1} & 0 \\
	0 & \sigma_{2}
\end{bmatrix}.
\end{equation*}
Clearly, we have $\mathbf{G}_{(j)}\mathbf{G}_{(j)}^\top$ for $j=1,2,3$ begin diagonal, hence, the core tensor satisfies Assumption~\ref{as:orthogonal}. The smallest singular values of $\mathbf{G}_{(j)}$ are $\sqrt{\sigma_1^2+\sigma_2^2}$, $\sigma_2$, and $\sigma_2$ respectively for $j=1,2,3$. The factor strength is $\delta=\sigma_2$. We first set $\sigma_1=d_1\sqrt{NJT}$ and $\sigma_2=d_2\sqrt{NJT}$ which corresponds to the strong tensor factor model.

\medskip

Therefore, we simulate the $\mathbf{Y}$ tensor with $(i,j,t)^{\rm th}$ observation
\begin{equation*}
		y_{i,j,t} = \sigma_{1}\lambda_{i}\mu_{j,1}f_{t,1} + \sigma_{2}\lambda_{i}\mu_{j,2}f_{t,2} + u_{i,j,t},\qquad u_{i,j,t}\sim_{\rm i.i.d.}N(0,s^2_u).
\end{equation*}
Note that this rank-$(1,2,2)$ Tucker tensor factor model features fewer parameters than the rank-2 CP model, where the first loading vector would be different in the two terms.

\medskip

The rest of the DGP is as follows. We generate $T\times 2$ matrix of factors from two independent AR(1) processes
\begin{equation*}
	\dot{f}_{t,r} = \rho \dot{f}_{t-1,r} + \varepsilon_{t,r},\qquad \varepsilon_{t,r}\sim_{\rm i.i.d.} N(0,s_{\varepsilon}^2), \qquad r=1,2.
\end{equation*}
For the factor series $\dot f_r\in\mathbb{R}^T$, we compute $f_r = \dot f_r/\|\dot f_r\|$ to ensure that the factor matrix satisfies Assumption~\ref{as:orthogonal}. All loading vectors are generated by taking norm-1 eigenvectors of a symmetric positive semi-definite matrix $A^\top A,$ where each element is $A_{i,j}\sim_{\rm i.i.d.} U(0,1).$

%

\subsection{Small Sample Properties}\label{sec:finite}
In this subsection, we first assess how changing sample sizes affects the estimation accuracy of factors and loadings, and show that the estimation improvements are perfectly aligned with the convergence rates given in Theorem~\ref{thm:rates}. We also show that the finite sample distribution of the estimation error is aligned with the asymptotic distribution shown in Theorem \ref{thm:clt}.

\smallskip 

For convergence rate, we simulate the 3-way tensor as described in Section \ref{sec:dgp}
where the parameters are set to be: (1) the AR(1) process $\dot{f}_r$ takes $\rho$ = 0.5 and $s_\varepsilon$ = 0.1; (2) the signal strength $d_1=2, d_2=1$ and the noise strength $s_u$ = 1; (4) the sample size of the baseline model is $(N,J,T)$=(30,30,30), and we compare the estimation error of the baseline model with that of the modified model. We consider three different cases of modification: (a) doubling sample sizes of all dimensions, $(N,J,T)$ = (60,60,60), (b) doubling the sample sizes of two dimensions, $(N,J,T)$ = (60,60,30), (c) doubling the sample size of only one dimension, $(N,J,T)$ = (60,30,30). We evaluate the estimates using the $\ell_2$ norm. Although the rank for the $M,F$ is 2, the convergence rates are the same for the first and the second rank and it is enough to examine the first rank. As the signs of $\hat\lambda_r, \hat\mu_r$, and $\hat f_r$ are undetermined, we calculate the  errors as follows: 
\begin{equation}\label{eq:l2err}
	\begin{aligned}
		\mathbb{L}_\lambda &=  \|\hat\lambda_r\times \mathrm{sign}(\hat\lambda_r^\top\lambda_r) - \lambda_r\|, \\
		\mathbb{L}_\mu &= \|\hat\mu_r \times \mathrm{sign}(\hat\mu_r^\top\mu_r)- \mu_r\|,  \\
		\mathbb{L}_f &= \|\hat f_r \times \mathrm{sign}(\hat f_r^\top f_r)- f_r\|,
	\end{aligned}
\end{equation}
where $\mathrm{sign}(a)=\one_{a>0} - \one_{a<0}$.

\smallskip

Figure \ref{fig:sizes} plots the histograms of the $\ell_2$ losses of the baseline versus modified DGPs. In panel (a) - (c), as we double the sizes of all dimensions, the estimation of the factor $\hat f_1$ and loadings $\hat \lambda_1,$ $\hat\mu_1$ all improve. The average error is reduced roughly by a half for the factor and two loadings vectors. This is aligned with the convergence rate in Theorem~\ref{thm:rates}, since doubling all $3$ dimensions of a tensor reduces the $\ell_2$ error by $1/2$. In panels (d) - (f), when we only double $N$ and  $J,$ the improvement for the average error of $\hat\lambda_1$ is 0.01/0.015, while the improvement for $\hat\mu_1$ is 0.012/0.017. Both are roughly aligned with the reduction in the $\ell_2$ error by $1/\sqrt{2}$. On the other hand, the improvement for $\hat f_1$ is 0.0083/0.017,  which is aligned with the reduction of the $\ell_2$ error by $1/2$. In panels (g) - (i), when we only double $N,$ there is no improvement for $\hat\lambda_1$ because $J$ and $T$ are unchanged in the $O_P(1/\sqrt{JT})$ rate; the improvement for both $\hat\mu_1$ and $\hat f_1$ are 0.012/0.017, which is aligned with the $1/\sqrt{2}$ improvement factor.

\setcounter{subfigure}{0}
\begin{figure}
	\caption{ \label{fig:sizes} Estimation Accuracy of Tensor PCA: Changing Sizes of Dimensions} \vskip0.1in
	{\footnotesize The DGP appears in Section \ref{sec:dgp}. We plot the histograms of $\ell_2$ losses of the estimated factor/loading of baseline vs modified DGP in 5000 MC simulations. The baseline DGP has sizes $(N,J,T)$ = (30,30,30), and modified DGP has sizes $(N,J,T)$ shown in the subtitles. The blue histogram corresponds to the baseline DGP while the orange histogram to the modified DGP with the increased sample size. The dotted line marks the mean of the $\ell_2$ errors.}
	\begin{center}
		\subfigure[{\bf $ \hat\lambda $ - $ \mathbf{60\times 60\times 60} $}]{\includegraphics[width=0.32\textwidth]{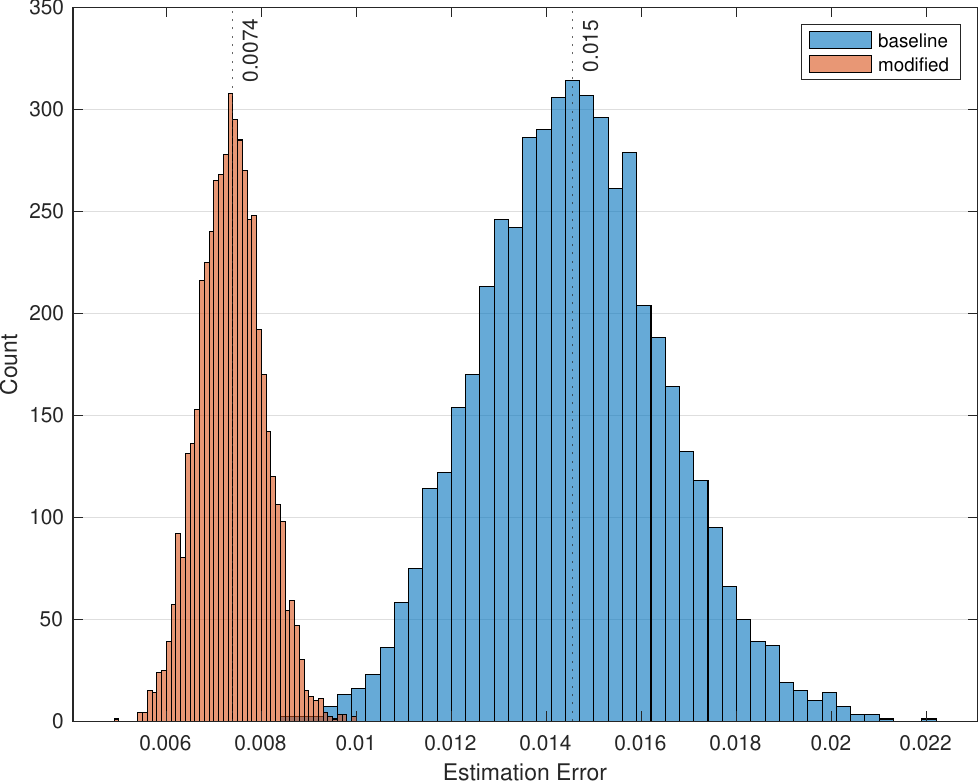}}
		\subfigure[{\bf $ \hat\mu $ - $ \mathbf{60\times 60\times 60} $}]{\includegraphics[width=0.32\textwidth]{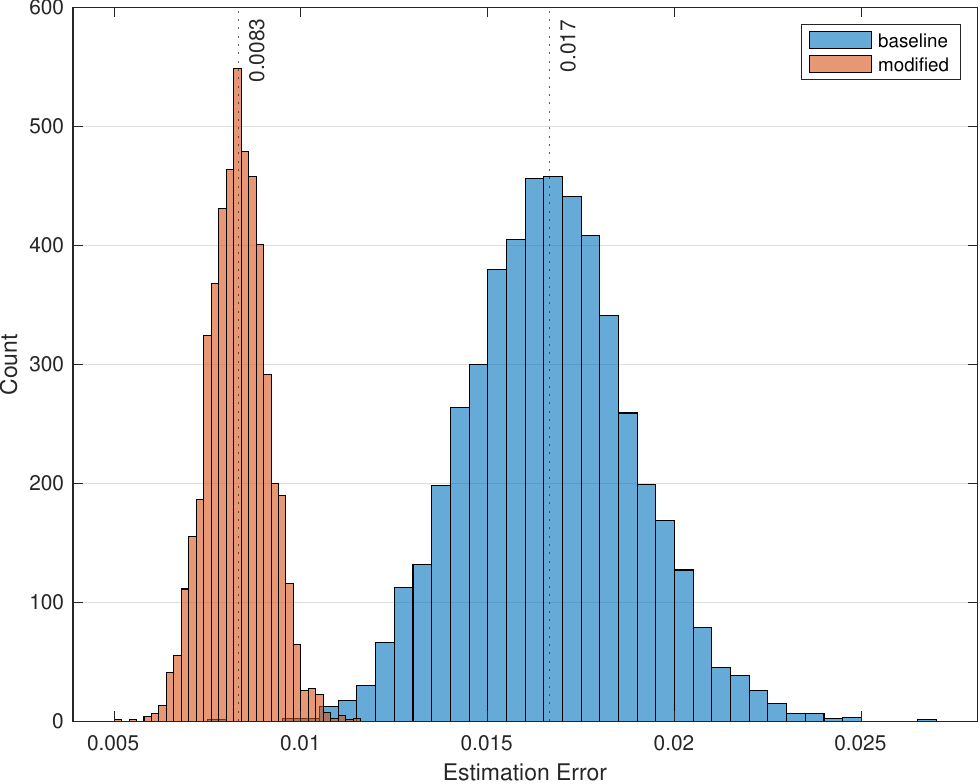}}
		\subfigure[{\bf $ \hat f $ - $ \mathbf{60\times 60\times 60} $}]{\includegraphics[width=0.32\textwidth]{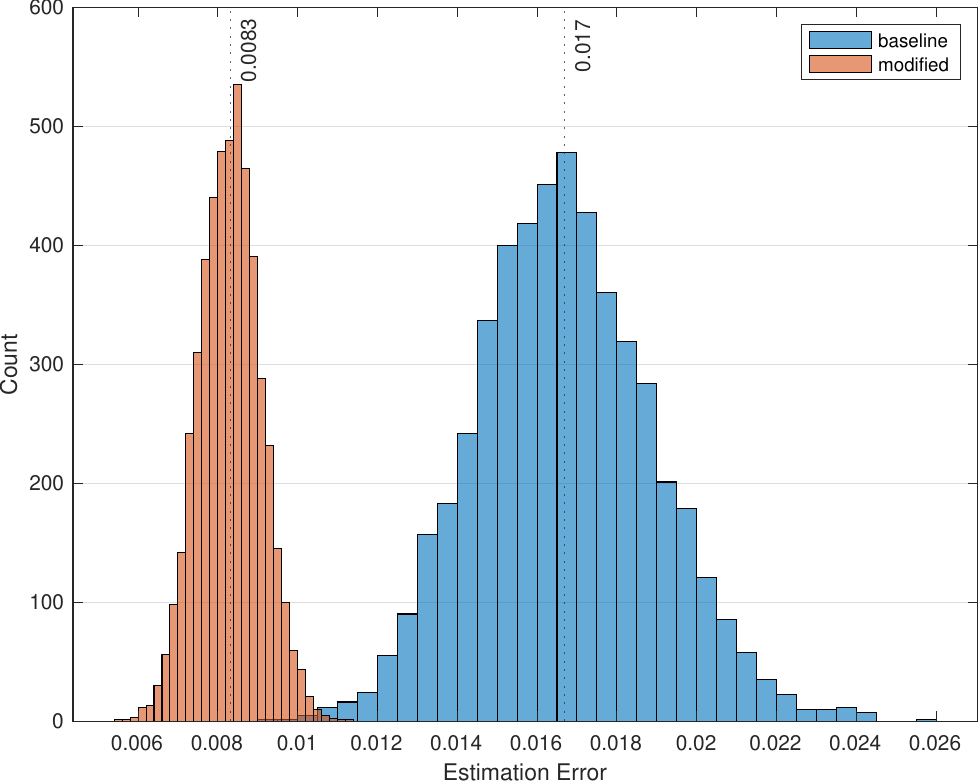}}
		
		\subfigure[{\bf $ \hat\lambda $ - $ \mathbf{60\times 60\times 30} $}]{\includegraphics[width=0.32\textwidth]{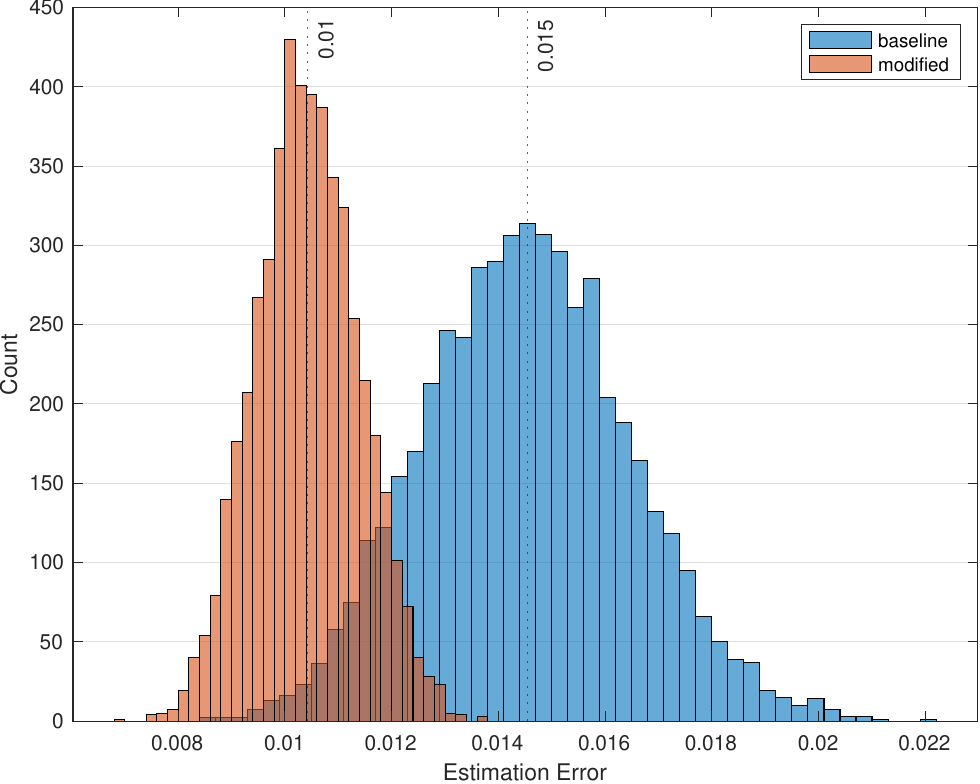}}
		\subfigure[{\bf $ \hat\mu $ - $ \mathbf{60\times 60\times 30} $}]{\includegraphics[width=0.32\textwidth]{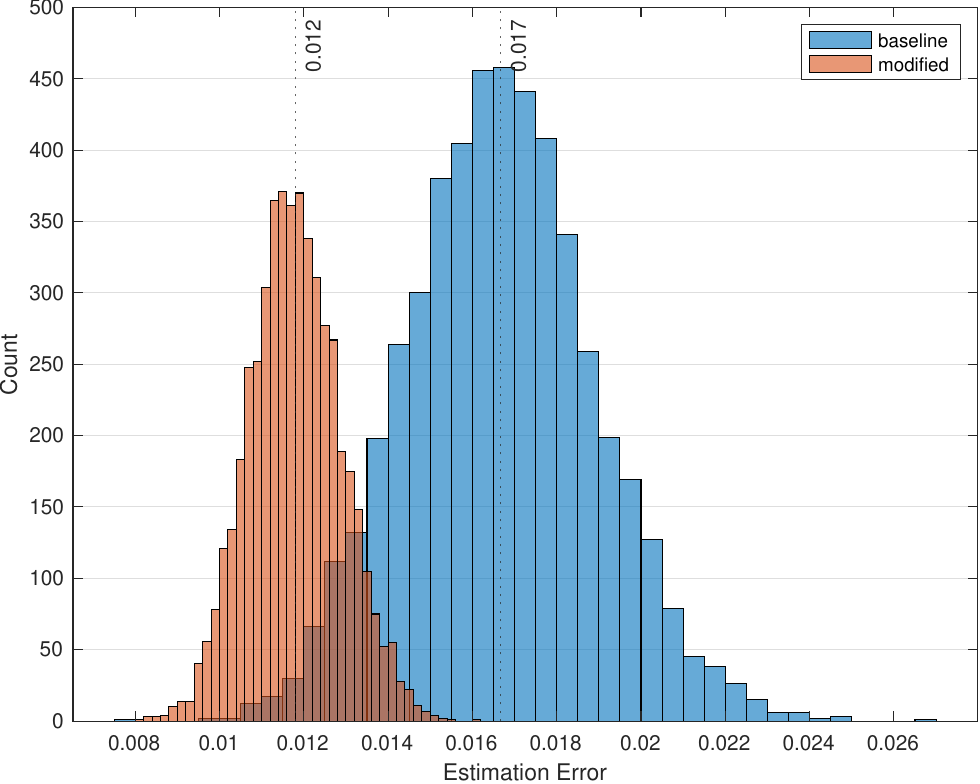}}
		\subfigure[{\bf $ \hat f $ - $ \mathbf{60\times 60\times 30} $}]{\includegraphics[width=0.32\textwidth]{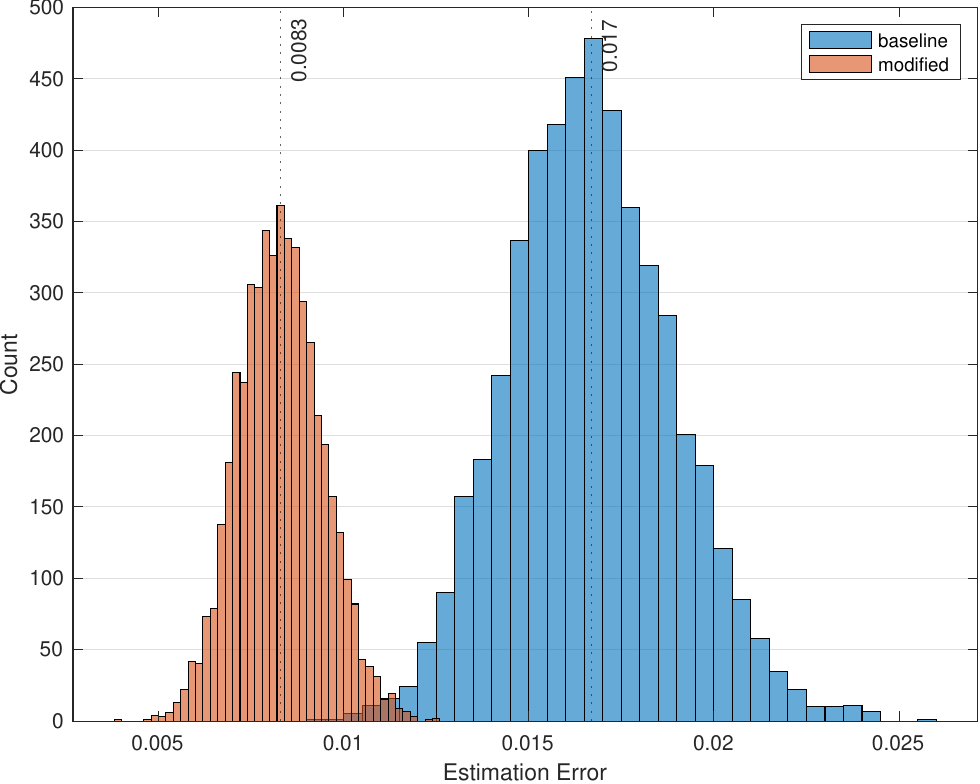}}
		
		\subfigure[{\bf $ \hat\lambda $ - $ \mathbf{60\times 30\times 30} $}]{\includegraphics[width=0.32\textwidth]{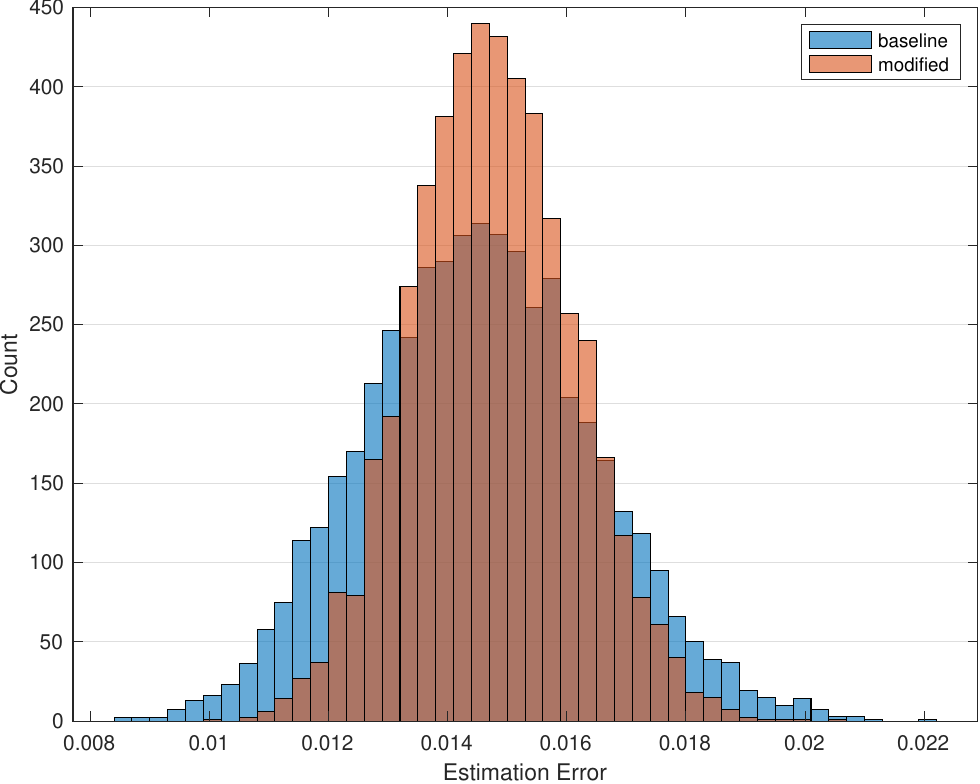}}
		\subfigure[{\bf $ \hat\mu $ - $ \mathbf{60\times 30\times 30} $}]{\includegraphics[width=0.32\textwidth]{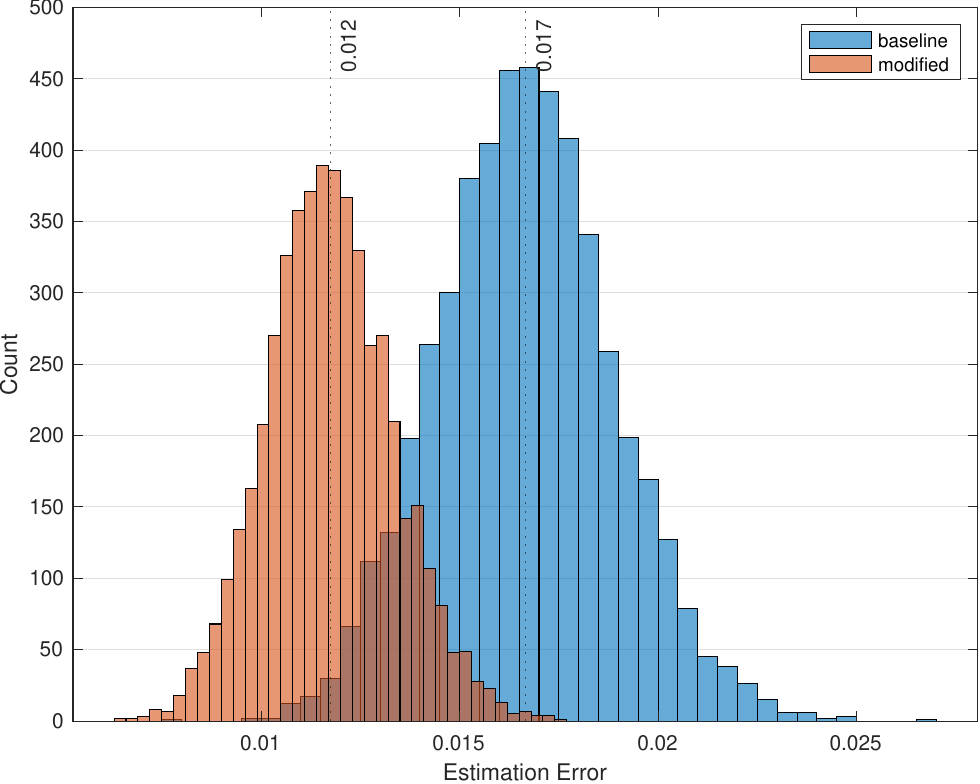}}
		\subfigure[{\bf $ \hat f $ - $ \mathbf{60\times 30\times 30} $}]{\includegraphics[width=0.32\textwidth]{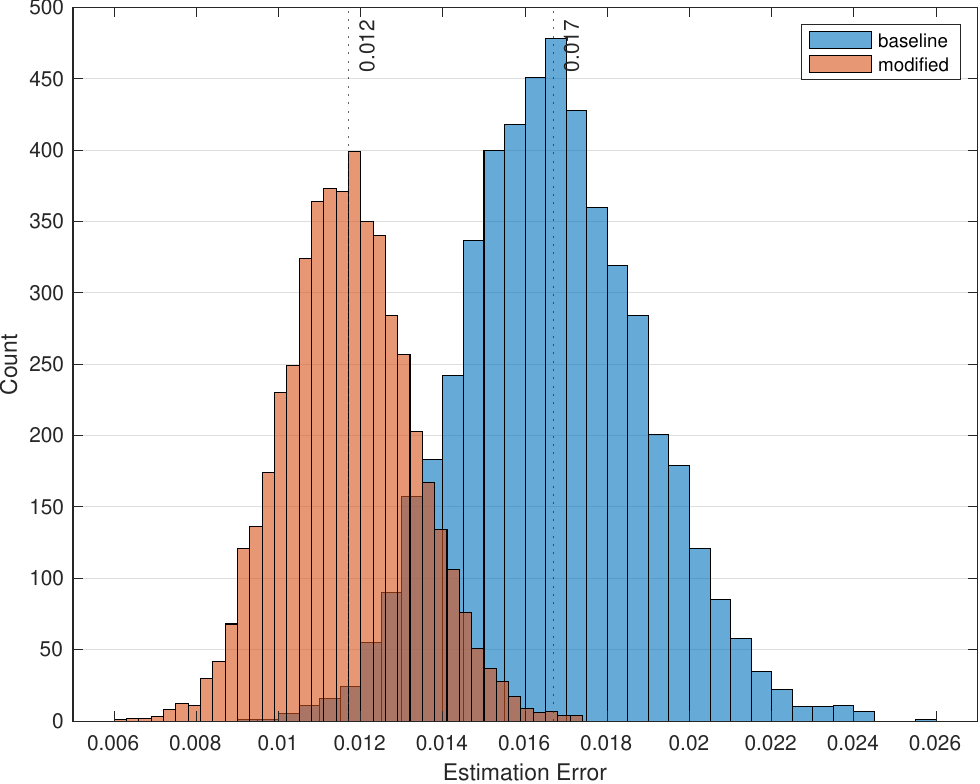}}
	\end{center}
\end{figure}

For the asymptotic distribution, due to symmetry, we focus on inference for the loading vector $\mu_{j,r}\in\mathbb{R}$. We simulate the 3-way tensor using the same parameters as previous simulation except that we fix the sample size at $(N,J,T)=(30,30,30)$. Theorem~\ref{thm:clt} shows that
\begin{equation*}
	d_r s_u^{-1}\sqrt{NJT}(\hat\mu_{j,r} - \mu_{j,r} - b_{j,r}) \xrightarrow{d}N(0,1)
\end{equation*}
which is what we aim to verify with simulations. Figure \ref{fig:clt} reports the histograms of the scaled estimation error $d_r s_u^{-1}\sqrt{NJT}(\hat{\mu}_{j,r} - \mu_{j,r})$ in 5000 MC simulations. For convenience, we only report the results of the first two elements of the loadings $\mu_{j,r}$ with $r=1,2$ and $j=1,2$. It can be seen that the finite sample distribution closely matches the asymptotic distribution, the QQ plot of the errors against the asymptotic distribution again confirms the finding. Interestingly we find that the bias appears to be negligible. Overall, these results show that the predictions of the asymptotic theory are valid in finite samples.

\setcounter{subfigure}{0}
\begin{figure}
	\caption{ \label{fig:clt} Asymptotic Distribution Reconciliation} \vskip0.1in
	{\footnotesize The DGP appears in Section \ref{sec:dgp}. We plot the histograms of the scaled estimation error $d_r s_u^{-1}\sqrt{NJT}(\hat{\mu}_{j,r} - \mu_{j,r})$ in 5000 MC simulations and compare it with the asymptotic distribution - standard normal. We also show the QQ plot of the errors against the asymptotic. For simplicity, we only report the results of the first two elements of the loadings $\mu_{j,r}, r=1,2$ and $j=1,2$ as the theoretical results hold for all elements.}
	\begin{center}
		\subfigure[{\bf Estimation Error - $\mu_{1,1}$}]{\includegraphics[width=0.45\textwidth]{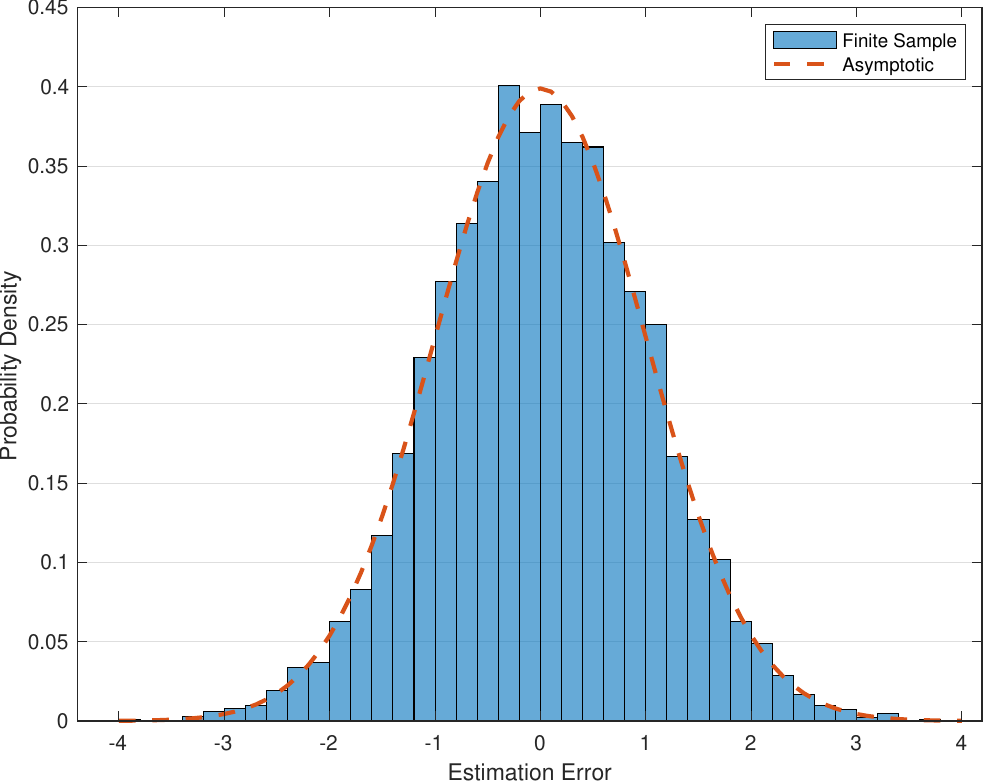}}
		\subfigure[{\bf QQ Plot - $\mu_{1,1}$ }]{\includegraphics[width=0.45\textwidth]{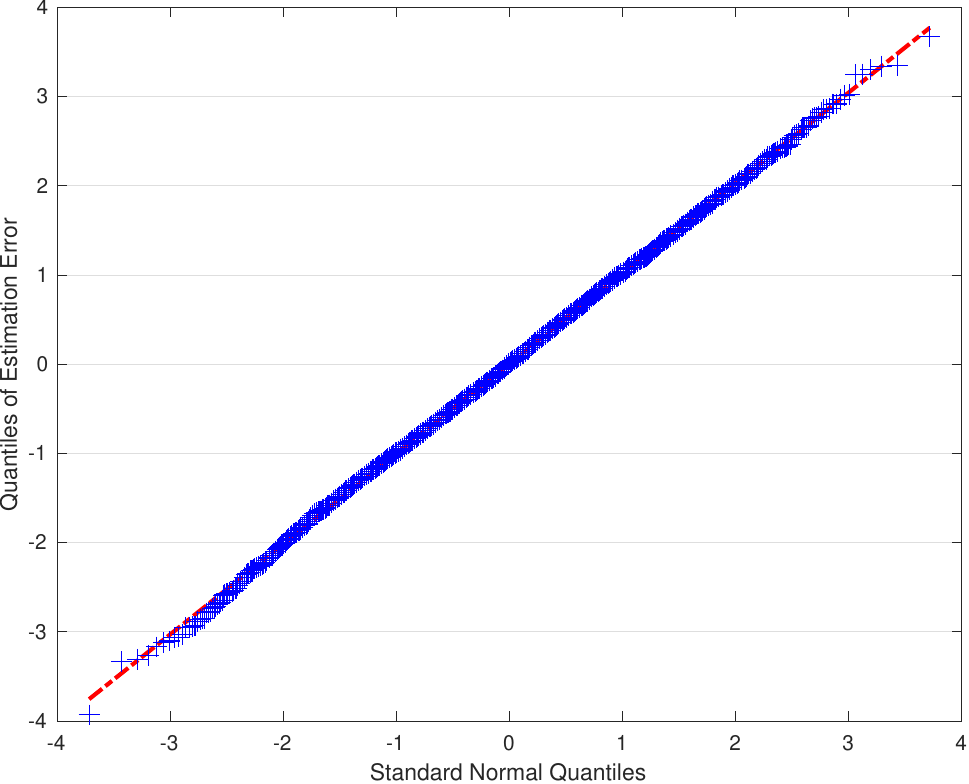}}
		
		\subfigure[{\bf Estimation Error - $\mu_{2,1}$}]{\includegraphics[width=0.45\textwidth]{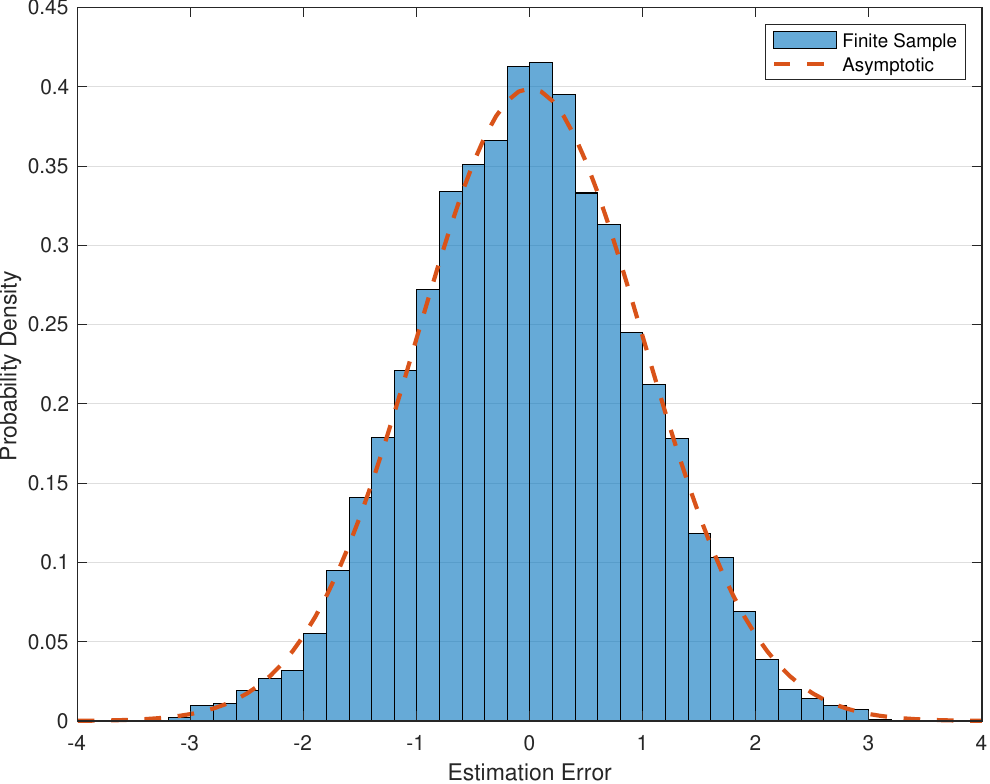}}
		\subfigure[{\bf QQ Plot - $\mu_{2,1}$}]{\includegraphics[width=0.45\textwidth]{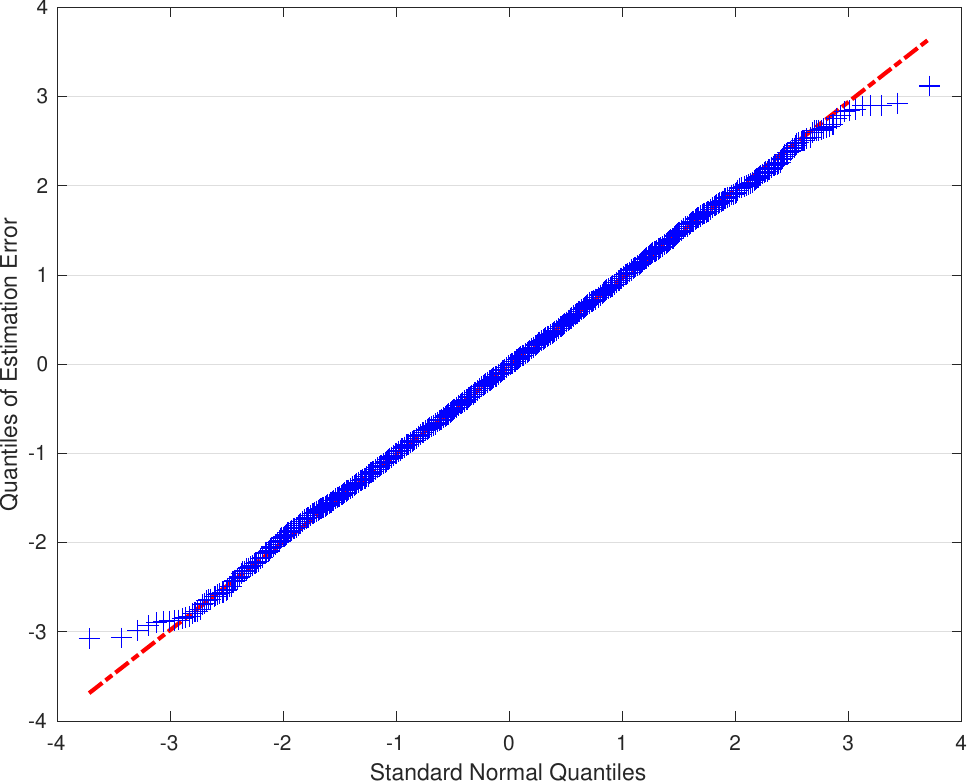}}

	\end{center}
\end{figure}

\setcounter{subfigure}{0}
\begin{figure}
	\caption{ \label{fig:clt2} Asymptotic Distribution Reconciliation (continued)} \vskip0.1in
	{\footnotesize The DGP appears in Section \ref{sec:dgp}. We plot the histograms of the scaled estimation error $d_r s_u^{-1}\sqrt{NJT}(\hat{\mu}_{j,r} - \mu_{j,r})$ in 5000 MC simulations and compare it with the asymptotic distribution - standard normal. We also show the QQ plot of the errors against the asymptotic. For simplicity, we only report the results of the first two elements of the loadings $\mu_{j,r}, r=1,2$ and $j=1,2$ as the theoretical results hold for all elements.}
	\begin{center}
		
		\subfigure[{\bf Estimation Error - $\mu_{1,2}$}]{\includegraphics[width=0.45\textwidth]{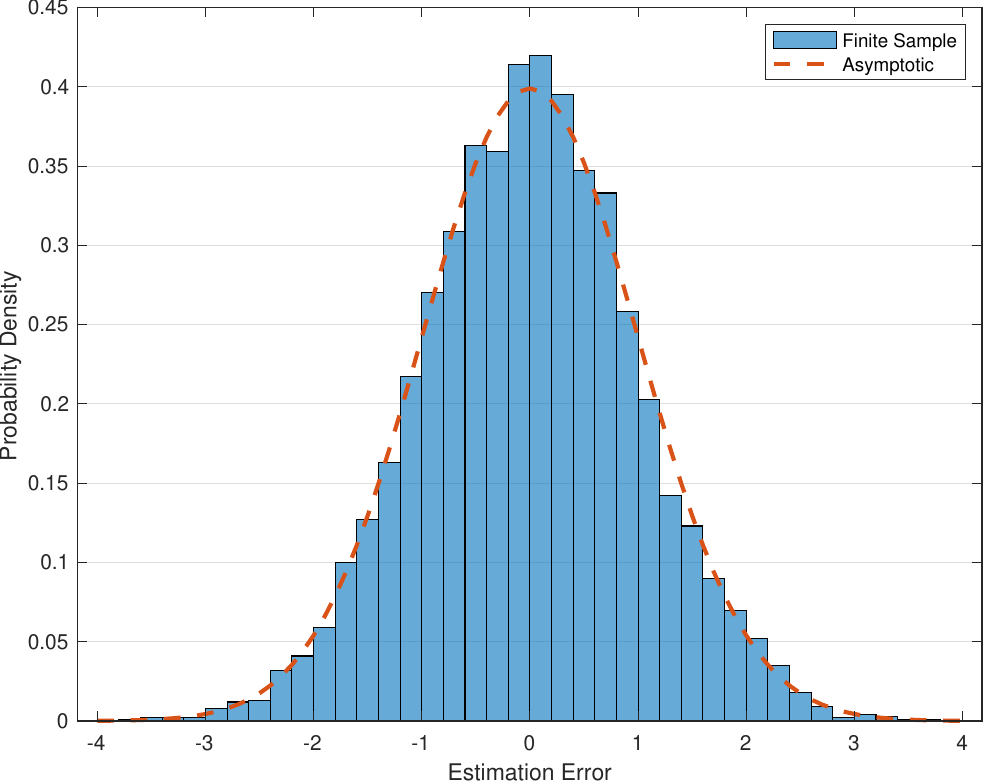}}
		\subfigure[{\bf QQ Plot - $\mu_{1,2}$}]{\includegraphics[width=0.45\textwidth]{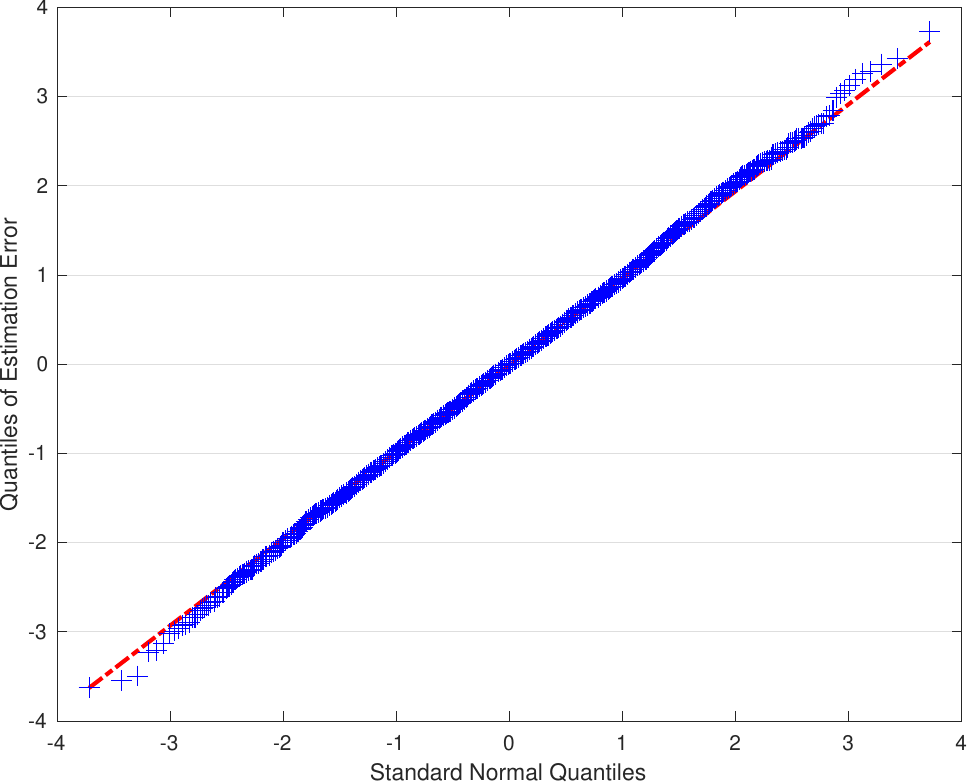}}
		
		\subfigure[{\bf Estimation Error - $\mu_{2,2}$}]{\includegraphics[width=0.45\textwidth]{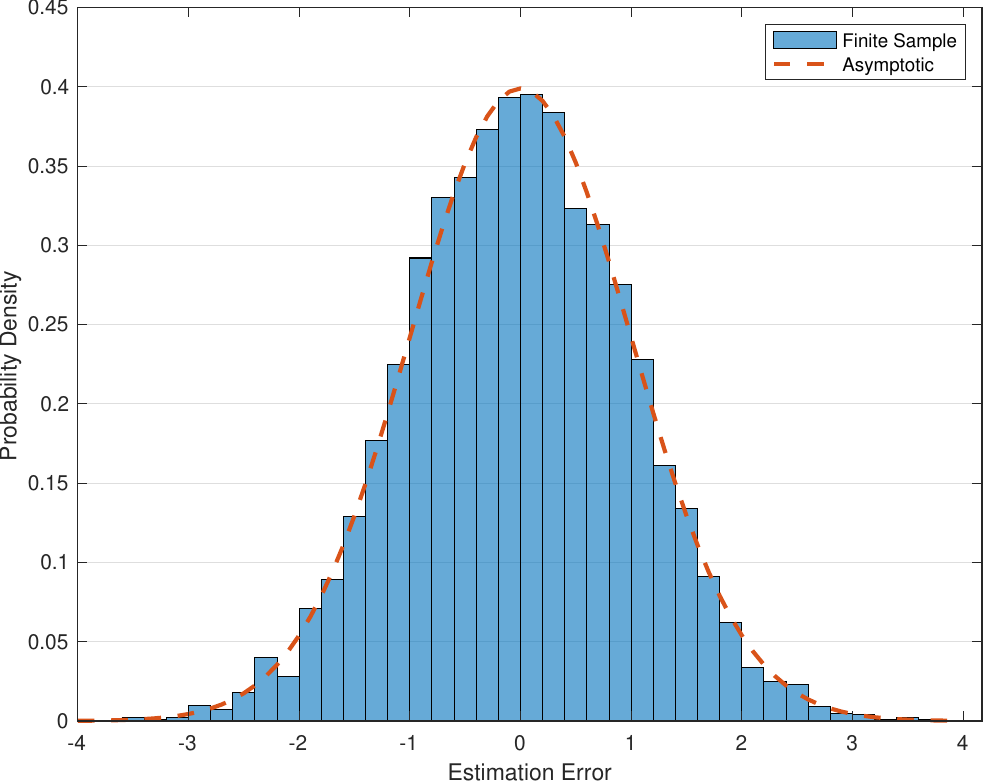}}
		\subfigure[{\bf QQ Plot - $\mu_{2,2}$}]{\includegraphics[width=0.45\textwidth]{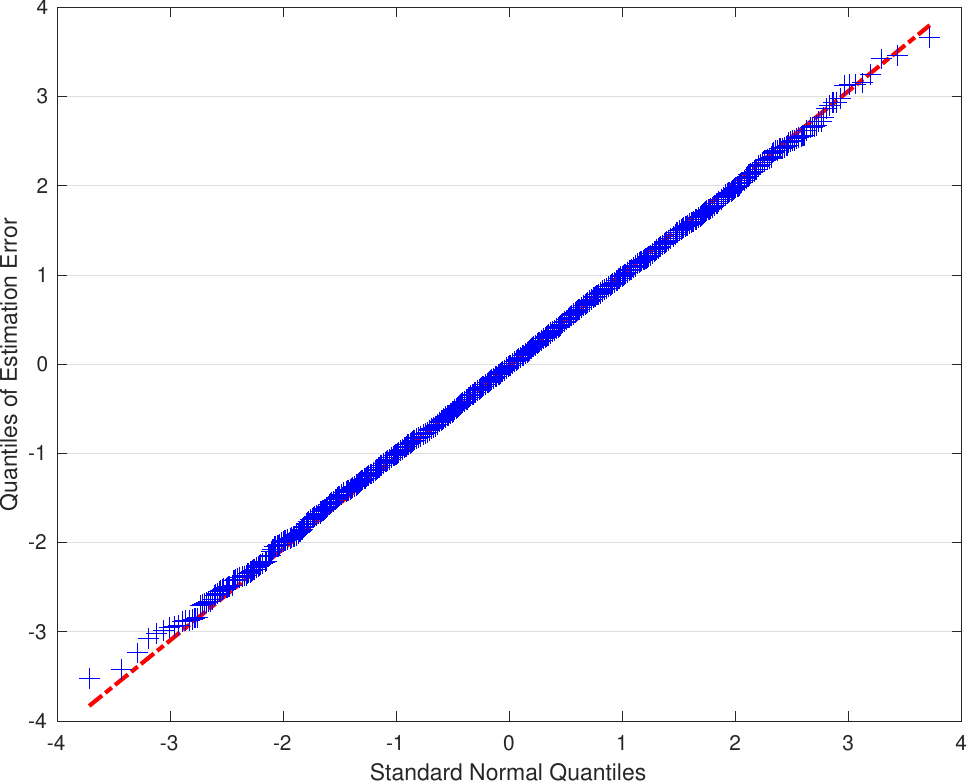}}
		
	\end{center}
\end{figure}

\subsection{TPCA vs. ALS}\label{sec:weak}


In this subsection, we show that ALS (Algorithm \ref{alg:hooi}) has faster convergence rate than TPCA (Algorithm \ref{alg:hosvd}) under weak factor scenarios. According to Theorem \ref{thm:rates} and Theorem \ref{thm:als}, under weak factors, the TPCA rate is $O_P\left(\frac{1}{N^{0.6}} + \frac{1}{N^{0.2}}\right)$
while the ALS rate is $O_P\left(\frac{1}{N^{0.6}}\right)$ for a sufficiently large $N$, provided that the number of iterations is $\bar k$ is large enough. We follow the same DGP as Section \ref{sec:finite} except we set the signal strength $\sigma_1 = \sqrt{(NJT)^{1.6/3}}$ for weak factor. The sizes of all dimensions are set to be the same so that the finite sample properties are the same for all dimensions, and it is enough to examine only one dimension.

Figure \ref{fig:tpcavsals} (a) plots the comparison between TPCA and ALS when all dimensions are doubled from $(30,30,30)$ to $(60,60,60).$ The dotted line marks the mean of the $\ell_2$ errors. It can be noted that ALS shows more improvement in error reduction, from the mean of 0.16 to 0.13, while TPCA error is reduced from 0.21 to 0.18. The error reduction rate of ALS - 0.13/0.16 - is roughly in line with the theoretical value of $1/(2^{0.6})$, and the error reduction of TPCA - 0.18/0.21 - is also roughly aligned with the theoretical value of $1/(2^{0.2}),$ which is slightly slower under weak factors.

Figure \ref{fig:tpcavsals} (b) shows the improvement of ALS over TPCA under weak factors. It can be noted that ALS with 2 iterations already improves TPCA, and ALS with 10 iterations further improves the estimation, though by slightly less. 

\setcounter{subfigure}{0}
\begin{figure}
	\caption{\label{fig:tpcavsals} TPCA vs. ALS when doubling dimensions}
	\vskip0.1in
	{\footnotesize The weak factor DGP appears in Section \ref{sec:weak}. We plot the histograms of $\ell_2$ losses of the estimated factor/loading of TPCA vs ALS in 5000 MC simulations. The baseline DGP has sizes $(N,J,T)$ = (30,30,30), and it is compared to the DGP has double the sizes $(N,J,T)$ = (60,60,60). The dotted line marks the mean of the $\ell_2$ errors. The plot (a) shows that ALS has larger improvement than TPCA under weak factors, and the plot (b) shows how ALS iterations improves the convergence rate.}
	\begin{center}
		\subfigure[{\bf TPCA vs. ALS}]{\includegraphics[width=0.45\textwidth]{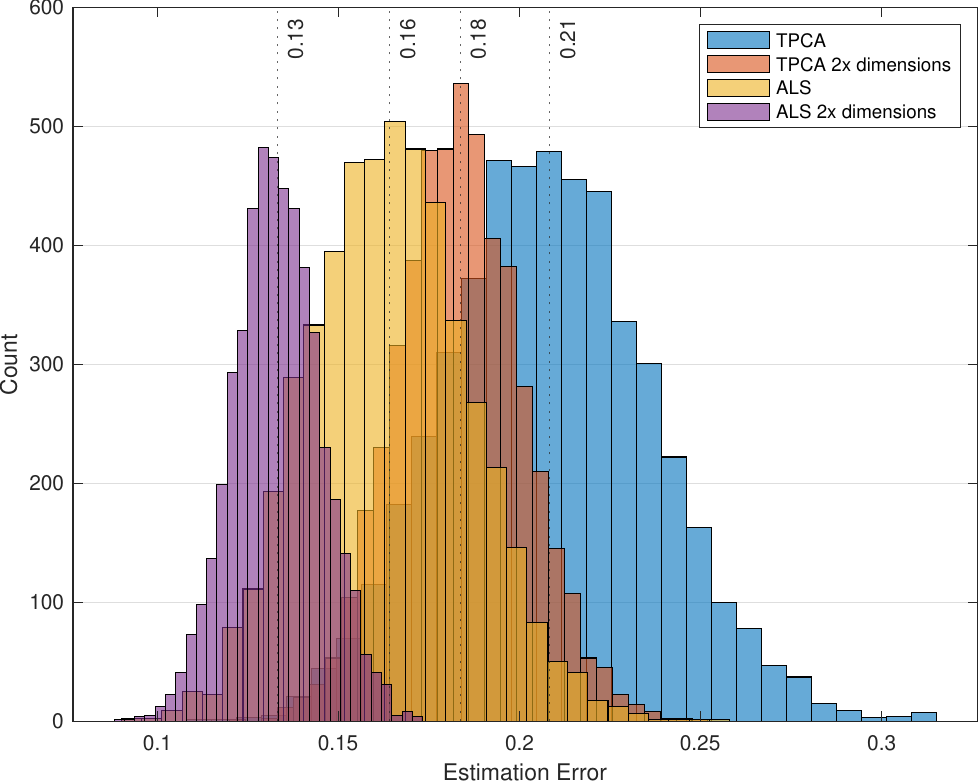}}
		\subfigure[{\bf ALS iterations}]{\includegraphics[width=0.45\textwidth]{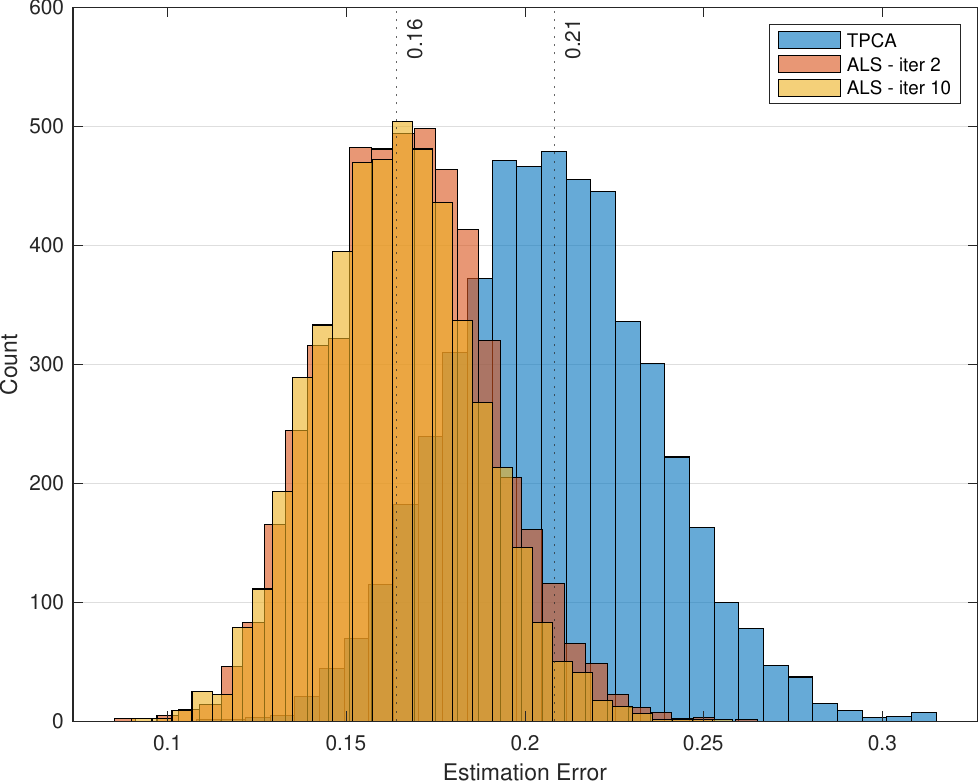}}
	\end{center}
\end{figure}

\subsection{Testing the Number of Factors}
We conclude with power properties of our maximum eigenvalue ratio test for the number of factors discussed in Section \ref{sec:test}. The DGP is generated as described in Section \ref{sec:dgp}. We generate rank $(1,2,2)$ model, and test the null hypothesis that the rank is 1 against the alternative that there are more than 1 but less than $K$ factors for each matricization. 

\smallskip

The parameters are designed as follows: (1) the scale component is $\sigma_r$ = $d_r \times \sqrt{NJT}$ with $d_1=2$ and we gradually increase $d_2$ to study the power properties of the test, so when $d_2=0$, the empirical rejection probability corresponds to empirical size of the test, and none zero $d_2$ corresponds to empirical power for rank 2 dimensions, (2) we study cases where the idiosyncratic errors are generated with Gaussian distribution or Student's t distribution (the degrees of freedom are 5), and in both cases the variance of the errors is normalized to be $s_u=1,$ (3) we study finite sample properties of the test using 3-way tensor, and the sizes of the tensors is $(N,J,T) = 30\times40\times 50$, with the first dimension being the smallest.

%

\smallskip

We perform the test for the number of factors as discussed in Section \ref{sec:test}, where we approximate the asymptotic distribution of the statistics by randomly generating Gaussian matrices 5,000 times, and we also replicate the simulations for 5,000 times to calculate the empirical rejection probability for each scenario.

\setcounter{subfigure}{0}
\begin{figure}
	\caption{ \label{fig:power1} Testing the Number of Factors: Power Curves} \vskip0.1in
	{\footnotesize The power curves are plotted for testing the null hypothesis of 1 factor against alternative of more than $1$  but less than $K$ factors, with $K=3,5,7$. Plots (a) through (c) report the tests for each matricization separately for a 3rd order tensor with Gaussian errors.}
	\begin{center}
		\subfigure[{\bf 1st Matricization}]{\includegraphics[width=0.45\textwidth]{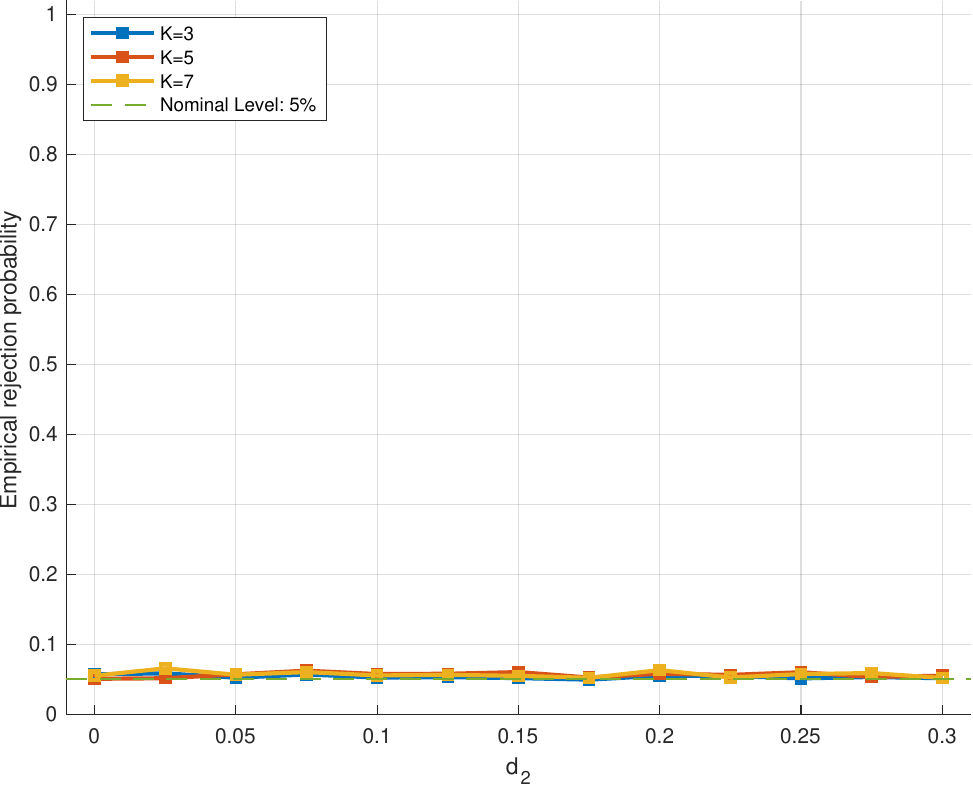}}
		\subfigure[{\bf 2nd Matricization}]{\includegraphics[width=0.45\textwidth]{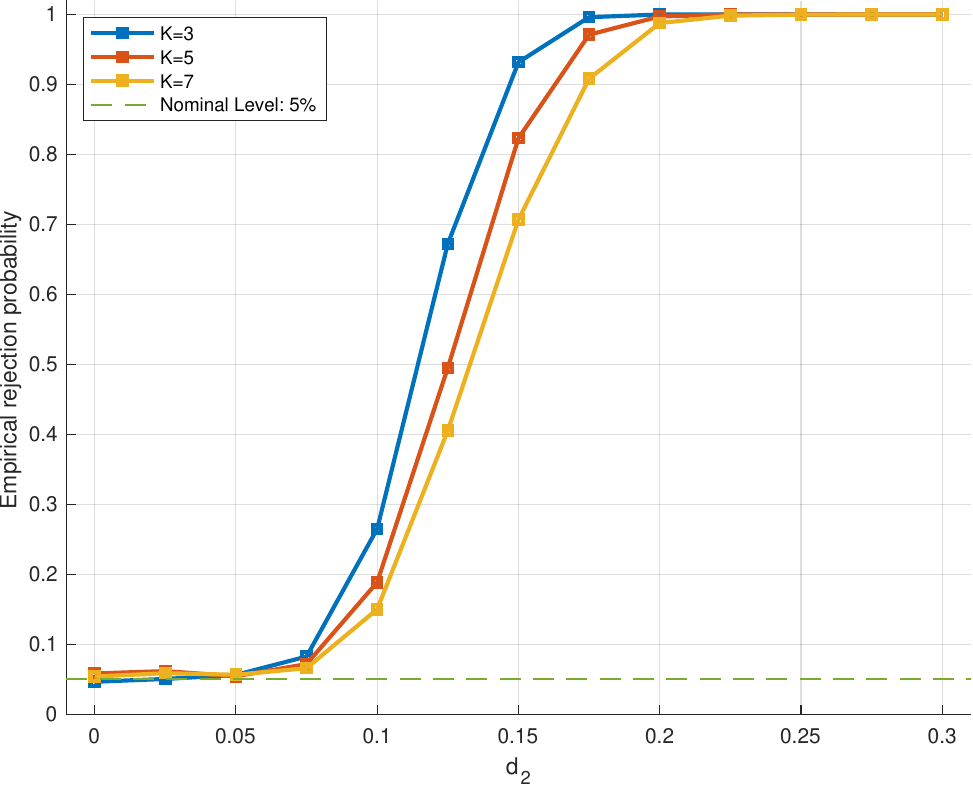}}
		
		\subfigure[{\bf 3rd Matricization}]{\includegraphics[width=0.45\textwidth]{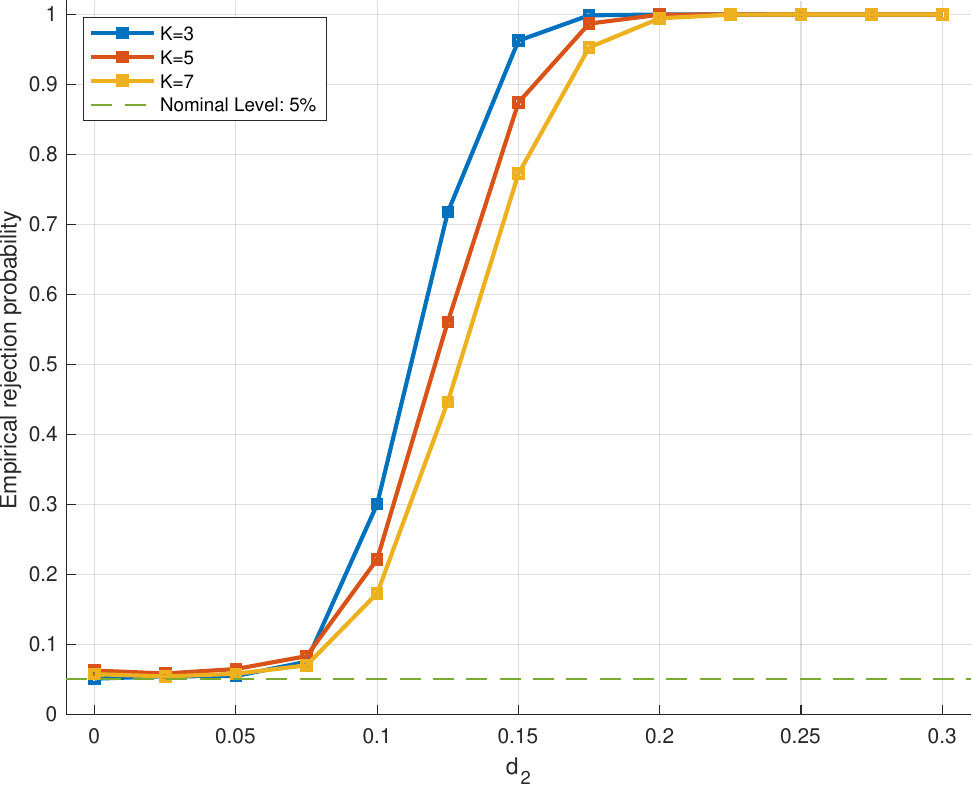}}
	\end{center}
\end{figure}

\setcounter{subfigure}{0}
\begin{figure}
	\caption{ \label{fig:power2} Testing the Number of Factors: Power Curves for Different DGPs} \vskip0.1in
	{\footnotesize The power curves are plotted for testing the null hypothesis of 1 factor against the alternative of 2-5 factors, for cases of 3-dim tensor normal/t-distribution errors. We report the empirical power of testing on individual matricizations.}
	\begin{center}
		\subfigure[{\bf 3 dim - Gaussian Errors}]{\includegraphics[width=0.45\textwidth]{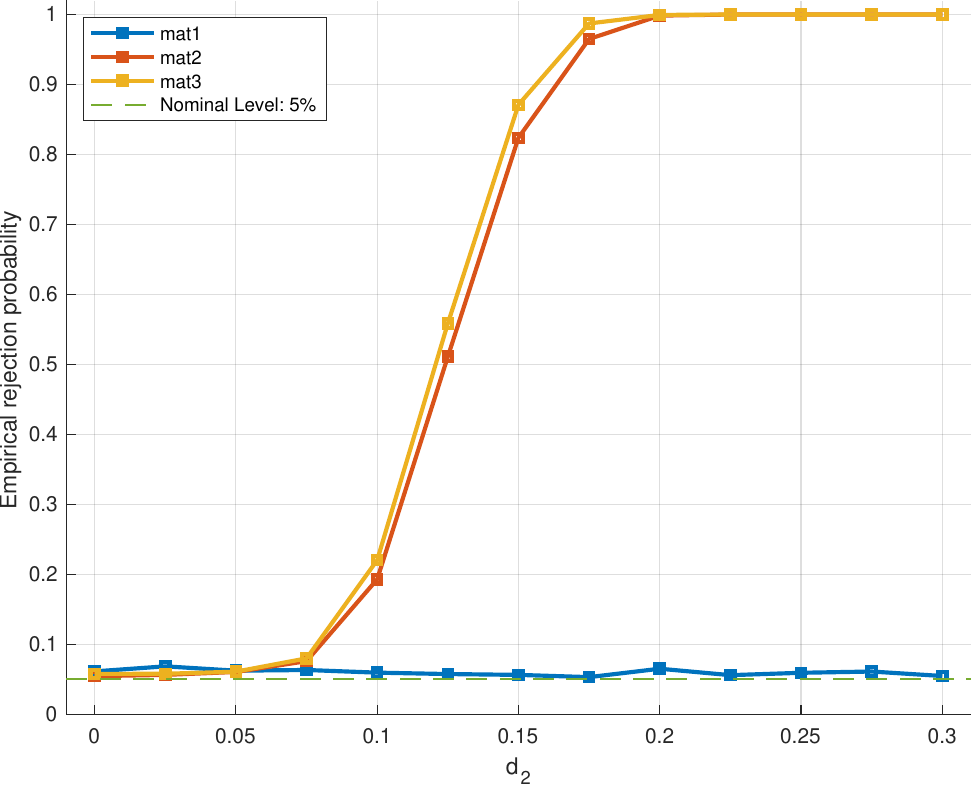}}
		\subfigure[{\bf 3 dim - t Errors}]{\includegraphics[width=0.45\textwidth]{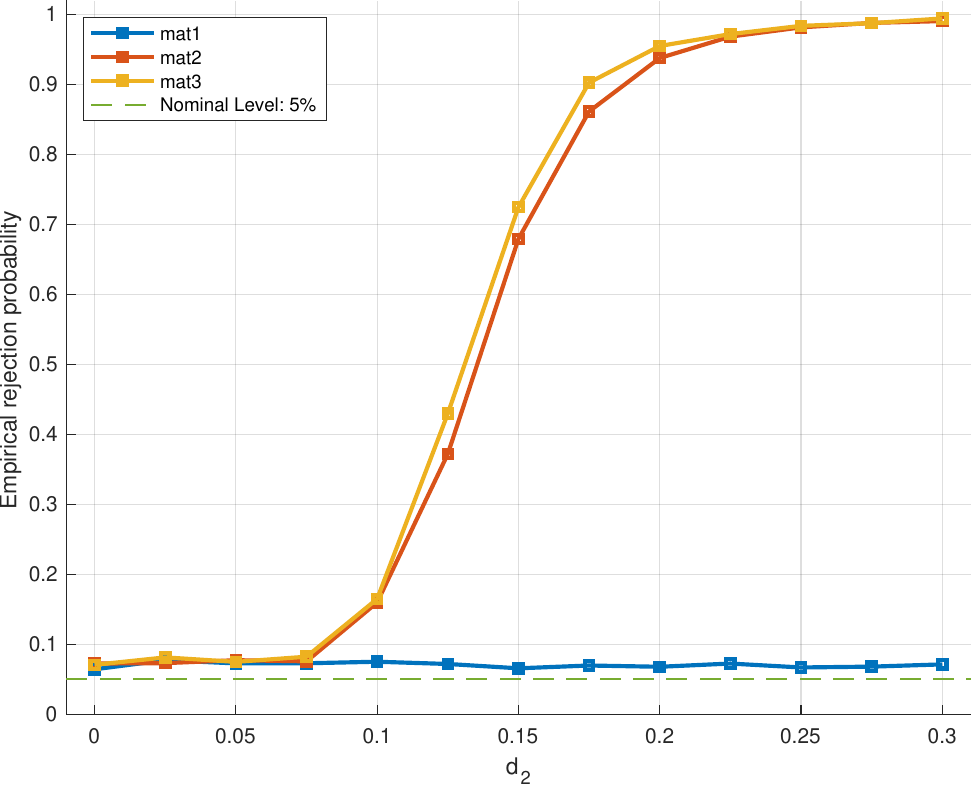}}
		
	\end{center}
\end{figure}

Figure \ref{fig:power1} reports the empirical rejections probabilities of the test for the null hypothesis of 1 factor against alternative of more than 1 factor but less than $K$ factors, with $K=3,5,7$, on a $3^{\text{rd}}$ order tensor. The test is performed on each matricization individually and plotted separately. The power curves show that the empirical size of test is very close to the nominal level of $5\%$,\footnote{Since the rank is 1 for the first dimension, the power stays at the nominal level.} and the empirical power of the test reaches $1$ as the strength of the second factor $d_2$ increases. The comparison among different $K$ values implies that, when the true number of factors is within range of the alternative hypothesis, the tighter the range of the alternative is, the more likely we reject the null when the alternative is true. This finding is true no matter which matricization we perform the test on.

\smallskip

We also make comparisons of the test for different types of errors. The two plots of Figure \ref{fig:power2} correspond to tensor with: (a)  Gaussian errors, (b) Student's t distributed errors. Among different matricizations, the dimension with the larger size (`mat3' ) tend to climb faster to one than smaller size (`mat2'), meaning the matricization with the larger dimension gives the higher empirical power.  

\smallskip

Finally, while our simulation results rely on the critical values that assume Gaussian errors, we learn from Figure \ref{fig:power2} (b) that the test performs well for non-Gaussian distributions albeit with some loss of power.

\section{Empirical Illustration: Dealing with Missing Firm Characteristics
\label{sec:empirical}}

\cite{bryzgalova2022missing} drew attention to the phenomenon of missing data in
firm characteristics, which are the cornerstone of academic research in finance. They provide a comprehensive analysis of missing data in firm characteristics, and show that patterns of missing characteristics vary substantially across characteristics. As a remedy they propose a statistical factor model for imputing missing values and investigate the impact of missingness on asset returns.\footnote{Factor models can also be used to impute the missing counterfactual potential outcomes; see \cite{bai2021matrix}.}

Firm characteristics data are of the tensor type. Indeed, for each firm, one has across time a set of observed or missing characteristics. The statistical methods proposed by \cite{bryzgalova2022missing} do not exploit the tensor data structure, while in this empirical application we do and we show that using our proposed tensor data models improve on their time series of cross-sectional models.

We conduct the empirical study using the dataset created by \cite{freyberger2020dissecting}. The data cover 35 firm characteristics, which are described in Table \ref{tab:chars}. The dataset is monthly and ranges from January 1966 to December 2020, and there are a total of 13588 firms in the entire sample. This number is smaller than 22630 in \cite{bryzgalova2022missing}, because we only have access to the pre-cleaned dataset that has no missingness in the cross-section of characteristics, i.e., all characteristics exist for all firms at any time period. However, this difference does not affect the empirical results because the RMSE are calculated based on simulated (masked) missing characteristics both in this paper and in \cite{bryzgalova2022missing}. The results calculated based on cross-sectional factor model are also very close to those in \cite{bryzgalova2022missing}. This tensor dataset is highly unbalanced where only $15.78\%$ of the data are not missing, and the average length of non-missing characteristics of all firms is 104.16 months.

Prior to estimating the tensor factor and the cross-sectional factor models, the raw characteristics are converted into rank quantiles within range $\left[ -0.5,0.5\right].$ This is done because: 1) different characteristics have different scales, and PCA has best performance when applied to data of similar scales, 2) centered rank-normalized characteristics are stationary in the cross-section and over time, and 3) it handles outliers. In practice, the predicted missing rank quantiles can be converted back to the original scale of the raw data. 

Instead of fitting cross-sectional two-way factor models to each slice of the tensor (i.e., each matrix at period $t$), we fit three-way factor models on non-overlapping 5-year horizons. There is a considerable amount of parameter proliferation when estimating time series of cross-sections, thus we would expect a better performance from the parsimonious tensor-based approach. We compare our model to the traditional two-way approach in terms of imputation accuracy, and several other alternatives are also considered. See Section \ref{appsec:imputation} for detailed description of the model and other approaches being compared.

\subsection{Evaluation Metrics}

Similar to \cite{bryzgalova2022missing}, we consider the standard evaluation metrics for model prediction errors - RMSE (root-mean-squared errors). The aggregate RMSE is averaged over all firms, characteristics, and time periods, which is calculated as follows:

\begin{equation*}
	\textrm{RMSE} = \sqrt{\frac{1}{T} \sum_{t=1}^{T} \frac{1}{L} \sum_{\ell=1}^{L} \frac{1}{N_t} \sum_{i=1}^{N_t} (C_{i,t,\ell} - \hat{C}_{i,t,\ell})^2}.
\end{equation*}

%

We report both in-sample (IS) and out-of-sample (OOS) RMSE. For measuring out-of-sample RMSE, we randomly mask observed characteristics and calculate RMSE comparing predicted characteristics and the true masked ones. We take the same appraoch as \cite{bryzgalova2022missing} to randomly mask $10\%$ of the characteristics using two schemes:
\begin{enumerate}
	\item Missing Completely at Random (MCAR). $10\%$ of the characteristics are masked completely at random.
	\item Block Missing. We again mask $10\%$ of the characteristics. In order to capture the time-series pattern of the missingness, we randomly mask characteristics in blocks of one year. Based on the observational result of missing characteristics in \cite{bryzgalova2022missing}, $40\%$ of the blocks are at the beginning.
\end{enumerate}

We also report the $R^2$ that measures the explained variation of the model relative to the total variation, which is a transformation of the RMSE.

\subsection{Number of Factors for Characteristics Tensors}

Estimating both the tensor and cross-sectional factor models involves determining the number of factors. As noted in \cite{bryzgalova2022missing}, the optimal number of factors for the cross-sectional models is between 6 and 8, and they select 6 factors as their final model. In comparison, we use the test introduced in Section \ref{sec:test} to determine the number of factors for each dimension. The test is conducted with the following hypothesis:
\begin{equation*}
	\mathrm{H_0}:\; \text{$\leq k$ factors} \qquad \rm{vs.} \qquad H_1:\; \text{the number of factors is $>k$, but $\leq K$}.
\end{equation*}
where $K = k+1$ and we increase $k$ from $1$ to $20$ ($20$ is the upper bound of the number of factors we deem reasonable). Since the test is conducted multiple times on the 5-year sub-samples, we use Bonferroni correction to reduce the probability of type I errors. 

Figure \ref{fig:nfactors} plots the p-values of test where the x-axis is test parameter $k$, and for each $k$ there are 11 dots representing the p-value of each 5-year sub-sample. The test is conducted on each dimension and for each missing scheme - missing-completely-at-random (MCAR) and missing in blocks of one year (Block). The red dash line is the Bonferroni adjusted significance level. The test result suggests that there is large number of factors in the characteristic dataset (i.e., there is no sharp change in the eigenvalues detected). Therefore, we select 20 factors as the baseline model for optimal results\footnote{The test result is aligned with the empirical finding that, when the number of factors is smaller than 20, the more factors the better the imputation performance. We also considered several alternative testing methods leading to the same conclusion.}, and we also present the results with 6 factors for a comparison with the cross-sectional model.

\begin{figure}
	\caption{ \label{fig:nfactors} Testing the Number of Factors: p-values} \vskip0.1in
	{\footnotesize This figure plots the p-values of test where the x-axis is test parameter $k$, and for each $k$ there are 11 dots representing the p-value of each 5-year sub-sample. The test is conducted on each dimension and for each missing scheme - missing-completely-at-random (MCAR) and missing in blocks of one year (Block). The red dash line is the Bonferroni adjusted significance level.}
	\begin{center}
		\subfigure[{\bf Dimension N - MCAR}]{\includegraphics[width=0.45\textwidth]{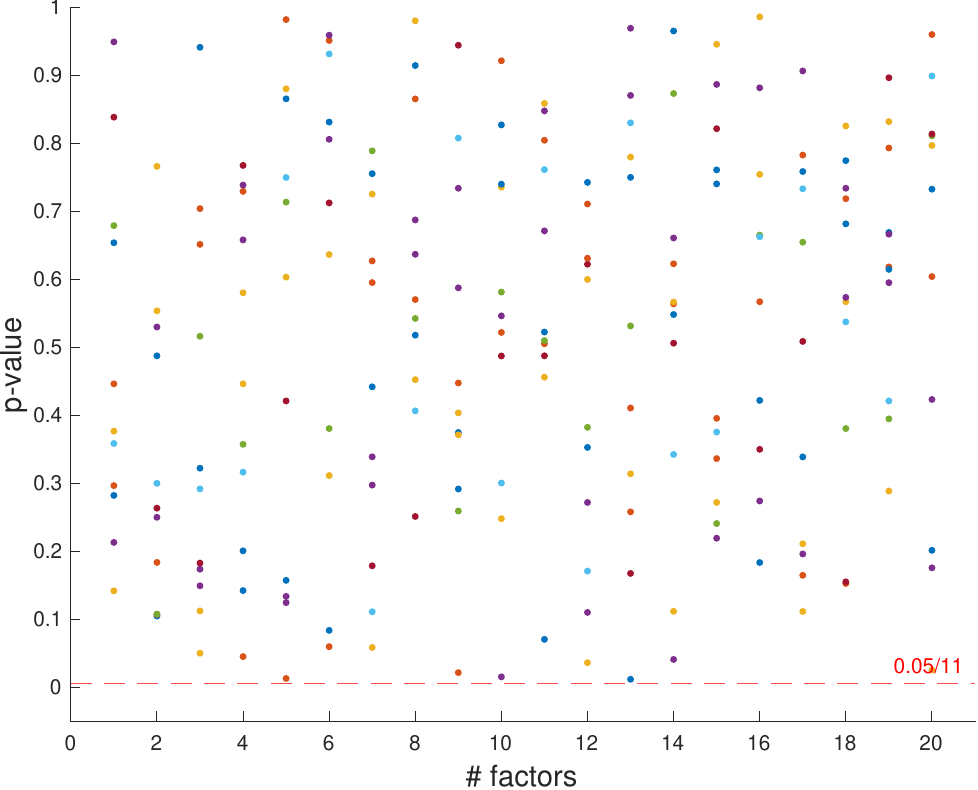}}
		\subfigure[{\bf Dimension T- MCAR}]{\includegraphics[width=0.45\textwidth]{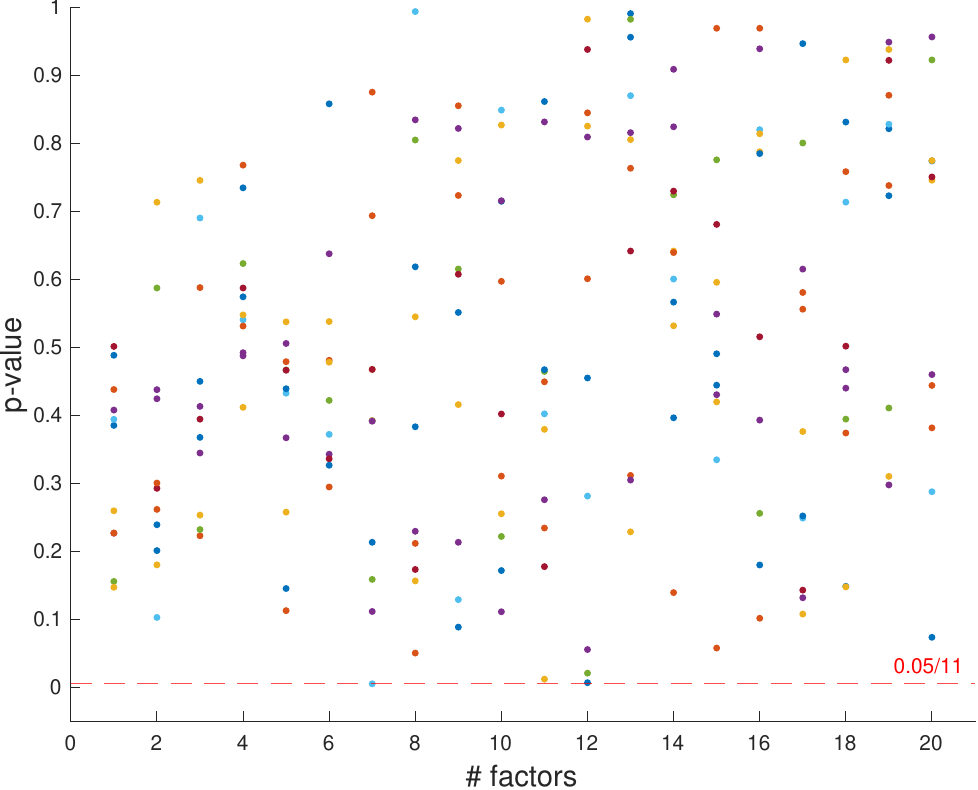}}
		
		\subfigure[{\bf Dimension L - MCAR}]{\includegraphics[width=0.45\textwidth]{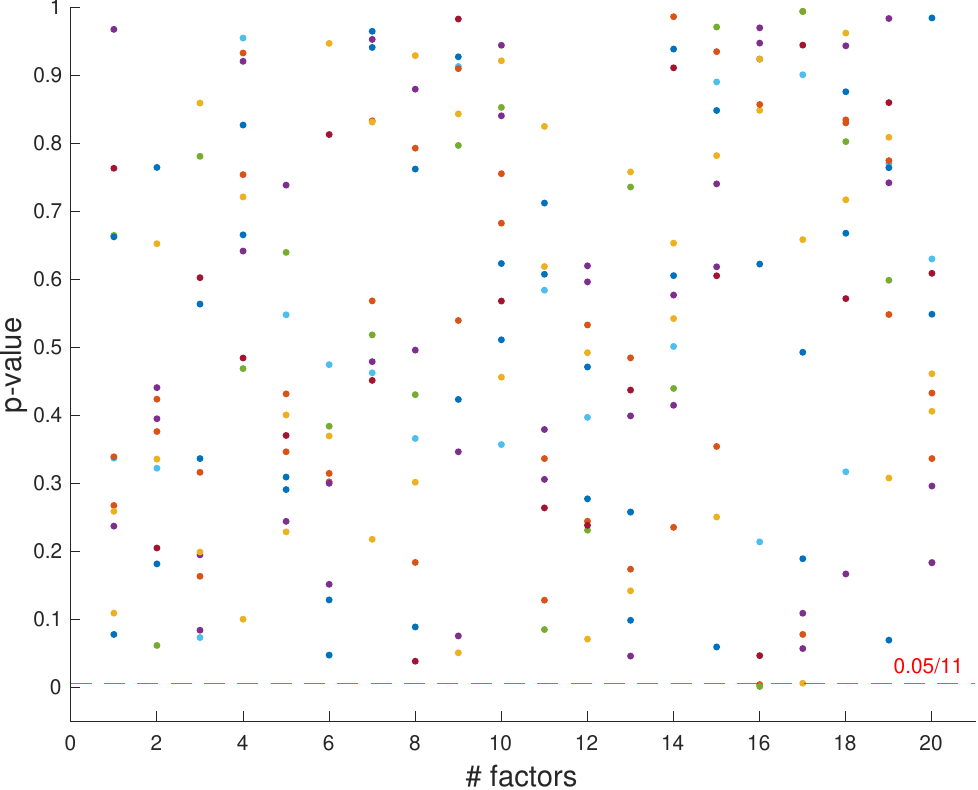}}
		\subfigure[{\bf Dimension N - Block}]{\includegraphics[width=0.45\textwidth]{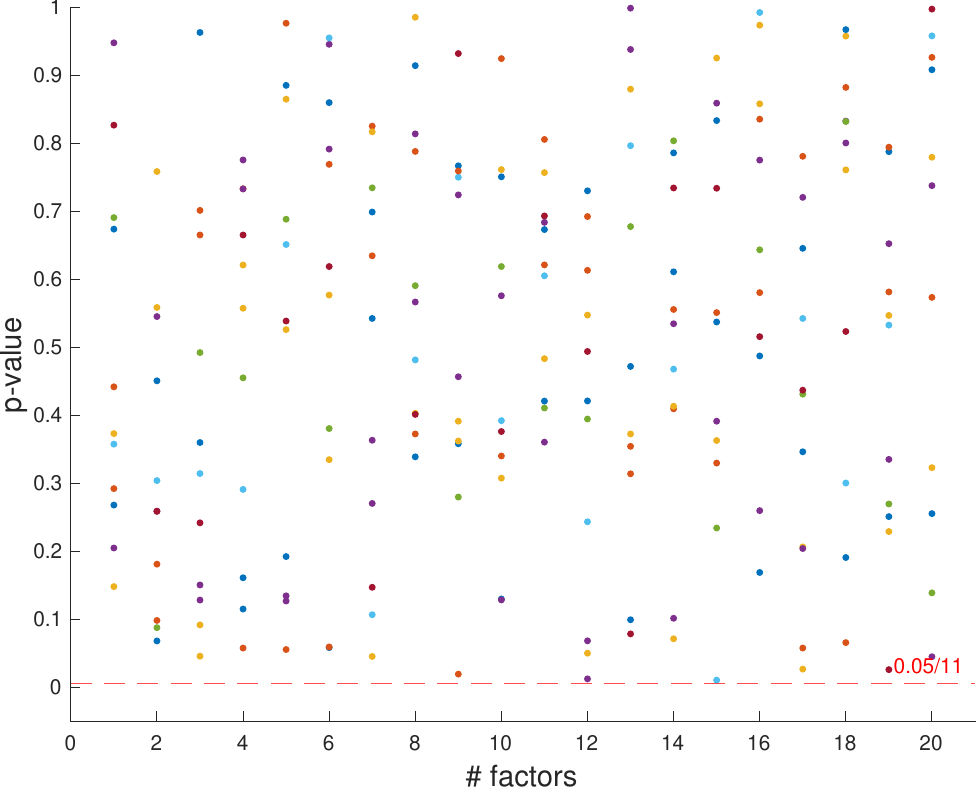}}
		
		\subfigure[{\bf Dimension T - Block}]{\includegraphics[width=0.45\textwidth]{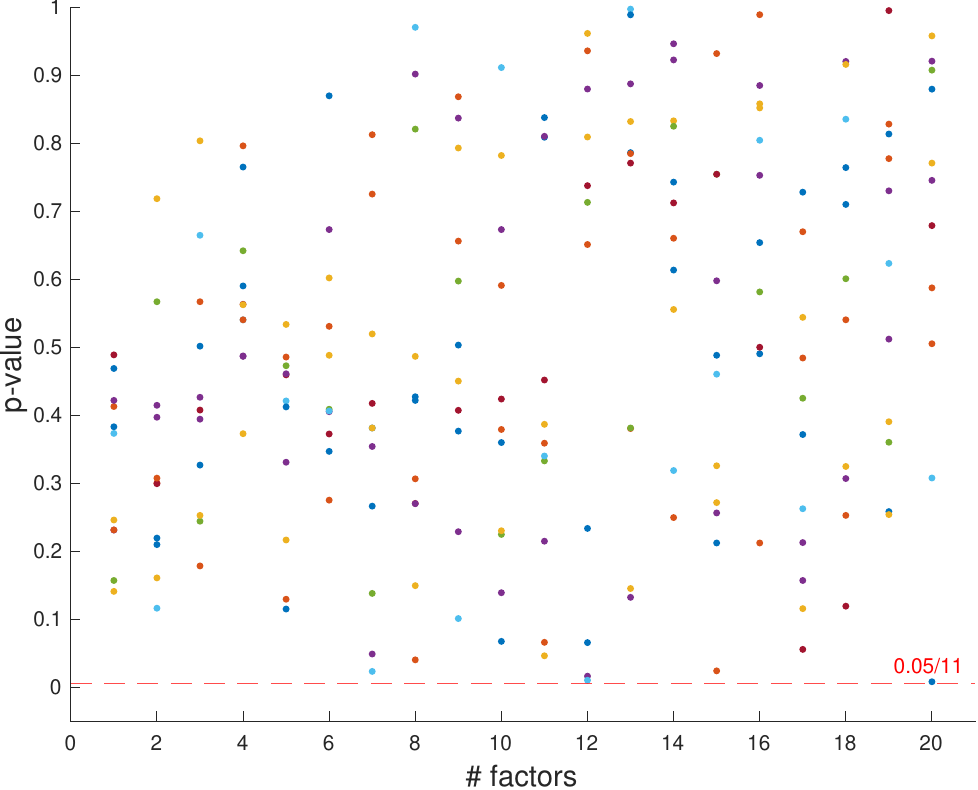}}
		\subfigure[{\bf Dimension L - Block}]{\includegraphics[width=0.45\textwidth]{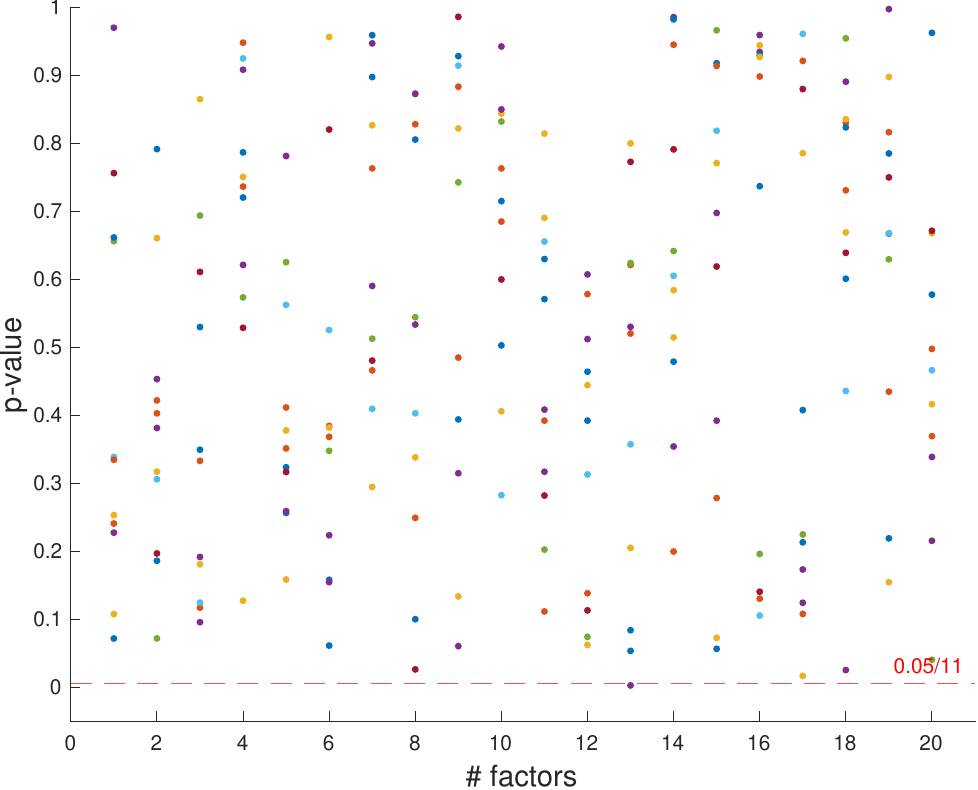}}
	\end{center}
\end{figure}

\subsection{Imputation Results}

In this subsection, we compare all the methods listed in Table \ref{tab:methods}\footnote{BF-TPCA is not reported because it shows slight worse performance than B-TPCA, and B-TPCA is the considered the best model among all.}. Some methods require backward and/or forward information, and they become inapplicable when this information is not available. Hence, we use the natural fallback approach, where we replace a method that requires time-series information by a pure lower-level method. For example, B-TPCA is replaced by TPCA and B-XS-ridge is replaced by XS-ridge when the previous characteristic is not observed. 

\smallskip

Table \ref{tab:results} summarizes the main imputation results of different methods. We report the in-sample, out-of-sample missing completely at random (OOS MCAR), out-of-sample block missing (OOS Block) for all characteristics using the full sample period. The cross-sectional (XS-ridge) based methods and the Tensor PCA (TPCA) based method are estimated with different number of factors $R$ with $R\in \{6, 20\}$. The first thing to note is that the cross-sectional median method produces imputation errors that are more than twice as large compared to previous value and autoregression methods, both in-sample and OOS missing completely at random. This demonstrates the importance of considering time-series dependency among firm characteristics. The previous value and autoregression methods are relatively reliable when the characteristics are not missing in blocks. However, these two method are no longer applicable when previous values are not available under the case of block missing, which is very common in firm characteristics. Therefore, the results of previous value and autoregression are very close to that of cross-sectional median. Without the combination of backward/forward information, the previous value and autoregression methods are better than XS-ridge in both in-sample and OOS MCAR, but are a lot worse when characteristics are missing in blocks. However, TPCA is consistently better than the standard approaches in both out-of-sample performances.

\smallskip

The results comparing TPCA and XS-ridge methods indicate that XS-ridge method is overfitting the characteristics data. XS-ridge method has lower RMSE and higher $R^2$ in-sample than TPCA with the same number of factors, while it falls behind TPCA out-of-sample under both missing patterns. The differences are most prominent with 20 factors and under OOS MCAR, where B-XS-ridge is about $40\%$ higher than B-TPCA in terms of RMSE and about $10\%$ lower in terms of $R^2.$ The differences are smaller under block missingness because of lacking in time series information. However, the difference still demonstrates that tensor factor model is superior to cross-sectional factor model by incorporating the time dimension.

\smallskip

TPCA remains superior to cross-sectional factor model when combined with backward and forward information. Since TPCA more accurately predicts the missing characteristics than XS-ridge, the superiority is preserved after being combined with the same information. In addition, ALS further improves the imputation accuracy by providing a faster convergence rate.

\begin{table}
	\caption{ \label{tab:results} Imputation Results of All Approaches}
	\vskip0.1in
	{\footnotesize This table reports RMSE and $R^2$ of all methods listed in Table \ref{tab:methods} computed on the full sample. The results are reported for the in-sample, out-of-sample missing-completely-at-random (OOS MCAR), and out-of-sample missing in blocks of one year (OOS Block).}
	\vskip0.1in
	\resizebox{\textwidth}{!}{
		\begin{tabular}{llllllllll}
			\hline
			&  &               &               &  &               &              &  &               &               \\ [-3mm]
			&  & \multicolumn{2}{c}{In-Sample} &  & \multicolumn{2}{c}{OOS MCAR} &  & \multicolumn{2}{c}{OOS Block} \\ \hline
			&  &               &               &  &               &              &  &               &               \\ [-3mm]
			Method (\# factors) &  & RMSE          & $R^2$         &  & RMSE          & $R^2$        &  & RMSE          & $R^2$         \\ \cline{1-1} \cline{3-4} \cline{6-7} \cline{9-10} 
			&  &               &               &  &               &              &  &               &               \\ [-3mm]
			
			TPCA (6)   &  & 0.1841        & 0.5471        &  & 0.1856        & 0.5402       &  & 0.1903        & 0.5183        \\
			ALS (6)     &  & 0.1778        & 0.5777        &  & 0.1792        & 0.5714       &  & 0.1848        & 0.5453        \\
			XS-ridge (6)       &  & 0.1657        & 0.6713        &  & 0.1980        & 0.5309      &  & 0.2001       & 0.5232        \\
			&  &               &               &  &               &              &  &               &               \\ [-1mm]
			
			B-TPCA (6) &  & 0.0904        & 0.8908        &  & 0.1044        & 0.8545       &  & 0.1847        & 0.5458        \\
			B-ALS (6)   &  & 0.0900        & 0.8918       &  & 0.1030        & 0.8585       &  & 0.1796        & 0.5708        \\
			B-XS-ridge (6)           &  & 0.0862        & 0.9110        &  & 0.1169        & 0.8363       &  & 0.1950        & 0.5469        \\
			BF-XS-ridge (6)           &  & 0.0789       & 0.9254        &  & 0.1220        & 0.8217       &  & 0.2001        & 0.5232        \\
			&  &               &               &  &               &              &  &               &               \\ [-1mm]
			
			TPCA (20)   &  & 0.1279        & 0.7814        &  & 0.1329        & 0.7644       &  & 0.1474        & 0.7109        \\
			ALS (20)     &  & 0.1267        & 0.7855        &  & 0.1319        & 0.7679       &  & 0.1471        & 0.7123        \\
			XS-ridge (20)       &  & 0.0598        & 0.9572        &  & 0.1613        & 0.6888       &  & 0.1642        & 0.6790        \\
			&  &               &               &  &               &              &  &               &               \\ [-1mm]
			
			B-TPCA (20) &  & 0.0835        & 0.9070        &  & 0.0911        & 0.8893       &  & 0.1441        & 0.7239        \\
			B-ALS (20)   &  & 0.0830        & 0.9081        &  & 0.0906        & 0.8904       &  & 0.1437        & 0.7252        \\
			B-XS-ridge (20)           &  & 0.0317        & 0.9880        &  & 0.1287        & 0.8018       &  & 0.1621        & 0.6870        \\
			BF-XS-ridge (20)           &  & 0.0282        & 0.9905        &  & 0.1257        & 0.8110       &  & 0.1642        & 0.6790        \\
			&  &               &               &  &               &              &  &               &               \\ [-1mm]
			
			Prev. Value         &  & 0.1282        & 0.8031        &  & 0.1522        & 0.7229       &  & 0.2814        & 0.0570        \\
			Median              &  & 0.2890        & 0.0000        &  & 0.2890        & 0.0000       &  & 0.2898        & 0.0000        \\
			AR(1)               &  & 0.1153        & 0.8407        &  & 0.1426        & 0.7566       &  & 0.2810        & 0.0596        \\ [1mm] \hline
	\end{tabular}}
\end{table}

\section{Conclusion}\label{sec:conclusions}
Modern datasets are often multidimensional, extending beyond the $2$-dimensional panel data structures used in traditional factor models and PCA. In this paper, we study a class of $d$-way factor models for high-dimensional tensor data, which are a natural generalization of traditional $2$-way factor models. We demonstrate that $d$-way factor models can be estimated with a variation of PCA, which we call TPCA. This simple algorithm is optimal for the strong factor model. We also consider an improved iterative ALS algorithm which is optimal when the factors are moderately weak.

Additionally, we propose the first formal statistical test for the number of factors in a tensor factor model. Our findings indicate that the tensor factor model offers efficient dimensionality reduction compared to naively pooled traditional factor models. Simultaneously, the model is parsimonious with easily identifiable factors and loadings. These conclusions are supported by extensive simulation results. Lastly, we consider an empirical application to imputing missing firm characteristics.

Interesting applications of TPCA and our results could potentially include more refined panel data models with covariates, e.g., see \cite{freeman2022multidimensional} and \cite{beyhum2020factor} as well as causal inference and imputations with tensor data; see \cite{agarwal2020synthetic}.

\clearpage
\bibliographystyle{plain}
\bibliography{references,demand,granularity_refs,bib_fac_zoo}

\clearpage

\appendix

\bigskip

\begin{center}
	{\large \bf APPENDIX}
\end{center}

\bigskip

\setcounter{page}{1}
\setcounter{section}{0}
\setcounter{equation}{0}
\setcounter{table}{0}
\setcounter{figure}{0}
\renewcommand{\theequation}{A.\arabic{equation}}
\renewcommand\thetable{A.\arabic{table}}
\renewcommand\thefigure{A.\arabic{figure}}
\renewcommand\thesection{A.\arabic{section}}
\renewcommand\thesubsection{A.\arabic{section}.\arabic{subsection}}
\renewcommand\thepage{Appendix - \arabic{page}}

\section{Tensor Matricization: An Illustrative Example \label{appsec:matricization}}
Fixing all but two indices of a tensor, we obtain a matrix, called slice; see Figure~\ref{appfig:slices} for a graphical illustration. 

\begin{figure}[h]
	\centering
	\caption{Lateral, horizontal, and frontal slices of a $4\times 5\times 3$ tensor}
	\label{appfig:slices} 
	\vskip0.1in
	\includegraphics[width=\textwidth]{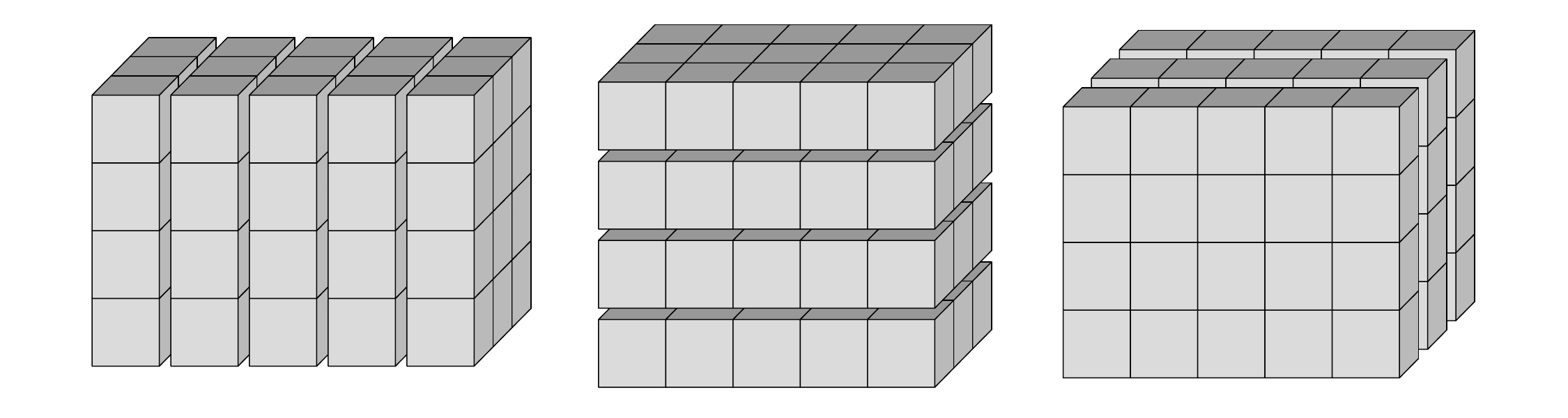}
\end{figure}

\smallskip
\textbf{Example 1}:\\

Let $\mathbf{Y}$ be a $3\times 4\times 2$ dimensional tensor of the following two frontal slices:

\begin{equation*}
	\mathbf{Y}_1=\begin{bmatrix}
	1 & 4 & 7 & 10\\
	2 & 5 & 8 & 11\\
	3 & 6 & 9 & 12
	\end{bmatrix} \qquad
	\mathbf{Y}_2=\begin{bmatrix}
	13 & 16 & 19 & 22\\
	14 & 17 & 20 & 23\\
	15 & 18 & 21 & 24
	\end{bmatrix}.
\end{equation*}

Then the mode-$1$, $2$ and $3$ matricizations of $\mathbf{Y}$ are respectively:
\begin{equation*}
	\mathbf{Y}_{(1)} = \begin{bmatrix}
	1 & 4 & 7 & 10 & 13 & 16 & 19 & 22\\
	2 & 5 & 8 & 11 & 14 & 17 & 20 & 23\\
	3 & 6 & 9 & 12 & 15 & 18 & 21 & 24
	\end{bmatrix}
\end{equation*}
\begin{equation*}
	\mathbf{Y}_{(2)} = \begin{bmatrix}
	1 & 2 & 3 & 13 & 14 & 15\\
	4 & 5 & 6 & 16 & 17 & 18\\
	7 & 8 & 9 & 19 & 20 & 21\\
	10 & 11 & 12 & 22 & 23 & 24
	\end{bmatrix}
\end{equation*}
\begin{equation*}
	\mathbf{Y}_{(3)} = \begin{bmatrix}
	1 & 2 & 3 & 4 & \cdots & 9 & 10 & 11 & 12\\
	13 & 14 & 15 & 16 & \cdots & 21 & 22 & 23 & 24
	\end{bmatrix}
\end{equation*}

\smallskip
\textbf{Example 2}:\\

Let $\mathbf{Y}$ be a $3\times 3\times 3$ dimensional tensor of the following three frontal slices:

\begin{equation*}
	\mathbf{Y}_1=\begin{bmatrix}
	1 & 4 & 7 \\
	2 & 5 & 8 \\
	3 & 6 & 9 
	\end{bmatrix} \quad
	\mathbf{Y}_2=\begin{bmatrix}
	10 & 13 & 16\\
	11 & 14 & 17\\
	12 & 15 & 18\\
	\end{bmatrix}.
	\mathbf{Y}_3=\begin{bmatrix}
	19 & 22 & 25\\
	20 & 23 & 26\\
	21 & 24 & 27
	\end{bmatrix}.
\end{equation*}

Then the mode-$1$, $2$ and $3$ matricization of $\mathbf{Y}$ are respectively:
\begin{equation*}
	\mathbf{Y}_{(1)} = \begin{bmatrix}
	1 & 4 & 7 & 10 & 13 & 16 & 19 & 22 & 25\\
	2 & 5 & 8 & 11 & 14 & 17 & 20 & 23 & 26\\
	3 & 6 & 9 & 12 & 15 & 18 & 21 & 24 & 27
	\end{bmatrix}
\end{equation*}
\begin{equation*}
	\mathbf{Y}_{(2)} = \begin{bmatrix}
	1 & 2 & 3 & 10 & 11 & 12 & 19 & 20 & 21\\
	4 & 5 & 6 & 13 & 14 & 15 & 22 & 23 & 24\\
	7 & 8 & 9 & 16 & 17 & 18 & 25 & 26 & 27
	\end{bmatrix}
\end{equation*}
\begin{equation*}
	\mathbf{Y}_{(3)} = \begin{bmatrix}
	1 & 2 & 3 & 4 & 5 & 6 & 7 & 8 & 9\\
	10 & 11 & 12 & 13 & 14 & 15 & 16 & 17 & 18\\
	19 & 20 & 21 & 22 & 23 & 24 & 25 & 26 & 27
	\end{bmatrix}
\end{equation*}

\section{Imputation Methods for Missing Firm Characteristics}\label{appsec:imputation}

\cite{bryzgalova2022missing} model the panel of firm characteristics for each month $t$ as follows:
\begin{equation}
	\label{eq:TSpanels}
	\mathbf{C}_{i,\ell}^t = \mathbf{F}_i^t \mathbf{\Lambda}_\ell^{t \top} + \mathbf{U}_{i,\ell}^t \quad \mathrm{with}\; i=1,\ldots,N_t \; \mathrm{and} \; \ell = 1,\ldots,L.
\end{equation}
The above equation makes clear that we de facto have a tensor data set, along three dimensions $i,$ $\ell$ and $t.$ Hence, firm characteristics are a 3-way or third-order tensor with its three dimensions: (a) firms, (b) characteristics, and (c) time.

\subsection{Tensor models for characteristics}
For firm characteristics $\mathbf{C}\in\R^{N\times T\times L}$ with possibly missing entries, we instead consider the tensor factor model 
\begin{equation}\label{eqn:model}
	\mathbf{C} = \mathbf{G}\times_1 \mathbf{M}\times_2\mathbf{F}\times_3\mathbf{\Lambda} + \mathbf{U} \equiv \llbracket \mathbf{G}; \mathbf{M}, \mathbf{F}, \mathbf{\Lambda} \rrbracket + \mathbf{U},
\end{equation}
where $\mu_r\in \R^N$ are the loadings corresponding to the firms dimension, $\lambda_r\in \R^L$ are the loadings corresponding to the characteristics dimension, and $f_r\in \R^T$ are the factors. We collect the $\lambda_r,$ $\mu_r,$ and $f_r$ in matrices $\mathbf{\Lambda} \in \R^{I \times R},$ $\mathbf{M} \in \R^{L \times R},$ and $\mathbf{F} \in \R^{T \times R}.$ The loadings and factors are of unit norm, and $\mathbf{G}$ is the core tensor. 

Before discussing our estimator, let us first review the approach of \cite{bryzgalova2022missing} to
estimate equation (\ref{eq:TSpanels}). They follow \cite{xiong2022large}, and compute the loadings $\mathbf{\Lambda}^t$ as eigenvectors of the estimated characteristic covariance matrix
\begin{equation}
	\hat{\mathbf{\Sigma}}^{\mathrm{XS},t}_{\ell,p} = \frac{1}{Q_{\ell,p}^t} \sum_{i\in Q_{\ell,p}^t} \mathbf{C}^{t}_{\ell,i} \mathbf{C}^{t}_{p,i},
\end{equation}
where $Q_{\ell,p}^t$ is the set of all stocks that are observed for the two characteristics $\ell$ and $p$ at time $t.$ The superscript $XS,t$ refers to the time series of cross-sections approach. For each $t$, the cross-sectional factors are estimated by a regression on the estimated loadings $\hat{\mathbf{\Lambda}}^t$:
\begin{equation}
	\label{eq:factorbryz}
	\hat{\mathbf{F}}_i^{t} = \left( \frac{1}{L} \sum_{\ell = 1}^{L} W_{i,\ell}^t \hat{\mathbf{\Lambda}}^t_\ell (\hat{\mathbf{\Lambda}}^t_\ell)^\top \right)^{-1} \left( \frac{1}{L}\sum_{\ell=1}^{L} W_{i,\ell}^t \hat{\mathbf{\Lambda}}^t_\ell (\mathbf{C}^{t}_{\ell,i})^\top \right),
\end{equation}
where $W_{i,\ell}^t = 1$ if characteristic $\ell$ is observed for stock $i$ at time $t$ and $W_{i,\ell}^t = 0$ otherwise. 

We now turn our attention to the estimation of the tensor factor model (\ref{eqn:model}). Let $\mathbf{C}^{(j)}$ be the matricization of firm characteristics along the dimension $j\in \{1, 2, 3\}$, and $\mathbf{C}^{(j)}$ is of size $N_j\times \prod_{k\ne j}N_k$. The factor matrices $\mathbf{M}, \mathbf{F}, \mathbf{\Lambda}$ can be estimated by applying PCA to the covariance matrix of the matricized tensor. Due to missingness in the firm characteristics data, we use the approach of \cite{xiong2023large} and calculate the covariance matrix based on partially observed data, denoted by $\hat \Sigma^{(j)}$, as
\begin{equation*}
	\hat{\mathbf{\Sigma}}^{(j)}_{\ell,p} = \frac{1}{Q_{\ell,p}} \sum_{i\in Q_{\ell,p}} \mathbf{C}^{(j)}_{\ell,i} \mathbf{C}^{(j)}_{p,i},
\end{equation*}
where $Q_{\ell,p}$ is the set of all columns that are observed for the $\ell^{\rm th}$ and the $p^{\rm th}$ rows of $\mathbf{C}^{(j)}.$\footnote{\cite{xiong2023large} provides inferential theory for estimating latent factor models from panel data with missing observations based on PCA. Matricized tensors are essentially panel data.} Then the core tensor $\mathbf{G}$ can be estimated by step 5 of the algorithm \ref{alg:hosvd} where the missing values in $\mathbf{Y}$ are filled with $0s$ (cross-sectional median). The ALS algorithm \ref{alg:hooi} can also be used to estimate the same model \ref{eqn:model} in the same fashion.

\smallskip

To highlight the difference between the time series of cross-sections and our tensor approach, we can write the former estimator as follows:
\begin{equation}
	\label{eq:CwithXSt}
	\hat{\mathbf{C}}_{i,\ell}^{\textrm{XS},t} = \sum_{r=1}^R\hat{\sigma}_{r,t} \hat{\mu}_{r,i,t}  \hat{\lambda}_{r,\ell,t},
\end{equation}
where all parameters $\hat{\sigma}_{r,t},$ $\hat{\mu}_{r,i,t},$ and $\hat{\lambda}_{r,\ell,t}$ have subscript $t$ because the estimation is done in each period in time. 
In contrast, the missing characteristics are predicted by TPCA as
\begin{equation}
	\label{eq:CwithTPCA}
	\hat{\mathbf{C}}^{\textrm{TPCA}} = \hat{\mathbf{G}} \times_1 \hat{\mathbf{M}} \times_2 \hat{\mathbf{F}} \times_3 \hat{\mathbf{\Lambda}}.
\end{equation}
Comparing equation (\ref{eq:CwithXSt}) with (\ref{eq:CwithTPCA}) we clear see that there is a considerable amount of parameter proliferation when estimating time series of cross-sections. Therefore, we would expect a better performance from the parsimonious tensor-based approach.

\subsection{Backward and Forward Information}

Firm characteristics have strong dependency with their previous values, and filling the missing values with its previous value is one of the most widely used methods for missing imputations. Inspired by \cite{bryzgalova2022missing}, we combine the tensor factor model with backward information by a cross-sectional regression. We define the covariates in the regression by
\begin{equation}
	\mathbf{X}_{i,t,\ell}^{{\textrm{B-TPCA}}} = \left( \sum_{r=1}^R\hat{\sigma}_r \hat{\mu}_{r,i} \hat{f}_{r,t} \hat{\lambda}_{r,\ell} \quad \mathbf{C}_{i,\ell}^{t-1} \quad \hat{\mathbf{e}}_{i,\ell}^{t-1} \right),
\end{equation}
where $\hat{\mathbf{e}}_{i,\ell}^{t} = \mathbf{C}_{i,t,\ell} - \hat{\mathbf{C}}_{i,t,\ell}^{\textrm{TPCA}}$ is the residual from TPCA, and we estimate the model
\begin{equation}
	\mathbf{C}_{i,t,\ell} = \left( \beta^{\ell,t,\textrm{B-TPCA}}\right)^\top \left( \sum_{r=1}^R\hat{\sigma}_r \hat{\mu}_{r,i} \hat{f}_{r,t} \hat{\lambda}_{r,\ell} \quad \mathbf{C}_{i,\ell}^{t-1} \quad \hat{\mathbf{e}}_{i,\ell}^{t-1} \right)
\end{equation}
with a cross-sectional regression on the partially observed data at time $t$ for characteristic $\ell$ as follows:
\begin{equation*}
	\hat{\beta}^{\ell,t,\textrm{B-TPCA}} = \left( \sum_{i=1}^{N_t} \mathbf{W}_i^{t,\ell} \mathbf{X}_{i,t,\ell}^{{\textrm{B-TPCA}}} (\mathbf{X}_{i,t,\ell}^{{\textrm{B-TPCA}}})^\top \right) \left( \sum_{i=1}^{N_t}\mathbf{W}_i^{t,\ell} \mathbf{X}_{i,t,\ell}^{{\textrm{B-TPCA}}} \mathbf{C}_{i,t,\ell} \right),
\end{equation*}
where $\mathbf{W}_i^{t,\ell} = 1$ if $\mathbf{X}_{i,t,\ell}^{{\textrm{B-TPCA}}}$ and $C_{i,t,\ell}$ are both observed and $0$ otherwise. The missing characteristics are predicted as
\begin{equation*}
	\hat{\mathbf{C}}_{i,t,\ell}^{\textrm{B-TPCA}} = (\hat{\beta}^{\ell,t,\textrm{B-TPCA}})^\top \mathbf{X}_{i,t,\ell}^{{\textrm{B-TPCA}}}.
\end{equation*}

\begin{table}
	\centering
	\caption{Different Imputation Methods}\label{tab:methods}
	{\footnotesize
		\begin{tabular}{lll}
			\hline
			&                       &              \\ [-3mm]
			Method & \multicolumn{1}{l|}{} &  Estimation \\
			&                       &              \\ [-3mm] \hline
			&                       &              \\ [-4mm] 
			Tucker (TPCA/ALS) & \multicolumn{1}{l|}{} & $\hat{\mathbf{C}}^{\textrm{TPCA}} = \hat{\mathbf{G}} \times_1 \hat{\mathbf{M}} \times_2 \hat{\mathbf{F}} \times_3 \hat{\mathbf{\Lambda}}$ \\ [2mm]
			Backward-Forward-TPCA (BF-TPCA) &     \multicolumn{1}{l|}{}       &     $ \hat{\mathbf{C}}_{i,t,\ell}^{\textrm{BF-TPCA}} = (\hat{\beta}^{\ell,t,\textrm{BF-TPCA}})^\top \left( \hat{\mathbf{C}}_{i,t,\ell}^{\textrm{TPCA}} \quad \mathbf{C}_{i,\ell}^{t-1} \quad \hat{\mathbf{e}}_{i,\ell}^{t-1} \quad \mathbf{C}_{i,\ell}^{t+1} \quad \hat{\mathbf{e}}_{i,\ell}^{t+1} \right)$        \\ [2mm]
			Backward-TPCA (B-TPCA) & \multicolumn{1}{l|}{} & $ \hat{\mathbf{C}}_{i,t,\ell}^{\textrm{B-TPCA}} = (\hat{\beta}^{\ell,t,\textrm{B-TPCA}})^\top \left( \hat{\mathbf{C}}_{i,t,\ell}^{\textrm{TPCA}} \quad \mathbf{C}_{i,\ell}^{t-1} \quad \hat{\mathbf{e}}_{i,\ell}^{t-1} \right)$ \\ [2mm]
			Cross-sectional (XS) & \multicolumn{1}{l|}{} & $\hat{\mathbf{C}}_{i,t,\ell}^{\textrm{XS}} = \sum_{r=1}^R\hat{\sigma}_{r,t} \hat{\mu}_{r,i,t}  \hat{\lambda}_{r,\ell,t}$ \\ [2mm]
			Backward-Forward-XS (BF-XS) & \multicolumn{1}{l|}{}  & $ \hat{\mathbf{C}}_{i,t,\ell}^{\textrm{BF-XS}} = (\hat{\beta}^{\ell,t,\textrm{BF-XS}})^\top \left( \hat{\mathbf{C}}_{i,t,\ell}^{\textrm{XS}} \quad \mathbf{C}_{i,\ell}^{t-1} \quad \hat{\mathbf{e}}_{i,\ell}^{t-1} \quad \mathbf{C}_{i,\ell}^{t+1} \quad \hat{\mathbf{e}}_{i,\ell}^{t+1} \right)$ \\ [2mm]
			Backward-XS (B-XS) & \multicolumn{1}{l|}{}  & $ \hat{\mathbf{C}}_{i,t,\ell}^{\textrm{B-XS}} = (\hat{\beta}^{\ell,t,\textrm{B-XS}})^\top \left( \hat{\mathbf{C}}_{i,t,\ell}^{\textrm{XS}} \quad \mathbf{C}_{i,\ell}^{t-1} \quad \hat{\mathbf{e}}_{i,\ell}^{t-1} \right)$ \\ [2mm]
			Autoregression (AR) & \multicolumn{1}{l|}{} & $ \hat{\mathbf{C}}_{i,t,\ell}^{\textrm{AR}} = (\hat{\beta}^{\ell,t,\textrm{AR}})^\top  \mathbf{C}_{i,\ell}^{t-1} $ \\ [2mm]
			Previous Value (PV) & \multicolumn{1}{l|}{} & $\hat{\mathbf{C}}_{i,t,\ell}^{\textrm{PV}} = \mathbf{C}_{i,\ell}^{t-1}$ \\ [2mm]
			Cross-sectional median (median) & \multicolumn{1}{l|}{} & $\hat{\mathbf{C}}_{i,t,\ell}^{\textrm{median}} = 0$ \\
			&                       &              \\ [-4mm] \hline
	\end{tabular}}
\end{table}

The B-TPCA model can be extended to include forward information at time $t+1$, where we define the covariates
\begin{equation}
	\mathbf{X}_{i,t,\ell}^{{\textrm{BF-TPCA}}} = \left( \sum_{r=1}^R\hat{\sigma}_r \hat{\mu}_{r,i} \hat{f}_{r,t} \hat{\lambda}_{r,\ell} \quad \mathbf{C}_{i,\ell}^{t-1} \quad \hat{\mathbf{e}}_{i,\ell}^{t-1} \quad \mathbf{C}_{i,\ell}^{t+1} \quad \hat{\mathbf{e}}_{i,\ell}^{t+1} \right),
\end{equation}
and estimate the model
\begin{equation}
	\mathbf{C}_{i,t,\ell} = \left( \beta^{\ell,t,\textrm{BF-TPCA}}\right)^\top \left( \sum_{r=1}^R\hat{\sigma}_r \hat{\mu}_{r,i} \hat{f}_{r,t} \hat{\lambda}_{r,\ell} \quad \mathbf{C}_{i,\ell}^{t-1} \quad \hat{\mathbf{e}}_{i,\ell}^{t-1} \quad \mathbf{C}_{i,\ell}^{t+1} \quad \hat{\mathbf{e}}_{i,\ell}^{t+1} \right).
\end{equation}
The model can also be estimated with a cross-sectional regression on the partially observed data at time $t$ for characteristic $\ell$ as follows:
\begin{equation}
	\hat{\beta}^{\ell,t,\textrm{BF-TPCA}} = \left( \sum_{i=1}^{N_t} \mathbf{W}_i^{t,\ell} \mathbf{X}_{i,t,\ell}^{{\textrm{BF-TPCA}}} (\mathbf{X}_{i,t,\ell}^{{\textrm{BF-TPCA}}})^\top \right) \left( \sum_{i=1}^{N_t}\mathbf{W}_i^{t,\ell} \mathbf{X}_{i,t,\ell}^{{\textrm{BF-TPCA}}} \mathbf{C}_{i,t,\ell} \right),
\end{equation}
where $\mathbf{W}_i^{t,\ell} = 1$ if $\mathbf{X}_{i,t,l}^{{\textrm{BF-TPCA}}}$ and $C_{i,t,l}$ are both observed and $0$ otherwise. The missing characteristics are predicted as
\begin{equation*}
	\hat{\mathbf{C}}_{i,t,\ell}^{\textrm{BF-TPCA}} = (\hat{\beta}^{\ell,t,\textrm{BF-TPCA}})^\top \mathbf{X}_{i,t,\ell}^{{\textrm{BF-TPCA}}}.
\end{equation*}

\subsection{Alternative Imputation Methods}

\cite{bryzgalova2022missing} consider various imputation methods for firm characteristics including pure cross-sectional factor model (XS), cross-sectional with backward information (B-XS), cross-sectional with backward and forward information (BF-XS). Essentially, these methods replace the TPCA estimation with XS estimation in the models discussed previously. Due to the fact that cross-sectional models fit each time period separately, these methods can easily result in overfitting. For this reason \cite{bryzgalova2022missing} consider a regularized version of the cross-sectional factor model, where they replace the regression in (\ref{eq:factorbryz}) with ridge regression:
\begin{equation}
	\label{eq:factorbryzridge}
	\hat{\mathbf{F}}_i^{t,\gamma} = \left( \frac{1}{L} \sum_{\ell = 1}^{L} W_{i,\ell}^t \hat{\mathbf{\Lambda}}^t_\ell (\hat{\mathbf{\Lambda}}^t_\ell)^\top + \gamma I_K \right)^{-1} \left( \frac{1}{L}\sum_{\ell=1}^{L} W_{i,\ell}^t \hat{\mathbf{\Lambda}}^t_\ell (\mathbf{C}^{t}_{\ell,i})^\top \right).
\end{equation}

In practice, the most commonly used methods for missing value imputation also include cross-sectional median, previous value, autoregression (AR) of order 1. All the methods are summarized in Table \ref{tab:methods}.

\newpage
\section{Supplementary Tables}\label{appsec:data}

This section details the firm characteristics included in the data set. The data cover 35 firm characteristics, which are described in Table \ref{tab:chars}. See \cite{freyberger2020dissecting} for detailed construction of each characteristic. The dataset is monthly and ranges from January 1966 to December 2020, and there are a total of 13,588 firms in the entire sample.
\begin{table}[h]
	\caption{Firm Characteristics}\label{tab:chars}
	\centering
	{\footnotesize	\begin{tabular}{lll}
			\hline
			& & \\ [-3mm]
			(1)  & A2ME         & Assets to market cap                                     \\
			(2)  & AT           & Total assets                                             \\
			(3)  & ATO          & Net sales over lagged net operating assets               \\
			(4)  & BEME         & Book to market ratio                                     \\
			(5)  & Beta         & CAPM beta                                                \\
			(6)  & C            & Ratio of cash and short-term investments to total assets \\
			(7)  & CTO          & Capital turnover                                         \\
			(8)  & D2A          & Capital intensity                                        \\
			(9)  & DPI2A        & Change in property, plants, and equipment                \\
			(10) & E2P          & Earnings to price                                        \\
			(11) & FC2Y         & Fixed costs to sales                                     \\
			(12) & Free CF      & Cash flow to book value of equity                        \\
			(13) & Idio vol     & Idiosyncratic volatility                                 \\
			(14) & Investment   & Percentage year-on-year growth rate in total assets      \\
			(15) & Lev          & Leverage                                                 \\
			(16) & LME          & Total market capitalization                              \\
			(17) & LTurnover    & Last month's volume over shares outstanding              \\
			(18) & NOA          & Net operating assets                                     \\
			(19) & OA           & Operating accruals                                       \\
			(20) & OL           & Operating leverage                                       \\
			(21) & PCM          & Price-to-cost margin                                     \\
			(22) & PM           & Profit margin                                            \\
			(23) & Prof         & profitability                                            \\
			(24) & Q            & Tobin's Q                                                \\
			(25) & Rel to High  & Closeness to 52-week high                                \\
			(26) & RNA          & Return on net operating assets                           \\
			(27) & ROA          & Return-on-assets                                         \\
			(28) & ROE          & Return-on-equity                                         \\
			(29) & ${\rm r}_{12-2}$  & Momentum                                                 \\
			(30) & ${\rm r}_{12-7}$  & Intermediate momentum                                    \\
			(31) & ${\rm r}_{2-1}$   & Short-term reversal                                      \\
			(32) & ${\rm r}_{36-13}$ & Long-term reversal                                       \\
			(33) & S2P          & Sales-to-price ratio                                     \\
			(34) & SGA2S        & SG\&A to sales                                           \\
			(35) & Spread       & Bid-ask spread                                          \\ [1mm] \hline
	\end{tabular}}
\end{table}

\clearpage

\section{Proofs}\label{appsec:proofs}
This appendix contains proofs of all results with several supplementary lemmas. The first lemma provides some useful properties of Kronecker products that will be used repeatedly in proofs.
\begin{lemma}\label{lemma:kronecker}
	If $\Lambda_j,M_j\in\mathbb{R}^{N_j\times R_j}$ for $j\leq d$, then
	\begin{enumerate}
		\item[(a)] $\left\|\bigotimes_{j=1}^d\Lambda_j\right\|_{\rm op} = \prod_{j=1}^d\|\Lambda_j\|_{\rm op}$.
		\item[(b)] $\left(\bigotimes_{j=1}^d\Lambda_j\right)^\top\left(\bigotimes_{j=1}^dM_j\right) = \bigotimes_{j=1}^d(\Lambda_j^\top M_j)$.
	\end{enumerate}
\end{lemma}
\begin{proof}
	The proof is based on the following three properties of Kronecker product:
	\begin{enumerate}
		\item $(A\otimes B)^\top = A^\top\otimes B^\top$.
		\item $(A_1\otimes A_2)(B_1\otimes B_2) = (A_1B_1)\otimes (A_2B_2)$.
		\item $(\alpha A_1+\beta A_2)\otimes B = \alpha A_1\otimes B + \beta A_2\otimes B$, where $\alpha,\beta\in\mathbb{R}$.
	\end{enumerate}
	These properties can be found in \cite{abadir2005matrix}, Exercises 10.3 and 10.7. Then (b) follows from
	\begin{equation*}
		\left(\bigotimes_{j=1}^d\Lambda_j\right)^\top\left(\bigotimes_{j=1}^dM_j\right)  = \left(\bigotimes_{j=1}^d\Lambda_j^\top\right)\left(\bigotimes_{j=1}^dM_j\right) = \bigotimes_{j=1}^d(\Lambda_j^\top M_j).
	\end{equation*}
	For (a), using the SVD decomposition of $\Lambda_j = \sum_{r_j=1}^{R_j}\sigma_{j,r_j}u_{j,r_j}v_{j,r_j}^\top$
	\begin{equation*}
		\begin{aligned}
			\bigotimes_{j=1}^d\Lambda_j & = \sum_{r_1,\dots,r_d}\sigma_{1,r_1}\dots\sigma_{d,r_d}\bigotimes_{j=1}^du_{j,r_j}v_{j,r_j}^\top \\
			& = \sum_{r_1,\dots,r_d}\sigma_{1,r_1}\dots\sigma_{d,r_d}\left(\bigotimes_{j=1}^du_{j,r_j}\right)\left(\bigotimes_{j=1}^dv_{j,r_j}\right)^\top,
		\end{aligned}
	\end{equation*}
	where we use properties 1-3. This shows that $\bigotimes_{j=1}^d\Lambda_j$ has singular values $\{\prod_{j=1}^d\sigma_{j,r_j}:\;r_j\leq R_j,j\leq d\}$, which implies (a):
	\begin{equation*}
		\left\|\bigotimes_{j=1}^d\Lambda_j\right\|_{\rm op} = \max_{r_j\leq R_j,j\leq d}\prod_{j=1}^d\sigma_{j,r_j} = \prod_{j=1}^d\max_{r_j\leq R_j}\sigma_{j,r_j} = \prod_{j=1}^d\|\Lambda_j\|_{\rm op}.
	\end{equation*}
\end{proof}

\begin{proof}[Proof of Proposition~\ref{prop:id}]
	By Lemma~\ref{lemma:kronecker}, 
	\begin{equation*}
		\left(\bigotimes_{k\ne j}\Lambda_k\right)^\top\left(\bigotimes_{k\ne j}\Lambda_k\right)  = \bigotimes_{k\ne j}(\Lambda_j^\top \Lambda_j) = \bigotimes_{k\ne j}I_{R_k}= I_{\prod_{k\ne j}R_k},
	\end{equation*}
	where the second equality follows under Assumption~\ref{as:orthogonal}, (i).
\end{proof}

\begin{proof}[Proof of Theorem~\ref{thm:rates}]
	Recall that $\hat \Lambda_j$ is $N_j\times R_j$ matrix of $R_j$ leading left singular vectors of
	\begin{equation*}
		\mathbf{Y}_{(j)} = \Lambda_j\mathbf{G}_{(j)}\left(\bigotimes_{k\ne j}\Lambda_k\right)^\top + \mathbf{U}_{(j)}.
	\end{equation*}
	
	In what follows, let $c,C>0$ be some numerical constant with values that can be different in different places. By \cite{cai2018rate}, equation (1.17) and the equation below it, we have
	\begin{equation*}
		\begin{aligned}
			\Pr\left(\|\sin\Theta(\hat\Lambda_j,\Lambda_j)\|_{\rm op}^2 >  C\frac{\left(\delta^2 + \prod_{l\ne j}N_l\right)\|\Lambda_j^\top\mathbf{Y}_{(j)}P_{\mathbf{Y}_{(j)}^\top\Lambda_j} \|^2_{\rm op}}{\delta^4}\right) \leq Ce^{-\frac{c\delta^4}{\delta^2 + \prod_{l\ne j}N_l}},
		\end{aligned}
	\end{equation*}
	where 
	\begin{equation*}
		\|\sin\Theta(\hat\Lambda_j,\Lambda_j)\|_{\rm op} = \max_{1\leq r\leq R_j}\cos^{-1}(\hat \sigma_{r,j})
	\end{equation*}
	and $\hat\sigma_{1,j}\geq \dots\geq \hat \sigma_{j,R_j}\geq 0$ are the singular values of $\Lambda_j^\top\hat\Lambda_j$.
	
	By \cite{cai2018rate}, equation (1.16), with $x=C\sqrt{N_j}$, we also have
	\begin{equation*}
		\begin{aligned}
			\Pr\left(\|\Lambda_j^\top\mathbf{Y}_{(j)}P_{\mathbf{Y}_{(j)}^\top\Lambda_j} \|_{\rm op} \geq C\sqrt{N_j}\right) & \leq Ce^{CN_j - c\min\{C^2N_j,C\sqrt{N_j}\sqrt{\delta^2 + \prod_{l\ne j}^dN_l} \}} + Ce^{-c(\delta^2 + \prod_{l\ne j}N_l)} \\
			& \lesssim e^{-cN_j} + e^{-c(\delta^2 + \prod_{l\ne j}N_l)} \lesssim Ce^{-cN_j},
		\end{aligned}
	\end{equation*}
	where we use $\delta^2\gtrsim \prod_{j=1}^d\sqrt{N_j}+\max_{1\leq j\leq d}N_j$, cf. Assumption~\ref{as:strong_factors}.
	
	Therefore,
	\begin{equation*}
		\|\hat\Lambda_jO_j - \Lambda_j\|_{\rm F}^2 \leq 2R_j\|\sin\Theta(\hat\Lambda_j,\Lambda_j)\|_{\rm op}^2 \lesssim \frac{R_j\left(\delta^2 + \prod_{l\ne j}N_l\right)N_j}{\delta^4}
	\end{equation*}
	with probability at least $1-Ce^{-cN_j}$.
\end{proof}

The following Lemma provides some useful properties of metricized mode-$j$ products. 
\begin{lemma}\label{lemma:tensor_matricization}
	For tensors $\mathbf{Y}\in\R^{N_1\times\dots\times N_d}$ and $\mathbf{G}\in\R^{R_1\times\dots\times R_d}$, and matrices $\Lambda_l\in\R^{N_l\times R_l}$ with $1\leq l\leq d$, we have
	\begin{enumerate}
		\item $\left(\mathbf{Y}\bigtimes_{l\ne j}\Lambda_l^\top \right)_{(j)} = \mathbf{Y}_{(j)}\left(\bigotimes_{l\ne j}\Lambda_l\right)$ for all $1\leq j\leq d$.
		\item $\left(\mathbf{G}\bigtimes_{l=1}^d\Lambda_l \right)_{(j)} =\Lambda_j \mathbf{G}_{(j)}\left(\bigotimes_{l\ne j}\Lambda_l\right)^\top$ for all $1\leq j\leq d$.
	\end{enumerate}
\end{lemma}
\begin{proof}
	These properties can be verified by looking at the entries of corresponding tensors; see also Proposition 3.7 in \cite{kolda2006multilinear}.
\end{proof}

\begin{lemma}\label{lemma:errors_tensor}
	Suppose that $\mathbf{U}\in\R^{N_1\times\dots\times N_d}$ is a tensor with i.i.d. mean-zero subgaussian entries. Then
	\begin{equation*}
		\left\|\mathbf{U}_{(j)}\bigotimes_{l\ne j} \Lambda_l\right\|_{\rm op} \lesssim \sqrt{N_j} +\sqrt{\prod_{l\ne j}R_l}
	\end{equation*}
	and
	\begin{equation*}
		\sup_{\|\Lambda_l\|_{\rm op}\leq 1,\forall l\ne j}\left\|\mathbf{U}_{(j)}\bigotimes_{l\ne j} \Lambda_l\right\|_{\rm op} \lesssim \sqrt{N_j} +\sqrt{\prod_{l\ne j}R_l} + \sum_{l\ne j}\sqrt{N_lR_l}
	\end{equation*}
	with probability at least $1-e^{-c\min_{1\leq j\leq d}N_j}$ for some $c>0$.
\end{lemma}
\begin{proof}
	See \cite{han2022exact}, Lemma 9.
\end{proof}

\begin{proof}[Proof of Theorem~\ref{thm:als}]
	By Lemma~\ref{lemma:tensor_matricization}, the mode-$j$ matricization of $\hat{\mathbf{Y}}^{(k)}$ at step $k$ computed in Algorithm~\ref{alg:hooi} is
	\begin{equation}\label{eq:projection}
		\begin{aligned}
			\hat{\mathbf{Y}}^{(k)}_{(j)} & = \left( \mathbf{Y}\bigtimes_{l\ne j}\hat \Lambda_l^{(k-1)\top}\right)_{(j)} \\
			& = \mathbf{Y}_{(j)}\bigotimes_{l\ne j}\hat \Lambda_l^{(k-1)} \\
			& = \Lambda_j\mathbf{G}_{(j)}\left(\bigotimes_{l\ne j}\Lambda_l\right)^\top\bigotimes_{l\ne j}\hat \Lambda_l^{(k-1)} + \mathbf{U}_{(j)}\bigotimes_{l\ne j}\hat \Lambda_l^{(k-1)} \\
			& = \Lambda_j\mathbf{G}_{(j)}\left(\bigotimes_{l\ne j}O_l\right)^\top + \Lambda_j\mathbf{G}_{(j)}\left(\bigotimes_{l\ne j}O_l\right)^\top\bigotimes_{l\ne j}\left\{O_l\Lambda_l^\top\hat\Lambda_l^{(k-1)}- I_{R_l}\right\} + \mathbf{U}_{(j)}\bigotimes_{l\ne j}\hat \Lambda_l^{(k-1)},
		\end{aligned}
	\end{equation}
	where the last line follows by Lemma~\ref{lemma:kronecker}. Consider the $\sin\Theta$ distance between the singular subspaces
	\begin{equation*}
		\|\sin\Theta(\hat\Lambda_j,\Lambda_j)\|_{\rm op} = \max_{1\leq r\leq R_j}\cos^{-1}(\hat \sigma_{r,j}),
	\end{equation*}
	where $\hat\sigma_{1,j}\geq \dots\geq \hat \sigma_{j,R_j}\geq 0$ are the singular values of $\Lambda_j^\top\hat\Lambda_j$.
	
	Since $\hat\Lambda_j^{(k)}$ and $\Lambda_j$ are the $R_j$ leading left-singular vectors of $\hat{\mathbf{Y}}_{(j)}^{(k)}$ and the first term in the last line of equation~(\ref{eq:projection}) respectively, by Wedin's perturbation theorem, see \cite{wedin1972perturbation},
	\begin{equation*}
		\begin{aligned}
			\|\sin\Theta(\hat\Lambda_j,\Lambda_j)\|_{\rm op}  & \lesssim \frac{1}{\delta}\left\|\Lambda_j\mathbf{G}_{(j)}\bigotimes_{l\ne j}\left\{\Lambda_l^\top\hat\Lambda_l^{(k-1)}-O_l^\top\right\} + \mathbf{U}_{(j)}\bigotimes_{l\ne j}\hat \Lambda_l^{(k-1)}\right\|_{\rm op} \\
			& \leq \frac{1}{\delta}\left\|\Lambda_j\mathbf{G}_{(j)}\bigotimes_{l\ne j}\left\{\Lambda_l^\top\hat\Lambda_l^{(k-1)}-O_l^\top\right\} \right\|_{\rm op} + \frac{1}{\delta}\left\|\mathbf{U}_{(j)}\bigotimes_{l\ne j}\hat \Lambda_l^{(k-1)}\right\|_{\rm op}.
		\end{aligned}
	\end{equation*}
	Under Assumption~\ref{as:data}, by Lemma~\ref{lemma:errors_tensor}
	\begin{equation*}
		\begin{aligned}
			\left\|\mathbf{U}_{(j)}\bigotimes_{l\ne j}\hat \Lambda_l^{(k-1)}\right\|_{\rm op} & \leq \sup_{\|\Lambda_l\|_{\rm op}\leq 1,\forall l\ne j}\left\|\mathbf{U}_{(j)}\bigotimes_{l\ne j} \Lambda_l\right\|_{\rm op} \\
			& \lesssim \sum_{j=1}^d\sqrt{N_j} + \sqrt{\prod_{l\ne j}R_l} + \sum_{l\ne j}\sqrt{N_lR_l} \\
			& \lesssim d\sqrt{N} + R^{(d-1)/2} + (d-1)\sqrt{NR},
		\end{aligned}	
	\end{equation*}	
	where we use $N_j\sim N$ and define $R=\max_{1\leq j\leq d}R_j$.

	Under Assumptions~\ref{as:orthogonal} and \ref{as:strong_factors}, by Lemma~\ref{lemma:kronecker}, we also have
	\begin{equation*}
		\begin{aligned}
			\frac{1}{\delta}\left\|\Lambda_j\mathbf{G}_{(j)}\bigotimes_{l\ne j}\left\{\Lambda_l^\top\hat\Lambda_l^{(k-1)}- O_l^\top\right\}\right\|_{\rm op} & \leq \frac{\sigma_{1,j}}{\delta}\left\|\bigotimes_{l\ne j}\left\{\Lambda_l^\top\hat\Lambda_l^{(k-1)}-O_l^\top\right\}\right\|_{\rm op} \\
			& \lesssim\prod_{l\ne j}\left\|\Lambda_l^\top (\hat \Lambda_l^{(k-1)}O_l - \Lambda_l) \right\|_{\rm op} \\
			& \leq \prod_{l\ne j}\left\|\hat\Lambda_l^{(k-1)}O_l - \Lambda_l \right\|_{\rm F} \\
		\end{aligned}
	\end{equation*}
	This shows that
	\begin{equation}\label{eq:als_sin_Theta}
		\|\sin\Theta(\hat\Lambda_j,\Lambda_j)\|_{\rm op} \lesssim  \prod_{l\ne j}\left\|\hat\Lambda_l^{(k-1)}O_l - \Lambda_l \right\|_{\rm F} + \frac{d\sqrt{N} + R^{(d-1)/2} + (d-1)\sqrt{RN}}{\delta}.
	\end{equation}
	From the proof of Theorem~\ref{thm:rates}, we know that the naive TPCA computed with Algorithm~\ref{alg:hosvd} satisfies
	\begin{equation*}
		E_0\lesssim \frac{\sqrt{N}}{\delta} + \frac{N^{d/2}}{\delta^2}
	\end{equation*}
	with probability at least $1-Ce^{-cN}$, provided that $N_j\sim N$ for all $1\leq j\leq d$. Since $\delta^{-4}N^d = o(1)$, we have $E_0=o(1)$ for a sufficiently large $N$. Let $E_k := \max_{1\leq j\leq d}\|\sin\Theta(\hat\Lambda_j,\Lambda_j)\|_{\rm op} $ be the largest error across all matricizations at iterations $k=0,1,\dots,\bar k$.  Then
	\begin{equation*}
		\begin{aligned}
			E_k & \lesssim E_{k-1}^{d-1} + \frac{d\sqrt{N} + \sqrt{R^{d-1}} + (d-1)\sqrt{RN}}{\delta}
		\end{aligned}
	\end{equation*}
	with probability at least $1-Ce^{-cN}$. If $E_{k-1}=o(1)$, then $E_k=o(1)$ since $N^{d/4}/\delta=o(1)$ and $R=O(1)$. This shows that $E_k=o(1),\forall k=0,1,\dots,\bar k$ because $E_0=o(1)$. Therefore, we can bound the ALS error for a sufficiently large $N$ as
	\begin{equation*}
		E_k \leq 0.5E_{k-1} + c\frac{\sqrt{N}}{\delta}.
	\end{equation*}
	Iterating this inequality, we obtain
	\begin{equation*}
		E_k \leq 2c\frac{\sqrt{N}}{\delta} + 0.5^kE_0 \lesssim \frac{\sqrt{N}}{\delta} + 0.5^k\left(\frac{\sqrt{N}}{\delta} + \frac{N^{d/2}}{\delta^2}\right) \lesssim \frac{\sqrt{N}}{\delta},
	\end{equation*}
	where the last inequality follows provided that $k\geq \log(N^{(d-1)/2}/\delta)/\log 2$. This shows that
	\begin{equation*}
		\|\hat\Lambda_j^{(\bar k)}O_j - \Lambda_j\|_{\rm F}^2 \leq 2R_j\|\sin\Theta(\hat\Lambda_j,\Lambda_j)\|_{\rm op}^2 \lesssim E_k^2 \lesssim \frac{N}{\delta^2}.
	\end{equation*}
\end{proof}

Next, we prove a central limit theorem for quadratic forms which is an important ingredient in the proof of Theorem~\ref{thm:clt}.
\begin{lemma}\label{lemma:clt_quadratic_form}
	Suppose that $\mathbf{U}=\{u_{i,j}:1\leq i\leq N_1,1\leq j\leq N_2 \}$ is a matrix of i.i.d.\ random variables such that $\E(u_{i,j})=0$, $\Var(u_{i,j})=\sigma^2$, and $\E|u_{i,j}|^{2+\delta}<\infty$ for some $\delta>0$. Let $(x_r)_{r=1}^R$ and $(y_r)_{r=1}^R$ be two orthonormal sets in $\R^{N_1}$ and $\R^{N_2}$ such that $\max_r\|x_r\|_\infty = o(1)$ or $\max_r\|y_r\|_\infty=o(1)$. Then
	\begin{equation*}
		(x_r^\top \mathbf{U} y_k)_{1\leq r,k\leq R} \xrightarrow{d} Z,
	\end{equation*}
	where $Z$ is an $R\times R$ matrix with i.i.d.\ $N(0,\sigma^2)$ entries.
\end{lemma}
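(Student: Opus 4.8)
The plan is to establish the joint asymptotic normality via the Cram\'er--Wold device combined with the Lyapunov central limit theorem for a weighted sum of independent variables. First I would fix an arbitrary array of coefficients $(a_{r,k})_{1\le r,k\le R}$ and rewrite the corresponding linear combination of the entries as
\begin{equation*}
	\sum_{r,k}a_{r,k}\,x_r^\top\mathbf{U}y_k = \sum_{i=1}^{N_1}\sum_{j=1}^{N_2}u_{i,j}\,c_{i,j},\qquad c_{i,j}:=\sum_{r,k}a_{r,k}x_{r,i}y_{k,j},
\end{equation*}
where $x_{r,i}$ and $y_{k,j}$ denote the components of $x_r$ and $y_k$. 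Since the $u_{i,j}$ are i.i.d.\ with mean zero, the right-hand side is a sum of independent mean-zero terms indexed by the pair $(i,j)$, so it suffices to show it converges to $N\!\left(0,\sigma^2\sum_{r,k}a_{r,k}^2\right)$; by Cram\'er--Wold this identifies the limit of the full array as the $R\times R$ Gaussian matrix with independent $N(0,\sigma^2)$ entries.

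The second step is the variance computation, which I would carry out exactly using orthonormality. The variance of the scalar sum equals $\sigma^2\sum_{i,j}c_{i,j}^2$, and expanding the square and summing over $(i,j)$ gives
\begin{equation*}
	\sum_{i,j}c_{i,j}^2 = \sum_{r,k}\sum_{r',k'}a_{r,k}a_{r',k'}\langle x_r,x_{r'}\rangle\langle y_k,y_{k'}\rangle = \sum_{r,k}a_{r,k}^2,
\end{equation*}
because $\langle x_r,x_{r'}\rangle=\delta_{rr'}$ and $\langle y_k,y_{k'}\rangle=\delta_{kk'}$. Thus the second moment already matches the target for every $N_1,N_2$, and no limiting argument is needed for the variance.

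The final and main step is the Lyapunov condition, which I expect to be the crux. Using independence and $\E|u_{i,j}|^{2+\delta}=:C<\infty$, the Lyapunov ratio is proportional to $\sum_{i,j}|c_{i,j}|^{2+\delta}$, and since $\sum_{i,j}c_{i,j}^2$ is a fixed constant it is enough to show $\max_{i,j}|c_{i,j}|\to 0$, in view of
\begin{equation*}
	\sum_{i,j}|c_{i,j}|^{2+\delta}\le\Big(\max_{i,j}|c_{i,j}|\Big)^{\delta}\sum_{i,j}c_{i,j}^2.
\end{equation*}
This is exactly where the $\ell_\infty$ delocalization hypothesis enters: if $\max_r\|x_r\|_\infty=o(1)$, then $|c_{i,j}|\le\big(\max_r\|x_r\|_\infty\big)\sum_{r,k}|a_{r,k}|\,|y_{k,j}|\le\big(\max_r\|x_r\|_\infty\big)\sum_{r,k}|a_{r,k}|$ uniformly in $(i,j)$, using $|y_{k,j}|\le\|y_k\|=1$ and that $R$ is fixed; the case $\max_k\|y_k\|_\infty=o(1)$ is symmetric. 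The delicate point is precisely this translation of $\|x_r\|_\infty=o(1)$ (or the $y$ analogue) into $\max_{i,j}|c_{i,j}|\to0$: without the delocalization the weights $c_{i,j}$ could concentrate on a few coordinates and the sum would fail to be asymptotically Gaussian, so the entire argument hinges on it. Once this is in place, the Lyapunov CLT yields the scalar limit and Cram\'er--Wold delivers the claimed joint convergence.
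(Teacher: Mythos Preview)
Your proof is correct and follows the same high-level architecture as the paper (Cram\'er--Wold plus a Lyapunov CLT, with the variance computed exactly via orthonormality), but your verification of the Lyapunov condition is organized differently and is in fact more elementary. The paper groups the double sum into a single-indexed triangular array $\xi_{N_2,i}=\sum_j u_{i,j}\sum_{r,k}t_{r,k}x_{r,i}y_{k,j}$ (or symmetrically in $j$) and then applies Rosenthal's inequality to bound $\E|\xi_{N_2,i}|^{2+\delta}$, eventually reducing to $\max_r\|x_r\|_\infty^\delta=o(1)$. You instead keep the full double index $(i,j)$, which makes each summand $u_{i,j}c_{i,j}$ already independent with an explicit $(2+\delta)$-moment, and close with the one-line bound $\sum_{i,j}|c_{i,j}|^{2+\delta}\le(\max_{i,j}|c_{i,j}|)^\delta\sum_{i,j}c_{i,j}^2$ together with $\max_{i,j}|c_{i,j}|\le(\max_r\|x_r\|_\infty)\sum_{r,k}|a_{r,k}|$. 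This sidesteps Rosenthal entirely and makes the role of the delocalization hypothesis $\|x_r\|_\infty=o(1)$ (or its $y$-analogue) completely transparent; the paper's route, on the other hand, packages the argument so that only one of the two dimensions needs to be summed explicitly, which is convenient if one later wants to allow dependence across the other dimension.
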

\begin{proof}
	According to the Cram\'{e}r-Wold argument, it is enough to show that
	\begin{equation*}
		\sum_{r,k}t_{r,k}x_r^\top\mathbf{U}y_k \xrightarrow{d}\sum_{r,k}t_{r,k}Z_{r,k}
	\end{equation*}
	for every $(t_{r,k})_{1\leq r,k\leq R}\in\R^{R\times R}$. Note that since $\mathbf{U}$ is a matrix of i.i.d.\ mean zero random variables,
	\begin{equation*}
		\sum_{r,k}t_{r,k}x_r^\top\mathbf{U}y_k  = \sum_{i=1}^{N_1}\underbrace{\sum_{j=1}^{N_2}u_{i,j}\sum_{r,k}t_{r,k}x_{r,i}y_{k,j}}_{=: \xi_{N_2,i}}  = \sum_{j=1}^{N_2}\underbrace{\sum_{i=1}^{N_1}u_{i,j}\sum_{r,k}t_{r,k}x_{r,i}y_{k,j}}_{=: \xi_{N_1,j}}
	\end{equation*}
	is a partial sum of triangular arrays of independent centered random variables. Moreover,
	\begin{equation*}
		\begin{aligned}
			\Var\left(\sum_{r,k}t_{r,k}x_r^\top\mathbf{U}y_k \right) & = \sigma^2\sum_{r,k,r',k'}t_{r,k}t_{r',k'}\langle x_r,x_{r'}\rangle\langle y_k,y_{k'}\rangle \\
			& = \sigma^2\sum_{r,k}t_{r,k}^2 \\
			& = \Var\left(\sum_{r,k}t_{r,k}\xi_{r,k}\right),
		\end{aligned}
	\end{equation*}
	where  the second line follows under orthonormality. The result follows provided that one of the following Lyapunov's conditions
	\begin{equation*}
		\sum_{i=1}^{N_1}\E\left|\xi_{N_2,i}\right|^{2+\delta} = o(1)\qquad \text{or}\qquad \sum_{j=1}^{N_2}\E\left|\xi_{N_1,j}\right|^{2+\delta} = o(1)
	\end{equation*}
	holds for some $ \delta>0$; see \cite{bill1995_book}, Theorem 27.3. By Rosenthal's inequality, see \cite{petrov1995limit}, Theorem 2.9, there exists $c_\delta<\infty$ such that
	\begin{equation*}
		\begin{aligned}
			\E\left|\xi_{N_2,i}\right|^{2+\delta} & = \E\left|\sum_{j=1}^{N_2}u_{i,j}\sum_{r,k}t_{r,k}x_{r,i}y_{k,j}\right|^{2+\delta} \\
			& \leq c_\delta\left\{\sum_{j=1}^{N_2}\E\left|u_{i,j}\sum_{r,k}t_{r,k}x_{r,i}y_{k,j}\right|^{2+\delta} + \left(\sum_{j=1}^{N_2}\E\left|u_{i,j}\sum_{r,k}t_{r,k}x_{r,i}y_{k,j}\right|^2\right)^{1+\delta/2}\right\} \\
			& \lesssim \sum_{r,k}t_{r,k}^{2+\delta}\sum_{j=1}^{N_2}x_{r,i}^{2+\delta}y_{k,j}^{2+\delta} + \sum_{r,k}t_{r,k}^{2+\delta}\left(\sum_{j=1}^{N_2}x_{r,i}^2y_{k,j}^2\right)^{1+\delta/2},
		\end{aligned}
	\end{equation*}
	where the last line follows by Jensen's inequality and maintained assumptions. Therefore,
	\begin{equation*}
		\begin{aligned}
			\sum_{i=1}^{N_1}\E\left|\xi_{N_2,i}\right|^{2+\delta} & \lesssim \max_r\|x_r\|_{2+\delta}^{2+\delta}\max_k\left[\|y_k\|_{2+\delta}^{2+\delta} + \|y_k\|_2^{2+\delta} \right] \\
			& \leq 2\max_r\|x_r\|_\infty^\delta = o(1),
		\end{aligned}
	\end{equation*}
	where we use the fact that $\|y_k\|_{2+\delta}\leq \|y_k\|_2$ and orthonormality. Similarly, we could verify the second Lyapunov's condition when $\max_r\|y_r\|=o(1)$.
\end{proof}

\begin{proof}[Proof of Theorem~\ref{thm:clt}]
	Note that
	\begin{equation*}
		\begin{aligned}
			\mathbf{Y}_{(j)}\mathbf{Y}_{(j)}^\top - \Lambda_jD_j\Lambda_j^\top & = \Lambda_j\mathbf{G}_{(j)}\left(\bigotimes_{l\ne j}\Lambda_l\right)^\top\mathbf{U}_{(j)}^\top + \mathbf{U}_{(j)}\left(\bigotimes_{l\ne j}\Lambda_l\right)\mathbf{G}_{(j)}^\top\Lambda_j^\top + \mathbf{U}_{(j)}\mathbf{U}_{(j)}^\top \\
			& =: E_j.
		\end{aligned}
	\end{equation*}
	By the triangle inequality
	\begin{equation*}
		\begin{aligned}
			\|E_j\|_{\rm op} & \leq  \left\|\Lambda_j\mathbf{G}_{(j)}\left(\bigotimes_{l\ne j}\Lambda_l\right)^\top\mathbf{U}_{(j)}^\top\right\|_{\rm op} + \left\|\mathbf{U}_{(j)}\left(\bigotimes_{l\ne j}\Lambda_l\right)\mathbf{G}_{(j)}^\top\Lambda_j^\top\right\|_{\rm op} + \|\mathbf{U}_{(j)}\mathbf{U}_{(j)}^\top\|_{\rm op} \\
			& = 2\left\|\mathbf{U}_{(j)}\left(\bigotimes_{l\ne j}\Lambda_l\right)\mathbf{G}_{(j)}^\top\Lambda_j^\top\right\|_{\rm op} + \|\mathbf{U}_{(j)}\|_{\rm op}^2
		\end{aligned}	
	\end{equation*}
	Under Assumption~\ref{as:data} by \cite{latala2005some}
	\begin{equation*}
		\E\|\mathbf{U}_{(j)}\|_{\rm op} = O\left(\sqrt{N_j} + \prod_{l\ne j}\sqrt{N_l}\right).
	\end{equation*}
	Consider the SVD decomposition
	\begin{equation*}
		\left(\bigotimes_{l\ne j}\Lambda_l\right)\mathbf{G}_{(j)}^\top\Lambda_j^\top = V_jD_j^{1/2}\Lambda_j^\top = \sum_{r=1}^{R_j}\sigma_{j,r}v_{j,r}\lambda_{j,r}^\top.
	\end{equation*}
	Then
	\begin{equation*}
		\begin{aligned}
			\left\|\mathbf{U}_{(j)}\left(\bigotimes_{l\ne j}\Lambda_l\right)\mathbf{G}_{(j)}^\top\Lambda_j^\top\right\|_{\rm op} & \leq \sum_{r=1}^{R_j}\sigma_{j,r}\|\mathbf{U}_{(j)}v_{j,r}\lambda_{j,r}^\top\|_{\rm op} \\
			& = \sum_{r=1}^{R_j}\sigma_{j,r}\|\mathbf{U}_{(j)}v_{j,r}\| \\
			& \lesssim \prod_{j=1}^d\sqrt{N_j}\max_{1\leq r\leq R_j}\|\mathbf{U}_{(j)}v_{j,r}\|,
		\end{aligned}
	\end{equation*}
	where we use $\lambda_{j,r}^\top\lambda_{j,r}=1$ under Assumption~\ref{as:orthogonal} and $\sigma_{j,R_j}^2\sim \prod_{j=1}^dN_j$ under Assumption~\ref{as:clt} (i). Under Assumption~\ref{as:data}, we also have
	\begin{equation*}
		\E\|\mathbf{U}_{(j)}v_{j,r}\|^2 = v_{j,r}^\top\E[\mathbf{U}_{(j)}^\top \mathbf{U}_{(j)}]v_{j,r} = \sigma^2N_j,
	\end{equation*}
	since $v_{j,r}^\top v_{j,r}=1$. This shows that
	\begin{equation*}
		\|E_j\|_{\rm op}/\sigma^2_{j,R_j} = O_P\left(\frac{1}{\prod_{l\ne j}\sqrt{N_l}} + \frac{1}{N_j}\right) = o_P(1),\quad \text{as}\quad (N_1,\dots,N_d)\to \infty.
	\end{equation*}
	Therefore, $\|E_j\|_{\rm op}/\sigma_{j,R_j}^2<1/4$ with probability approaching one. Let $\Lambda_{j\perp}$ be the orthonormal basis of the orthogonal complement to the column space of $\Lambda_j$. By Theorem~\ref{thm:perturbation}
	\begin{equation*}
		\hat\Lambda_jO - \Lambda_j = \Lambda_{j\perp}\Lambda_{j\perp}^\top E_j\Lambda_jD_j^{-1} + r_j,
	\end{equation*}
	where the remainder term is
	\begin{equation*}
		\begin{aligned}
			\|r_j\|_{\rm op} & \leq 8\sigma_{j,1}^2\|E_j\|_{\rm op}^2/\sigma^6_{j,R_j} \\
			& = O_P\left(\frac{1}{\prod_{j=1}^dN_j} + \frac{1}{N_j^2} \right).
		\end{aligned}	
	\end{equation*}
	Therefore,
	\begin{equation*}
		\begin{aligned}
			\hat\Lambda_jO_j - \Lambda_j & = \Lambda_{j\perp}\Lambda_{j\perp}^\top E_j\Lambda_jD_j^{-1} + r_j\\
			& = \Lambda_{j\perp}\Lambda_{j\perp}^\top\left[\Lambda_j\mathbf{G}_{(j)}\left(\bigotimes_{l\ne j}\Lambda_l\right)^\top\mathbf{U}_{(j)}^\top + \mathbf{U}_{(j)}\left(\bigotimes_{l\ne j}\Lambda_l\right)\mathbf{G}_{(j)}^\top\Lambda_j^\top + \mathbf{U}_{(j)}\mathbf{U}_{(j)}^\top \right]\Lambda_jD_j^{-1}  + r_j \\
			& = \Lambda_{j\perp}\Lambda_{j\perp}^\top\mathbf{U}_{(j)}\left(\bigotimes_{l\ne j}\Lambda_l\right)\mathbf{G}_{(j)}^\top D_j^{-1} +  \Lambda_{j\perp}\Lambda_{j\perp}^\top\mathbf{U}_{(j)}\mathbf{U}_{(j)}^\top\Lambda_jD_j^{-1} + r_j \\
			& = \mathbf{U}_{(j)}V_jD_j^{-1/2} +  \Lambda_{j\perp}\Lambda_{j\perp}^\top\mathbf{U}_{(j)}\mathbf{U}_{(j)}^\top\Lambda_jD_j^{-1} - \Lambda_j\Lambda_j^\top\mathbf{U}_{(j)}V_jD_j^{-1/2}   + r_j \\
		\end{aligned}
	\end{equation*}
	where we use $\Lambda_j\Lambda_j^\top + \Lambda_{j\perp}\Lambda_{j\perp}^\top = I$ and $\left(\bigotimes_{l\ne j}\Lambda_l\right)\mathbf{G}_{(j)}^\top D_j^{-1} = V_jD_j^{-1/2}$. Under Assumption~\ref{as:clt} (ii)
	\begin{equation*}
		\|\Lambda_j\Lambda_j^\top \mathbf{U}_{(j)}V_j \|_{2,\infty} \lesssim  \sqrt{\frac{R_j}{N_j}}\|\mathbf{U}_{(j)}V_j\|_{2,\infty},
	\end{equation*}
	where
	\begin{equation*}
		\|\mathbf{U}_{(j)}V_j\|_{2,\infty} = \max_{1\leq i\leq N_j}\|\mathbf{u}_{(j),i} V_j\| \leq R_j\max_{1\leq r\leq R_j}\max_{1\leq i\leq N_j}|\mathbf{u}_{(j),i} v_{j,r}|
	\end{equation*}
	and we partition $\mathbf{U}_{(j)}^\top = [\mathbf{u}_{(j),1},\dots, \mathbf{u}_{(j),N}]$. Under Assumption~\ref{as:data}, by \cite{vershynin2018high}, Lemma 3.4.2, this is a maximum of sub-Gaussian random variables, so that
	\begin{equation*}
		\|\Lambda_j\Lambda_j^\top \mathbf{U}_{(j)}V_j \|_{2,\infty}  = O_P\left(R_j\sqrt{\frac{\log N_j}{N_j}} \right).
	\end{equation*}
	Putting $B_j = \Lambda_{j\perp}\Lambda_{j\perp}^\top\mathbf{U}_{(j)}\mathbf{U}_{(j)}^\top\Lambda_jD_j^{-1}$, we obtain
	\begin{equation*}
		(\hat\Lambda_jO_j - \Lambda_j - B_j)D_j^{1/2} = \mathbf{U}_{(j)}V_j + s_j
	\end{equation*}
	with
	\begin{equation*}
		\begin{aligned}
			\|s_j\|_{2,\infty} & = O_P\left(R_j\sqrt{\frac{\log N_j}{N_j}}+\frac{\sigma_{j,1}}{\prod_{j=1}^dN_j} + \frac{\sigma_{j,1}}{N_j^2}\right) \\
			& = O_P\left(R_j\sqrt{\frac{\log N_j}{N_j}}+\frac{1}{\prod_{j=1}^d\sqrt{N_j}} + \frac{\prod_{j=1}^d\sqrt{N_j}}{N_j^2}\right),
		\end{aligned}
	\end{equation*}
	where the second line follows under Assumption~\ref{as:clt}. By Theorem~\ref{lemma:clt_quadratic_form} the $(i,j)^{\rm th}$ element of $\mathbf{U}_{(j)}V_j$ is
	\begin{equation*}
		\mathbf{u}_{(j),i}^\top V_j \xrightarrow{d} N(0,\sigma^2I_{R_j}),
	\end{equation*}
	given that $\|V_j\|_{2,\infty}=o(1)$.
\end{proof}

Next, we provide a proof of the central limit theorem for scale components.
\begin{proof}[Proof of Theorem~\ref{thm:clt_scale}]
	Consider the SVD decomposition
	\begin{equation*}
		\Lambda_j\mathbf{G}_{(j)}\left(\bigotimes_{l\ne j}\Lambda_l\right)^\top = \Lambda_jD_j^{1/2}V_j^\top = \sum_{r=1}^{R_j}\sigma_{j,r}\lambda_{j,r}v_{j,r}^\top.
	\end{equation*}	
	By \cite{kneip2001inference}, Lemma A.1 (a),
		\begin{equation*}
			\begin{aligned}
				\hat\sigma^2_{j,r} - \sigma^2_{j,r} & = \mathrm{trace}\left\{\lambda_{j,r}\lambda_{j,r}^\top(\Lambda_jD_j^{1/2}V_j^\top\mathbf{U}_{(j)}^\top + \mathbf{U}_{(j)}V_jD_j^{1/2}\Lambda_j^\top + \mathbf{U}_{(j)}\mathbf{U}_{(j)}^\top)\right\} + \xi_{j,r}\\
				& = \lambda_{j,r}^\top\left(\Lambda_jD_j^{1/2}V_j^\top\mathbf{U}_{(j)}^\top + \mathbf{U}_{(j)}V_jD_j^{1/2}\Lambda_j^\top + \mathbf{U}_{(j)}\mathbf{U}_{(j)}^\top\right)\lambda_{j,r} + \xi_{j,r} \\
				& = 2\sigma_{j,r} \lambda_{j,r}^\top\mathbf{U}_{(j)}v_{j,r} + \lambda_{j,r}^\top\mathbf{U}_{(j)}\mathbf{U}_{(j)}^\top \lambda_{j,r} + \xi_{j,r},
			\end{aligned}
	\end{equation*}
	where
	\begin{equation*}
		\begin{aligned}
			|\xi_{j,r}| & \leq \frac{6}{\min_{s\ne r}|\sigma^2_{j,s}-\sigma^2_{j,r}|}\left\|\mathbf{Y}_{(j)}\mathbf{Y}_{(j)}^\top - \Lambda_jD_j^{1/2}\Lambda_j^\top\right\|_{\rm op}^2 \\
			& = O_P\left(N_j + \frac{N_j}{\prod_{l\ne j}N_l} + \frac{\prod_{l\ne j}N_l}{N_j}\right) = o_P(\sigma_{j,r}),
		\end{aligned}
	\end{equation*}
	which follows from $\sigma_{j,r}^2\sim\prod_{j=1}^d N_j$ and the proof of Theorem~\ref{thm:clt}, provided  that $N_j/\prod_{l\ne j}N_l=o(1)$ and $\prod_{l\ne j}N_l/N_j^3= o(1)$.  This shows that
	\begin{equation*}
		\frac{\hat \sigma_{j,r}^2 - \sigma_{j,r}^2}{\sigma_{j,r}} = 2\lambda_{j,r}^\top\mathbf{U}_{(j)}v_{j,r} + \sigma_{j,r}^{-1}\lambda_{j,r}^\top\mathbf{U}_{(j)}\mathbf{U}_{(j)}^\top \lambda_{j,r} + o_P(1).
	\end{equation*}
	
	Next, under Assumptions~\ref{as:data} and \ref{as:clt}
	\begin{equation*}
		\Var(\lambda_{j,r}^\top\mathbf{U}_{(j)}\mathbf{U}_{(j)}^\top \lambda_{j,r}) = \prod_{l\ne j}N_l\Var(\langle \lambda_{j,r},\mathbf{u}_{(j),i}\rangle^2) = \prod_{l\ne j}N_l\left(\E\langle \lambda_{j,r},\mathbf{u}_{(j),i}\rangle^4 - \sigma^4\right) = O\left(\prod_{l\ne j}N_l\right).
	\end{equation*}
	This shows that $\sigma^{-1}_{j,r}\lambda_{j,r}^\top\mathbf{U}_{(j)}\mathbf{U}_{(j)}^\top \lambda_{j,r} = o_P(\sigma_{j,r} )$, hence,
	\begin{equation*}
		\frac{\hat\sigma_{j,r} ^2 - \sigma_{j,r} ^2}{\sigma_{j,r} } = 2\lambda_{j,r}^\top\mathbf{U}_{(j)}v_{j,r} + o_P(1).
	\end{equation*}
	Finally, under Assumptions~\ref{as:orthogonal}, \ref{as:data}, and \ref{as:clt}, by Lemma~\ref{lemma:clt_quadratic_form}
	\begin{equation*}
		(\lambda_{j,r}^\top\mathbf{U}_{(j)}v_{j,r})_{1\leq r\leq R_j} \xrightarrow{d} N(0,\sigma^2I_{R_j}).
	\end{equation*}
\end{proof}

\begin{proof}[Proof of Theorem~\ref{thm:test}]
	Under Assumption~\ref{as:data}, if $u_{i_1,\dots,i_d}\sim N(0,\sigma^2)$, then $\Lambda_{j\perp}^\top\mathbf{u}_i^{(j)}\in\R^{N_j-R_j}$ is a vector of i.i.d.\ $N(0,\sigma^2)$ for every $\Lambda_{j\perp}$ such that $\Lambda_{j\perp}^\top \Lambda_{j\perp}=I$. Then under Assumptions~\ref{as:orthogonal}, \ref{as:data} (i), and \ref{as:strong_factors}, by Lemma~\ref{lemma:perturbation},
	\begin{equation*}
		\hat\sigma^2_{R+r,j} = \lambda_r\left(\mathbf{U}_{(j)}\mathbf{U}^\top_{(j)}\right) + O_P\left(N_j + \frac{\prod_{k\ne j}N_k}{N_j}\right),
	\end{equation*}
	where with some abuse of notation $\mathbf{U}_{(j)}$ is $(N_j-R)\times\prod_{k\ne j}N_k$ matrix of i.i.d.\ $N(0,\sigma^2)$. 
	
	By \cite{karoui2003largest},
	\begin{equation*}
		\left(\frac{\lambda_r\left(\mathbf{U}_{(j)}\mathbf{U}^\top_{(j)}\right) - \lambda_{r+1}\left(\mathbf{U}_{(j)}\mathbf{U}^\top_{(j)}\right)}{\tau}\right)_{r=k+1}^{K+1} \xrightarrow{d} (\xi_r-\xi_{r+1})_{r=1}^{K-k+1}.
	\end{equation*}
	Therefore,
	\begin{equation*}
		\left(\frac{\hat\sigma^2_{r,j} - \hat\sigma^2_{r+1,j}}{\tau}\right)_{r=k+1}^{K+1} \xrightarrow{d} (\xi_r-\xi_{r+1})_{r=1}^{K-k+1},
	\end{equation*}
	since $N_j/\tau + \prod_{k\ne j}N_k/(N_j\tau) = o(1)$. The result under $H_0$ follows by the continuous mapping theorem.
	
	For $H_1$, recall that
	\begin{equation*}
		\mathbf{Y}_{(j)} = \Lambda_j\mathbf{G}_{(j)}\left(\bigotimes_{k\ne j}\Lambda_k\right)^\top + \mathbf{U}_{(j)}.
	\end{equation*}
	By Weyl's inequality for singular values, see \cite{horn_joh_2013}, Eq. (7.3.15),
	\begin{equation*}
		\left|\frac{\hat\sigma_{j,R_j}}{\sqrt{\prod_{k\ne j}N_k}} - \frac{\sigma_{j,R_j}}{\sqrt{\prod_{k\ne j}N_k}}\right| \leq \frac{\|\mathbf{U}_{(j)}\|_{\rm op}}{\sqrt{\prod_{k\ne j}N_k}}.
	\end{equation*}
	The right-hand side of this equation is $O_P(1)$ by \cite{latala2005some}, while $\sigma_{j,R_j}/\sqrt{\prod_{k\ne j}N_k}\uparrow\infty$ under under the strong factor hypothesis. Therefore, $\hat\sigma_{j,R_j}^2/\prod_{k\ne j}N_k\uparrow\infty$. This implies that $(\hat\sigma^2_{j,R_j} - \hat\sigma^2_{j,R_j+1})/\tau\to\infty$, and whence $S_j\uparrow\infty$ under $H_1$.
\end{proof}

\section{Perturbation Theory}
In this section we formulate the required perturbation expansions for eigenvectors and eigenvalues following \cite{kato1996perturbation}. The perturbation theory has been successfully applied in statistics, see \cite{watson1983statistics}, \cite{kneip2001inference}, or \cite{koltchinskii2017normal}, but is not widely adopted in the factor literature in econometrics; see \cite{onatski2006formal} or \cite{lei2023estimating} for notable exceptions.

\begin{theorem}\label{thm:perturbation}
	Let $\Lambda=[\lambda_1,\dots,\lambda_R]\in\mathbb{R}^{N\times R}$ be the eigenvectors of a symmetric matrix $\Sigma\in\mathbb{R}^{N\times N}$ with $\mathrm{rank}(\Sigma)=R<N$ corresponding to non-zero eigenvalues $\sigma_1>\dots>\sigma_R>0$. Let $\hat\Lambda = [\hat\lambda_1,\dots\hat\lambda_R]\in\mathbb{R}^{N\times R}$ be the leading eigenvectors of its noisy measure $\hat\Sigma=\Sigma+E$, where the noise satisfies $\|E\|_{\rm op}<\sigma_R/4$. Then
	\begin{equation*}
		\hat\Lambda O - \Lambda = \Lambda_{\perp}\Lambda_{\perp}^\top E\Lambda D^{-1} + r_N,
	\end{equation*}
	where $O=\hat\Lambda^\top\Lambda$, $\Lambda_\perp = [\lambda_{R+1},\dots,\lambda_N]\in\mathbb{R}^{N\times (N-R)}$, and $r_N\leq 8\sigma_1\|E\|^2_{\rm op}/\sigma_R^3$.
\end{theorem}
\begin{proof}
	Let $\Sigma = \sum_{j=1}^N\sigma_j\lambda_j\lambda_j^\top$  be the eigendecomposition of $\Sigma$. If $\gamma$ is a contour in $\mathbb{C}$ that does not go through any eigenvalue of $\Sigma$, then
	\begin{equation*}
		\begin{aligned}
			\frac{1}{2\pi i}\oint_\gamma (\zeta I - \Sigma)^{-1}\dx\zeta & = \frac{1}{2\pi i}\oint_\gamma\sum_{j=1}^N\frac{1}{\zeta - \sigma_j}\lambda_j\lambda_j^\top\dx\zeta \\
			& = \sum_{j:\sigma_j\in\gamma}\lambda_j\lambda_j^\top\frac{1}{2\pi i}\oint_\gamma \frac{1}{\zeta - \sigma_j}\dx\zeta  + \sum_{j:\sigma_j\not\in\gamma}\lambda_j\lambda_j^\top\frac{1}{2\pi i}\oint_\gamma \frac{1}{\zeta - \sigma_j}\dx\zeta \\ 
			& = \sum_{j:\sigma_j\in\gamma}\lambda_j\lambda_j^\top,
		\end{aligned}	
	\end{equation*}
	where by Cauchy-Goursat theorem the integral is zero if an eigenvalue is outside of $\gamma$. The above identity is known as the Riesz formula for spectral projections; see also \cite{kato1996perturbation}, Problem I.5.9.  If $\gamma$ is a circle passing through $\sigma_R/2$ and $\sigma_1+\sigma_R/2$, then
	\begin{equation*}
		\sum_{r=1}^R\lambda_r\lambda_r^\top = \frac{1}{2\pi i}\oint_\gamma (\zeta I-\Sigma)^{-1}\dx \zeta.
	\end{equation*}
	By Weyl's inequality, see \cite{vershynin2018high}, Theorem 4.5.3
	\begin{equation*}
		\max_{1\leq j\leq N}|\hat\sigma_j - \sigma_j| \leq \|E\|_{\rm op} < \frac{\sigma_R}{2}.
	\end{equation*}
	This shows that the eigenvalues $\hat\sigma_1\geq \dots\geq \hat\sigma_R$ are inside $\gamma$ while the eigenvalues $\hat\sigma_{R+1}\geq \dots\geq\hat\sigma_N$ are outside. Therefore, the Riesz formula also gives us
	\begin{equation*}
		\sum_{r=1}^R\hat\lambda_r\hat \lambda_r^\top = \frac{1}{2\pi i}\oint_\gamma (\zeta I - \hat\Sigma)^{-1} \dx \zeta,
	\end{equation*}
	and whence
	\begin{equation*}
		\begin{aligned}
			\hat\Lambda\hat\Lambda^\top - \Lambda\Lambda^\top & = \sum_{r=1}^R\hat\lambda_r\hat \lambda_r - \sum_{r=1}^R\lambda_r\lambda_r^\top \\
			& = \frac{1}{2\pi i}\oint_\gamma (\zeta I - \hat\Sigma)^{-1} -  (\zeta I - \Sigma)^{-1} \dx \zeta.
		\end{aligned}	
	\end{equation*}
	Since $\hat\Sigma=\Sigma + E$, it is easy to see that
	\begin{equation*}
		\begin{aligned}
			(\zeta I - \hat\Sigma)^{-1} & = \left\{(I - E(\Sigma-\zeta I)^{-1})(\zeta I  - \Sigma)\right\}^{-1} \\
			& = (\zeta I - \Sigma)^{-1}[I-E(\zeta I-E)^{-1}]^{-1} \\
			& =  (\zeta I - \Sigma)^{-1}\sum_{j=0}^\infty E^j(\zeta I-\Sigma)^{-j} \\
			& =  (\zeta I - \Sigma)^{-1} + \sum_{j=1}^\infty (\zeta I - \Sigma)^{-1}E^j(\zeta I-\Sigma)^{-j}
		\end{aligned}	
	\end{equation*}
	where we use the Neumann series which converges since 
	\begin{equation*}
		\|E(\Sigma-\zeta I)^{-1}\|_{\rm op} \leq \|E\|_{\rm op}\|(\Sigma-\zeta I)^{-1}\|_{\rm op} \leq  2\|E\|_{\rm op}/\sigma_R<1,\qquad \forall \zeta\in\gamma.
	\end{equation*}
	Therefore,
	\begin{equation*}
		\begin{aligned}
			\hat\Lambda\hat\Lambda^\top - \Lambda\Lambda^\top & = \frac{1}{2\pi i}\oint_\gamma  \sum_{j=1}^\infty (\zeta I - \Sigma)^{-1}E^j(\zeta I-\Sigma)^{-j} \dx \zeta \\
			& = \frac{1}{2\pi i}\oint_\gamma(\zeta I - \Sigma)^{-1}E(\zeta I - \Sigma)^{-1}\dx\zeta + \frac{1}{2\pi i}\sum_{j=2}^\infty\oint_\gamma (\zeta I-\Sigma)^{-1}E^j(\Sigma - \zeta I)^{-j} \dx \zeta \\
			& =: L + r_N.
		\end{aligned}	
	\end{equation*}
	Using the eigendecomposition $\Sigma = \sum_{j=1}^N\sigma_j\lambda_j\lambda_j^\top$, we obtain
	\begin{equation*}
		L = \sum_{j=1}^N\sum_{k=1}^N\frac{1}{2\pi i}\oint_\gamma\frac{\dx \zeta}{(\zeta - \sigma_j)(\zeta - \sigma_k)}\lambda_j\lambda_j^\top E\lambda_k\lambda_k^\top.
	\end{equation*}

	Since the first $R$ eigenvalues of $\Sigma$ are inside $\gamma$ and the remaining eigenvalues are zero, by Cauchy's integral formula
	\begin{equation*}
		\frac{1}{2\pi i}\oint_\gamma\frac{\dx \zeta}{(\zeta - \sigma_j)(\zeta - \sigma_k)} = \begin{cases}
			\frac{1}{\sigma_j}, & j\leq R \text{ and } k>R, \\
			\frac{1}{\sigma_k}, & j> R \text{ and } k\leq R, \\
			0, & \text{otherwise}.
		\end{cases}
	\end{equation*}
		
	This shows that
	\begin{equation*}
		\begin{aligned}
			L & = \sum_{j=1}^R\frac{1}{\sigma_j}\lambda_j\lambda_j^\top E\sum_{k=R+1}^N\lambda_k\lambda_k^\top + \sum_{j=R+1}^N\lambda_j\lambda_j^\top E\sum_{k=1}^R\frac{1}{\sigma_k}\lambda_k\lambda_k^\top \\
			& = \Lambda D^{-1}\Lambda^\top E\Lambda_{\perp}\Lambda_{\perp}^\top + \Lambda_{\perp}\Lambda_{\perp}^\top E\Lambda D^{-1}\Lambda^\top.
		\end{aligned}	
	\end{equation*}
	The result follows since the eigenvectors are orthonormal and
	\begin{equation*}
		\begin{aligned}
			|r_N| & \leq \frac{1}{2\pi}\sum_{j=2}^\infty\pi\sigma_1\sup_{\zeta\in\gamma}\|(\zeta I - \Sigma)\|^{-(j+1)}_{\rm op}\|E\|^j_{\rm op} \\
			& \leq \frac{\sigma_1}{2}\|E\|^2_{\rm op}\left(2/\sigma_R\right)^3\sum_{j=0}^\infty\left(2\|E\|_{\rm op}/\sigma_R \right)^j \\
			& = \frac{2\sigma_1}{\sigma_R^2}\|E\|_{\rm op}\frac{2\|E\|_{\rm op}/\sigma_R}{1-2\|E\|_{\rm op}/\sigma_R} \\
			& \leq \frac{8\sigma_1}{\sigma_R}\frac{\|E\|_{\rm op}^2}{\sigma_R^2},
		\end{aligned}	
	\end{equation*}
	provided that $\|E\|_{\rm op}\leq \sigma_R/4$.
\end{proof}

We also need an asymptotic expansion for the smallest singular values. Let $\Lambda_{j\perp}=[\lambda_{j,R_j+1},\dots,\lambda_{j,N_j}]\in\mathbb{R}^{N_j\times (N_j-R_j)}$ be eigenvectors corresponding to zero eigenvalues of  $\Lambda_jD_j\Lambda_j^\top$. The columns of $\Lambda_{j\perp}$ constitute an orthonormal basis of the null space of $\Lambda_jD_j\Lambda_j^\top$.
\begin{lemma}\label{lemma:perturbation}
	Suppose that Assumptions~\ref{as:orthogonal} and \ref{as:data} are satisfied, and $\sigma_{j,R_j}^2\sim \prod_{j=1}^dN_j$. Then
	\begin{equation*}
		\hat\sigma^2_{j,R_j+r} =  \lambda_r(\Lambda_{j\perp}^\top\mathbf{U}_{(j)}\mathbf{U}_{(j)}^\top \Lambda_{j\perp}) + O_P\left(N_j + \frac{\prod_{l\ne j}N_l}{N_j}\right),\qquad 1\leq \forall r\leq N_j-R_j,
	\end{equation*}
	where $\lambda_r(B)$ is the $r^{\rm th}$ largest eigenvalue of a matrix $B$.
\end{lemma}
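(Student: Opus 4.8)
The plan is to sandwich the eigenvalue $\hat\sigma^2_{R+r}=\lambda_{R+r}(A)$, where $A:=\mathbf{Y}\mathbf{Y}^\top$, between $\lambda_r(Q^\top\mathbf{U}\mathbf{U}^\top Q)$ and that same quantity minus a controlled error, using the variational characterization of eigenvalues together with the orthogonality $Q^\top M_1=0$. The starting observation is that, since $QQ^\top\mathbf{Y}=QQ^\top\mathbf{U}$ (because $Q^\top M_1=0$), the compression of $A$ to the column space of $Q$ is exactly $Q^\top A Q = Q^\top\mathbf{U}\mathbf{U}^\top Q$. Cauchy's interlacing theorem for this $(N_1-R)$-dimensional compression (see \cite{horn_joh_2013}) then delivers, with no error term, the upper bound $\hat\sigma^2_{R+r}\le\lambda_r(Q^\top\mathbf{U}\mathbf{U}^\top Q)=:\mu_r$.

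For the matching lower bound I would use the max--min form $\lambda_{R+r}(A)=\max_{\dim V=R+r}\min_{\|x\|=1,\,x\in V}x^\top A x$ evaluated at a single well-chosen subspace. Let $W$ denote the top-$r$ eigenspace of $Q^\top\mathbf{U}\mathbf{U}^\top Q$ and take $V=\mathrm{span}(M_1)\oplus QW$, which has dimension $R+r$ because $QW\subseteq\mathrm{span}(Q)=\mathrm{span}(M_1)^\perp$. Writing a unit vector $x\in V$ as $x=M_1a+Qw$ with $\|a\|^2+\|w\|^2=1$, Assumption~\ref{as:orthogonal_2_way} gives $M_1^\top x=a$, hence $\mathbf{Y}^\top x=M_2Da+\mathbf{U}^\top x$ and
\begin{equation*}
	x^\top A x = \|\mathbf{Y}^\top x\|^2 = a^\top D^2 a + 2a^\top D M_2^\top\mathbf{U}^\top x + \|\mathbf{U}^\top x\|^2 .
\end{equation*}
I would then bound $a^\top D^2 a\ge\sigma_R^2\|a\|^2$ and, expanding $\mathbf{U}^\top x=\mathbf{U}^\top M_1a+\mathbf{U}^\top Qw$ and dropping the nonnegative $\|\mathbf{U}^\top M_1a\|^2$ term, $\|\mathbf{U}^\top x\|^2\ge\mu_r\|w\|^2-2|a^\top M_1^\top\mathbf{U}\mathbf{U}^\top Qw|$, where the $\mu_r\|w\|^2$ comes from $w\in W$.

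The crux is the cross terms. Under Assumption~\ref{as:strong_factors_2_way} the signal coefficient satisfies $\sigma_R^2-\mu_r\sim N_1N_2$, since $\mu_r\le\|Q^\top\mathbf{U}\|_{\rm op}^2=O_P(N_1\vee N_2)$ by \cite{latala2005some} while $\sigma_R^2\sim d_RN_1N_2$. I would therefore absorb each cross term into $(\sigma_R^2-\mu_r)\|a\|^2$ via Young's inequality $\kappa\|a\|\|w\|\le\frac14(\sigma_R^2-\mu_r)\|a\|^2+\tfrac{4\kappa^2}{\sigma_R^2-\mu_r}\|w\|^2$. The needed operator norms, all routine second-moment consequences of Assumption~\ref{as:data_2_way}(i) as in the proof of Lemma~\ref{lemma:E}, are $\|DM_2^\top\mathbf{U}^\top M_1\|_{\rm op}=O_P(\sqrt{N_1N_2})$, $\|DM_2^\top\mathbf{U}^\top Q\|_{\rm op}=O_P(N_1\sqrt{N_2})$, and $\|M_1^\top\mathbf{U}\mathbf{U}^\top Q\|_{\rm op}=O_P(\sqrt{N_1N_2}+N_2)$. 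The purely quadratic-in-$\|a\|$ piece $O_P(\sqrt{N_1N_2})\|a\|^2$ is dominated by $\tfrac14(\sigma_R^2-\mu_r)\|a\|^2$, while the two bilinear pieces leave residual coefficients of $\|w\|^2\le1$ equal to $\kappa^2/(N_1N_2)=O_P(N_1)$ and $O_P(1+N_2/N_1)$ respectively. Collecting terms, $x^\top A x\ge\mu_r+\tfrac14(\sigma_R^2-\mu_r)\|a\|^2-O_P(N_1+N_2/N_1)\ge\mu_r-O_P(N_1+N_2/N_1)$ uniformly over $x\in V$ (the $O_P$ bounds are on fixed random operator norms, not on $x$). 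Hence $\lambda_{R+r}(A)\ge\mu_r-O_P(N_1+N_2/N_1)$, which together with the interlacing upper bound gives the claim.

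The main obstacle is precisely these cross terms: their operator norms ($\sim N_1\sqrt{N_2}$ and $\sim N_2$) dwarf the target error $N_1+N_2/N_1$, so a naive Weyl/operator-norm perturbation of $A$ about $QQ^\top\mathbf{U}\mathbf{U}^\top QQ^\top$ would be hopelessly lossy. The entire gain comes from the block structure $Q^\top M_1=0$, which forces each cross term to carry a factor $\|a\|$—a displacement into the signal subspace—that the $\sim N_1N_2\|a\|^2$ signal term can absorb; the only genuinely technical points are then the three operator-norm bounds above, which follow from the variance computations already used for Lemma~\ref{lemma:E} and from \cite{latala2005some}.
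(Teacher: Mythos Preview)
Your proof is correct and complete, but it takes a genuinely different route from the paper. The paper conjugates $\mathbf{Y}\mathbf{Y}^\top$ by the orthogonal matrix $[M_1\ Q]$ to obtain a $2\times2$ block form $A+\varkappa A^{(1)}$ with $\varkappa=1/N_1$, where $A=\mathrm{diag}(D^2/(N_1N_2),0)$ is block-diagonal, and then applies a ready-made perturbation result (\cite{onatski_ecma_2009}, Lemma~6) which directly bounds $|\lambda_{R+r}(A+\varkappa A^{(1)})-\varkappa\lambda_r(Q^\top\mathbf{U}\mathbf{U}^\top Q/N_2)|$ by $\varkappa^2\|A^{(1)}\|^2_{\rm op}/(0.5\sigma_R^2/(N_1N_2)-\varkappa\|A^{(1)}\|_{\rm op})$; the same operator-norm estimates from Lemma~\ref{lemma:E} then yield the $O_P(N_1+N_2/N_1)$ error. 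You instead argue variationally from first principles: Cauchy interlacing on the compression $Q^\top A Q$ gives the upper bound for free, and the max--min characterization with the hand-picked subspace $\mathrm{span}(M_1)\oplus QW$ gives the lower bound after absorbing the bilinear cross terms into $(\sigma_R^2-\mu_r)\|a\|^2$ via Young. Both arguments hinge on exactly the same structural fact---$Q^\top M_1=0$ forces every cross term to carry a factor of $\|a\|$, which the $\sim N_1N_2\|a\|^2$ signal dominates---and both consume the same three operator-norm bounds. The paper's version is shorter because it outsources the variational work to Onatski's lemma; yours is self-contained and makes the mechanism (why the naive Weyl bound loses a factor of $N_1$) more visible.
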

\begin{proof}
	Note that $\lambda_r(\mathbf{Y}_{(j)}\mathbf{Y}_{(j)}^\top) = \lambda_r(M\mathbf{Y}_{(j)}\mathbf{Y}_{(j)}^\top M^\top),r\geq 1$ for every orthogonal matrix $M$. Put
	\begin{equation*}
		M_j = \begin{bmatrix}
			\Lambda_j & \Lambda_{j\perp}
		\end{bmatrix} \qquad \text{and}\qquad \Sigma_j = \begin{bmatrix}
			D_j & 0 \\
			0 & 0
		\end{bmatrix}.
	\end{equation*}
	Then $M_j\Sigma_j M_j^\top$ is the spectral decomposition of $\Lambda_jD_j\Lambda_j^\top$ and we have
	\begin{equation*}
		\begin{aligned}
			\lambda_{R_j+r}\left(\frac{\mathbf{Y}_{(j)}\mathbf{Y}_{(j)}^\top}{\prod_{j=1}^dN_j}\right) & = \lambda_{R_j+r}\left(\frac{M^\top\mathbf{Y}_{(j)}\mathbf{Y}_{(j)}^\top M}{\prod_{j=1}^dN_j}\right) \\
			& = \lambda_{R_j+r}\left( \frac{1}{\prod_{j=1}^dN_j}\begin{bmatrix}
				\Lambda_j^\top \\
				\Lambda_{j\perp}^\top 
			\end{bmatrix}(\Lambda_jD_jV_j^\top + \mathbf{U}_{(j)})(\Lambda_jD_jV_j^\top + \mathbf{U}_{(j)})^\top \begin{bmatrix}
				\Lambda_j & \Lambda_{j\perp}
			\end{bmatrix}
			\right) \\
			& =: \lambda_{R_j+r}\left(A +\varkappa A^{(1)},
			\right) \\
		\end{aligned}
	\end{equation*}
	with $\varkappa = 1/N_j$, $A =  \begin{bmatrix}
		\frac{D_j}{\prod_{j=1}^dN_j} & 0 \\
		0 & 0
	\end{bmatrix}$ and
	\begin{equation*}
		A^{(1)} = \frac{1}{\prod_{l\ne j}N_l}\begin{bmatrix}
			\Lambda_j^\top\mathbf{U}_{(j)}V_jD_j + D_jV_j^\top\mathbf{U}_{(j)}^\top \Lambda_j + \Lambda_j^\top\mathbf{U}_{(j)}\mathbf{U}_{(j)}^\top \Lambda_j & D_jV_j^\top\mathbf{U}_{(j)}^\top \Lambda_{j\perp} + \Lambda_j^\top\mathbf{U}_{(j)}\mathbf{U}_{(j)}^\top \Lambda_{j\perp} \\
			\Lambda_{j\perp}^\top\mathbf{U}_{(j)}V_jD_j + \Lambda_{j\perp}^\top \mathbf{U}_{(j)}\mathbf{U}_{(j)}^\top \Lambda_j^\top & \Lambda_{j\perp}^\top\mathbf{U}_{(j)}\mathbf{U}_{(j)}^\top \Lambda_{j\perp}
		\end{bmatrix}.
	\end{equation*}
	By \cite{onatski_ecma_2009}, Lemma 6,
	\begin{equation*}
		\left|\lambda_{R_j+r}\left(A + \varkappa A^{(1)}\right) - \varkappa\lambda_r\left(\frac{\Lambda_{j\perp}^\top\mathbf{U}_{(j)}\mathbf{U}_{(j)}^\top \Lambda_{j\perp}}{\prod_{l\ne j}N_l}\right)\right|  \leq \frac{\varkappa^2\|A^{(1)}\|^2}{0.5\sigma^2_{j,R_j}/\prod_{j=1}^dN_j - \varkappa\|A^{(1)}\|},
	\end{equation*}
	provided that $\|A^{(1)}\|/N_j<0.5\sigma^2_{j,R_j}/\prod_{j=1}^dN_j$. To see that this condition holds, note that since $\sigma^2_{j,r}\sim \prod_{j=1}^dN_j$, we have $\sigma^2_{j,r}/\prod_{j=1}^dN_j\to d_{j,r}>0,\forall r\leq R_j$. Moreover,
	\begin{equation*}
		\|A^{(1)}\|/N_j \leq \frac{4}{\prod_{j=1}^dN_j}\left[\left\|\mathbf{U}_{(j)}V_jD_j \right\|_{\rm op} + \left\|\mathbf{U}_{(j)}\mathbf{U}_{(j)}^\top  \right\|_{\rm op}\right].
	\end{equation*}
	It follows from the proof of Theorem~\ref{thm:clt} under Assumption~\ref{as:data},
	\begin{equation*}
		\|\mathbf{U}_{(j)}V_jD_j\|_{\rm op} = O_P\left(\sqrt{N_j}\sum_{r=1}^{R_j}\sigma_{j,r}\right)\qquad \text{and}\qquad \|\mathbf{U}_{(j)}\mathbf{U}_{(j)}^\top\|_{\rm op} = O_P\left(N_j + \prod_{l\ne j}N_l\right).
	\end{equation*}
	Since $\sigma^2_{j,r}\sim\prod_{j=1}^dN_j$, this shows that
	\begin{equation*}
		\|A^{(1)}\|/N_j = O_P\left(\frac{1}{\prod_{l\ne j}\sqrt{N_l}} + \frac{1}{N_j}\right) = o_P(1).
	\end{equation*}
	Therefore,
	\begin{equation*}
		\lambda_{R+r}\left(\frac{\mathbf{Y}_{(j)}\mathbf{Y}_{(j)}^\top}{\prod_{j=1}^dN_j}\right) = \lambda_r\left(\frac{\Lambda_{j\perp}^\top\mathbf{U}_{(j)}\mathbf{U}_{(j)}^\top \Lambda_{j\perp}}{\prod_{j=1}^dN_j}\right) + O_P\left(\frac{1}{\prod_{l\ne j}N_l} + \frac{1}{N_j^2}\right).
	\end{equation*}
\end{proof}

\end{document}